\title{Short-Circuit Logic}
\author{
	Jan A.\ Bergstra$^1$\thanks{J.A. Bergstra acknowledges
	support from NWO (project Thread Algebra for Strategic
	Interleaving).} \qquad
	Alban Ponse$^1$ \qquad 
	Daan J.C. Staudt$^2$\\[2mm]
  {\small 
  $^1$Section Theory of Computer Science, Informatics Institute}\\
  {\small $^2$Institute for Logic, Language and Computation,}\\[2mm]
 {\small  Faculty of Science, University of Amsterdam}\\
	{\small Url: \url{www.science.uva.nl/~{janb,alban}}\qquad\url{www.daanstaudt.nl}}
}
\date{}
\newcommand{\see}{\ensuremath{\textit{se}}}
\newcommand{\sef}{\ensuremath{\textit{sef}}}
\newcommand{\ST}{\PS^\SCL}
\newcommand{\SNF}{\textit{SNF}\,}
\newcommand{\SE}{se}
\newcommand{\nfs}{\ensuremath f}
\newcommand{\EqFSCL}{\SCLe}
\newcommand{\FSCLT}{\FreeSCL}
\newcommand{\T}{\NT}
\newcommand{\Tone}{\T_{\Box}}
\newcommand{\sub}[2]{\ensuremath{[#1 \mapsto #2]}}
\newcommand{\tlef}{\unlhd}
\newcommand{\trig}{\unrhd}
\newcommand{\cd}{cd}
\newcommand{\dd}{dd}
\newcommand{\tsd}{tsd}
\newcommand{\invs}{\ensuremath g}
\newcommand{\FSCL}{\FSCLT}
\newcommand{\MSCLe}{\axname{EqMSCL}}
\newcommand{\MSCL}{\axname{MSCL}}
\newcommand{\SCLe}{\axname{EqFSCL}}
\newcommand{\SCL}{\axname{SCL}}
\newcommand{\RPSCL}{\axname{RPSCL}}
\newcommand{\CSCL}{\axname{CSCL}}
\newcommand{\SSCL}{\axname{SSCL}}
\newcommand{\SSCLe}{\axname{EqSSCL}}
\newcommand{\FreeSCL}{\axname{FSCL}}
\newcommand{\export}{~\Box~}
\newcommand{\leftand}{~
     \mathbin{\setlength{\unitlength}{1ex}
     \begin{picture}(1.4,1.8)(-.3,0)
     \put(-.6,0){$\wedge$}
     \put(-.54,-0.2){\textcolor{white}{\circle*{0.6}}}
     \put(-.54,-0.2){\circle{0.6}}
     \end{picture}
     }}
\newcommand{\fulland}{~
     \mathbin{\setlength{\unitlength}{1ex}
     \begin{picture}(1.4,1.8)(-.3,0)
     \put(-.6,0){$\wedge$}
     \put(-.54,-0.2){\circle*{0.6}}
     \end{picture}
     }}
\newcommand{\leftor}{~
     \mathbin{\setlength{\unitlength}{1ex}
     \begin{picture}(1.4,1.8)(-.3,0)
     \put(-.6,0){$\vee$}
     \put(-.54,1.54){\textcolor{white}{\circle*{0.6}}}
     \put(-.54,1.54){\circle{0.6}}
     \end{picture}
     }}
\newcommand{\fullor}{~
     \mathbin{\setlength{\unitlength}{1ex}
     \begin{picture}(1.4,1.8)(-.3,0)
     \put(-.6,0){$\vee$}
     \put(-.54,1.54){\circle*{0.6}}
     \end{picture}
     }}
\newcommand{\sleftand}{\leftand}
\newcommand{\sleftor}{\leftor}
\newcommand{\PS}{\ensuremath{\textit{PS}\,}}
\newcommand{\NT}{\ensuremath{\mathcal{T}}}
\newcommand{\fr}{\ensuremath{{fr}}}
\newcommand{\rp}{\ensuremath{{rp}}}
\newcommand{\con}{\ensuremath{{cr}}}
\newcommand{\mem}{\ensuremath{{mem}}}
\newcommand{\stat}{\ensuremath{{stat}}}
\newcommand{\tr}{\ensuremath{{\sf T}}}
\newcommand{\fa}{\ensuremath{{\sf F}}}
\newcommand{\true}{\ensuremath{\textit{true}}}
\newcommand{\false}{\ensuremath{\textit{false}}}
\newcommand{\lef}{\ensuremath{\triangleleft}}
\newcommand{\rig}{\ensuremath{\triangleright}}
\newtheorem{theorem}{Theorem}
\newtheorem{lemma}{Lemma}
\newtheorem{proposition}{Proposition}
\newtheorem{corollary}{Corollary}
\newtheorem{definition}{Definition}  
\theoremstyle{definition}
\newtheorem{example}{Example}
\newtheorem*{citaat}{Quote}
\newcommand{\CP}{\axname{CP}}
\newcommand{\axname}[1]{\textup{\ensuremath{\textrm{#1}}}}
\title{Short-Circuit Logic}
\begin{document}

\maketitle
\begin{abstract}
Short-circuit evaluation denotes the semantics of 
propositional connectives in which the second
argument is evaluated only if the first argument does not suffice 
to determine the value of the expression.
In programming, short-circuit evaluation is widely used, with
sequential conjunction and disjunction as primitive connectives.

A short-circuit logic is a variant of propositional logic
(PL) that can be defined with help of Hoare's 
conditional, a ternary connective
comparable to if-then-else, and that implies all identities
that follow from four basic axioms for the conditional and
can be expressed in PL
(e.g., axioms for
associativity of conjunction and 
double negation shift). 
In the absence of side effects, short-circuit evaluation
characterizes PL. However,
short-circuit evaluation
admits the possibility to model side effects
and gives rise to various different short-circuit logics.
The first extreme case 
is FSCL (free short-circuit logic), which
characterizes the setting in which evaluation of 
each atom (propositional variable) can yield a side effect.
The other extreme case is MSCL (memorizing short-circuit logic), the most
identifying variant we distinguish below PL. 
In MSCL, only a very restricted type of side 
effects can be modelled, 
while sequential conjunction is non-commutative. 
We provide axiomatizations for FSCL and MSCL.

Extending MSCL with one simple axiom yields
SSCL (static short-circuit logic, or sequential
PL), for which we also provide a completeness result.
We
briefly discuss two variants in between FSCL and MSCL,
among which a logic that admits contraction of 
atoms and of their negations. 
\\[5mm]
\emph{Keywords:}
Non-commutative conjunction,
conditional composition,
reactive valuation,
sequential connective,
short-circuit evaluation,
side effect
\end{abstract}

\newpage{\small\tableofcontents}

\section{Introduction}
\label{sec:Intro}
In the setting of imperative programming,
\emph{short-circuit} evaluation of the so-called ``Boolean operators''
is often explained by means of an example.
A typical example is the expression 
\begin{equation}
\label{eq:vb}
\texttt{(b \~{}= 0) \&\& (a/b > 18.5)}
\end{equation}
where \texttt{\&\&} is the short-circuit AND-operator, 
the programming variables \texttt{a} and 
\texttt{b} are assigned to decimal number values, and 
\texttt{(b \~{}= 0)} expresses 
that the value of \texttt{b} is not zero. 
In a state in which
\texttt{(b \~{}= 0)} evaluates to \false, the 
short-circuit evaluation of Example~\eqref{eq:vb} 
yields \false\ and the
expression \texttt{(a/b > 18.5)} is \emph{not} evaluated.
In a state where \texttt{(b \~{}= 0)} evaluates to \true, 
the short-circuit evaluation result of~\eqref{eq:vb}
is the evaluation result of the expression \texttt{(a/b > 18.5)}.
Some comments are in order here:
\begin{enumerate}
\item We view the expression
\texttt{(b \~{}= 0)}
as a \emph{propositional
variable}, or \emph{atom} for short: depending on the execution environment
it evaluates either to \true\ or to \false.
This suggests a strict correspondence 
with negation $\neg$ as used in propositional logic:
$\texttt{(b \~{}= 0)}$ and $\neg\texttt{(b = 0)}$ are equivalent.
So, the expression \texttt{(b \~{}= 0)} can be
viewed as an atom or as the negation of a more 
simple atom. 
We can also view {both}  \texttt{(b \~{}= 0)} and
\texttt{(b = 0)} as atoms and adopt
the \emph{identity} $\texttt{(b \~{}= 0)}=\neg\texttt{(b = 0)}$. 
In this paper we will adopt an equational setting and 
thus we will use notions such as ``identity",  ``equation'' and so on.
\item If the value of \texttt{b} is not equal to zero, 
the expression 
\texttt{(a/b > 18.5)} is also viewed as an atom
(and similar remarks about negation can be made).
The spirit of Example~\eqref{eq:vb} is that short-circuit evaluation
ensures that this expression is only evaluated 
\emph{if} this particular condition on \texttt{b} holds.
So, this example emphasizes that 
the connective  \texttt{\&\&}, i.e., the short-circuit 
AND-operator is  \emph{not} commutative.

\end{enumerate} 

In general, short-circuit evaluation denotes the semantics of 
propositional connectives in which the second
argument is evaluated only if the first argument does not suffice 
to determine the value of the expression.
Short-circuit evaluation is prescribed by the use of
particular connectives such as \texttt{\&\&}
and refers to the setting of propositional 
logic: expressions either yield \true\ or \false.
Hence, a natural question is this one:
\begin{quote}
What are the logical laws that axiomatize short-circuit evaluation?
\end{quote}
In order to answer this question, various settings should be distinguished. 
In the case of propositional logic, short-circuit evaluation is 
nothing more than a specific evaluation strategy (evaluation stops as soon as the 
value of the expression is determined), and conjunction and disjunction 
are commutative,
while in cases such as Example~\eqref{eq:vb}, the effect
of short-circuit evaluation is more distinctive: full evaluation
is not possible without assumptions about division by
zero\footnote{See, for example, our work in \cite{BBP12}.}, 
or about expressions being undefined.
In other words, propositional logic identifies more expressions than 
the logic underlying Example~\eqref{eq:vb}, and different degrees of
identification yield different
\emph{short-circuit logics}.
The expressions we are interested in 
are built up from atoms, constants \tr\ for
\true\ and \fa\ for \false,
and connectives prescribing short-circuit evaluation.

We listed
\emph{Side effect} as a keyword for this paper. 
This is a complex and context-dependent notion, and  not  
the point of Example~\eqref{eq:vb}. 
From now on we will only consider atoms that evaluate to
either \true\ or \false, depending on a state (or execution environment),
and we will argue that under
this restriction side effects are relevant. For example, in the programming language
\textbf{Perl}~\cite{perl}, assignments on a
scalar variable \verb+$x+ take the form \verb+($x=...)+ and can be considered
as atoms in a conditional statement. This is  
illustrated in Figure~\ref{fig:perl} where we depict a run of a small example
program in which also short-circuit disjunction occurs with the familiar 
notation \verb+||+, as well as a test \verb+($x==2)+ that is also used as an atom.
This example clearly demonstrates that the connective \texttt{\&\&} 
is not commutative due to the side effect of an assignment
(in Section~\ref{subsec:SCLe} we return to this example).
\begin{figure}
\hrule
\small
\begin{verbatim}

> perl Not.pl 

 $x=0  (assignment)
 "(($x=$x+1) && not($x=$x+1)) || $x==2" is true 

 $x=0  (assignment)
 "(not($x=$x+1) && ($x=$x+1)) || $x==2" is false 
>
\end{verbatim}
\hrule
\caption{A run of a Perl program \texttt{Not.pl}
exemplifying $\texttt{A
\&\& not(A)}\neq\texttt{not(A) \&\& A}$}
\label{fig:perl}
\end{figure}

As suggested above, short-circuit evaluation of binary connectives
can be very well characterized with help of ``if-then-else'' expressions:
\begin{align*}
x\texttt{ \&\& } y \quad
&\approx\quad \texttt{if }x\texttt{ then }y\texttt{ else }\fa\\\
x\texttt{ || } y \quad
&\approx\quad \texttt{if }x\texttt{ then }\tr\texttt{ else }y.
\end{align*}
In~\cite{BP10} we introduced \emph{Proposition Algebra} as a general setting for the 
study of such if-then-else-expressions. Following Hoare~\cite{Hoa85}, we use the
ternary connective
\[x\lef y \rig z\quad\approx \quad\texttt{if }y\texttt{ then }x\texttt{ else }z\]
and define several so-called \emph{valuation congruences}, and equational 
axiomatizations of each of these congruences. For example, the equation
$x\lef x\rig\fa=x$ (or equivalently,
$x\texttt{ \&\& } x=x$) is derivable
from some of these axiomatizations, and not from others.
In~\cite{Hoa85}, Hoare provides an equational axiomatization of 
propositional logic
using the above characterizations of the binary connectives.
Our paper~\cite{BP10}
provides the set-up to define short-circuit logics and is
therefore briefly discussed in Section~\ref{sec:Hoare}.

The further contents of the paper can be summarized as follows: 
In Section~\ref{sec:SCL} we provide a generic definition
of {short-circuit logic} (SCL) and we define 
\emph{free} SCL (\FreeSCL) as
the least identifying 
short-circuit logic we consider. As an example, 
the equation $x\texttt{ \&\& } x=x$ is not valid in \FreeSCL.
A main result is our equational
axiomatization of \FreeSCL\ and its detailed proof that is based on normal forms.
In Section~\ref{sec:Other} we define various other
short-circuit logics, among which \emph{memorizing}
SCL (\MSCL), the 
most identifying short-circuit logic below propositional logic that we
distinguish and in which $x\texttt{ \&\& } x=x$ is valid.
A second main result is the axiomatization of \MSCL. 
In the last two parts of Section~\ref{sec:Other} we consider some other
short-circuit logics, among which a short-circuited
version of propositional logic, and we propose a definition
of side effects and discuss mixed settings
in which different short-circuit logics can be used.
Section~\ref{sec:Conc} contains some conclusions, 
remarks on related work and
proposals for future work.
In Section~\ref{sec:Digr} we return to proposition algebra
and present some new results.


\section{Short-circuit evaluation and proposition algebra}
\label{sec:Hoare}
In this section we briefly discuss 
\emph{proposition algebra}~\cite{BP10}, which
has short-circuit evaluation as its natural semantics
and provides a set-up to define various different short-circuit logics.

\subsection{Hoare's conditional connective and proposition algebra}
\label{subsec:Hoare}
In 1985, Hoare introduced in the paper~\cite{Hoa85}
the ternary connective
\[x\lef y\rig z,\]
in order to characterize the `propositional calculus' 
and called this connective the 
\emph{conditional}.\footnote{Not to be 
 confused with Hoare's \emph{conditional}
 introduced in his 1985~book on CSP~\cite{Hoa85a} 
 and in his well-known 1987 paper \emph{Laws of Programming}
 \cite{HHH87} 
 for expressions $P\lef b\rig Q$ with $P$ and $Q$ programs and 
 $b$ a Boolean expression;  these sources do
 not refer to~\cite{Hoa85}
 that appeared in 1985.}
A more common expression for the conditional $x\lef y\rig z$
is
\[
\texttt{if }y\texttt{ then }x\texttt{ else }z.
\]
However, in order to reason 
systematically with conditionals, a notation
such as $x\lef y\rig z$ seems indispensable.
In $x\lef y\rig z$, first $y$ is evaluated, and depending 
on that evaluation result, then
either $x$ or $z$ is evaluated (and the other is not), 
which is a typical example of short-circuit evaluation.
In~\cite{Hoa85}, Hoare proves that propositional logic
can be equationally
characterized over the signature
$\{\tr,\fa,\_\lef\_\rig\_\}$ with constants \tr\ and \fa\
for the truth values \true\ and \false, respectively,
and he provides a set of elegant axioms to this end,
including those in Table~\ref{CP}.\footnote{In
 fact Hoare used eleven axioms; in Section~\ref{subsec:pa}
 we provide a simple equational basis for propositional
 logic based on this signature.}

\begin{table}
\centering
\hrule
\begin{align*}
\label{cp1}
\tag{CP1} x \lef \tr \rig y &= x\\
\label{cp2}\tag{CP2}
x \lef \fa \rig y &= y\\
\label{cp3}\tag{CP3}
\tr \lef x \rig \fa  &= x\\
\label{cp4}\tag{CP4}
\qquad
    x \lef (y \lef z \rig u)\rig v &= 
	(x \lef y \rig v) \lef z \rig (x \lef u \rig v)
\end{align*}
\hrule
\caption{The set \CP\ of axioms for proposition algebra}
\label{CP}
\end{table}

Given a countable set $A$ of of atoms, the set \PS\ of closed terms
over $\Sigma_{\CP}$, further called
\emph{propositional statements}, can be defined inductively: 
\[t::= \tr\mid\fa\mid a\mid t\lef t\rig t\]
where $a$ ranges over $A$. 

As stated above, a natural view on propositional statements in \PS\
involves short-circuit evaluation, similar to how we 
consider an
``$\texttt{if }y\texttt{ then }x\texttt{ else }z$'' expression.
We provide a simple form of a ``short-circuit
semantics'' taken from~\cite{Daan} that is sufficient for the \CP-case.

\begin{definition}[Evaluation Trees]
\label{def:trees}
The set \NT\ of evaluation trees over $A$ with leaves in 
$\{\tr, \fa\}$ is defined inductively:
\[\tr\in\NT,\quad\fa\in\NT, \quad (X\unlhd a\unrhd Y)\in\NT 
\text{ for any }X,Y \in \NT \text{  and } a\in A.\]
The operator $-\unlhd a\unrhd-$ is called 
\textbf{post-conditional composition over $a$}.
In the evaluation tree $X \unlhd a \unrhd Y$, 
the root is represented by $a$,
the left branch by $X$ and the right branch by $Y$. 
The depth of an evaluation tree $X$ is defined recursively by 
$d(\tr) = d(\fa) = 0$ and 
$d(Y \unlhd a \unrhd Z) = 1 + \max(d(Y ), d(Z))$. 
\end{definition}
We refer to trees in \NT\ as evaluation trees, or trees for short. 
Evaluation trees will play a crucial role in the proof of one of the
main results of this paper.

In order to define a short-circuit semantics of the conditional 
connective, we first define the \emph{leaf replacement} operator, 
`replacement' for short, on trees in \NT\ as follows. 
Let $X,X',X'',Y,Z \in\NT$ and $a \in A$. 
The replacement of \tr\ with $Y$ and $\fa$ with $Z$ in $X$, denoted
\[X[\tr\mapsto Y, \fa \mapsto Z]\]
is defined recursively by
\begin{align*}
\tr[\tr\mapsto Y,\fa\mapsto Z]&= Y,\\
\fa[\tr\mapsto Y,\fa\mapsto Z]&= Z,\\
(X'\unlhd a\unrhd X'')[\tr\mapsto Y,\fa\mapsto Z]
&=X'[\tr\mapsto Y,\fa\mapsto Z]\unlhd a\unrhd X''[\tr\mapsto Y,\fa\mapsto Z].
\end{align*}
We note that the order in which the replacements of the leaves of 
$X$ is listed inside the brackets
is irrelevant and adopt the convention of not listing any 
identities inside the brackets, e.g., 
$X[\fa\mapsto Z]=X[\tr\mapsto \tr,\fa\mapsto Z]$.
Repeated replacements satisfy the following identity:
\begin{align*}
&X[\tr\mapsto Y_1,\fa\mapsto Z_1][\tr\mapsto Y_2,\fa\mapsto Z_2]\\
&=
X[\tr\mapsto Y_1[\tr\mapsto Y_2,\fa\mapsto Z_2],
\fa\mapsto Z_1[\tr\mapsto Y_2,\fa\mapsto Z_2]].
\end{align*}
We now have the terminology and notation to formally define the 
interpretation of propositional statements in \PS\ (i.e., closed
$\Sigma_\CP$-terms) as evaluation trees
by a function $se$ (abbreviating  short-circuit evaluation).

\begin{definition}
\label{def:se}
The unary short-circuit evaluation function $se : \PS \to\NT$ 
is defined as
follows, where $a\in A$:
\begin{align*}
se(\tr) &= \tr,\\
se(\fa) &= \fa,\\
se(a)&=\tr\unlhd a\unrhd \fa,\\
se(P \lef Q\rig R)&= se(Q)[\tr\mapsto se(P), \fa\mapsto se(R)].
\end{align*}
\end{definition}
As we can see from the definition on atoms, the evaluation 
continues in the left branch if an atom yields \true\ and in 
the right branch if it yields \false, and we use the constants 
\tr\ and \fa\ to denote these truth values.
For an example see the evaluation trees in Fig.~\ref{fig:1}.
An evaluation of a propositional
statement $P$ can be characterized by a complete path in $se(P)$
(from root to leaf), including the 
evaluations of its successive atoms. 

\begin{figure}[t]
\hrule
\[\begin{array}{ccc}
\\[-3mm]
\begin{tikzpicture}[%
      level distance=7.5mm,
      level 1/.style={sibling distance=15mm},
      level 2/.style={sibling distance=7.5mm},
      baseline=(current bounding box.center)]
      \node (a) {$b$}
        child {node (b1) {$a$}
          child {node (d1) {$\tr$}} 
          child {node (d2) {$\fa$}}
         }
        child {node (b2) {$c$}
          child {node (d1) {$\tr$}} 
          child {node (d2) {$\fa$}}
        };
      \end{tikzpicture}
&&
\begin{tikzpicture}[%
      level distance=7.5mm,
      level 1/.style={sibling distance=15mm},
      level 2/.style={sibling distance=7.5mm},
      baseline=(current bounding box.center)]
      \node (a) {$b$}
        child {node (b1) {$\fa$}
        }
        child {node (b2) {$a$}
          child {node (d1) {$\tr$}} 
          child {node (d2) {$\fa$}}
        };
      \end{tikzpicture}\\\\
\text{\footnotesize The evalution tree $se(a\lef b\rig c)$}
&\qquad\quad&
\text{\footnotesize The evalution tree $se(a\lef(\fa\lef b\rig\tr) \rig \fa)$}
\end{array}
\]
\hrule
\caption{Evaluation trees, where
branches descending to the left indicate that the node
is evaluated \true\ and to the right that it yielded 
\false}
\label{fig:1}
\end{figure}


\begin{definition}[Evaluation]
\label{def:eval}
Let $P\in\PS$. An \textbf{evaluation} of $P$ is a pair 
\[(\sigma, B)\]
where $\sigma\in(A\{\tr,\fa\})^*$ and $B\in\{\tr,\fa\}$,
such that if $se(P)\in\{\tr,\fa\}$, then $\sigma=\epsilon$ (the empty string)
and $B=se(P)$, and 
otherwise, 
\[\sigma=a_1B_1a_2B_2...a_nB_n,\]
with $a_1a_2...a_nB$ is a complete path in $se(P)$ 
and
\begin{itemize}
\item
for $i<n$, if $a_{i+1}$ is a left child of $a_i$ then $B_i=\tr$, and otherwise
$B_i=\fa$,
\item
if $B$ is a left child of $a_n$ then $B_n=\tr$, and otherwise $B_i=\fa$.
\end{itemize}
We refer to $\sigma$ as the \textbf{evaluation path} and
to $B$ as the \textbf{evaluation result}. 
\end{definition}

As an example, consider $\fa\lef a\rig (\fa\lef a\rig\tr)$ and its
$se$-image
\[
\begin{tikzpicture}[%
      level distance=7.5mm,
      level 1/.style={sibling distance=15mm},
      level 2/.style={sibling distance=7.5mm},
      baseline=(current bounding box.center)]
      \node (a) {$a$}
        child {node (b1) {$\fa$}
        }
        child {node (b2) {$a$}
          child {node (d1) {$\fa$}} 
          child {node (d2) {$\tr$}}
        };
      \end{tikzpicture}
\]      
In this evaluation tree, the evaluation $(a\fa a\tr,\fa)$ expresses 
that the first occurrence
of $a$ is evaluated to \fa, the second occurence of $a$ is then 
evaluated to \tr, and the final evaluation value is \fa. 
In this way, each evaluation tree in turn gives rise to a \emph{unique}
propositional statement:

\begin{definition}
\label{def:basic}
\textbf{Basic terms} are defined by the following grammar
($a\in A$):
\[t::= \tr\mid\fa\mid t\lef a \rig t.\]
\end{definition}

The basic term associated
with the last example  is
$\fa\lef a\rig (\fa\lef a\rig\tr)$, and its $se$-image
is $\fa\unlhd a\unrhd (\fa\unlhd a\unrhd\tr)$. It is easy to
see that for each
basic term, its $se$-image has exactly the same 
syntactic structure. 
For $P,Q\in\PS$, we write
\[P=_\fr Q\quad\text{iff}\quad se(P)=se(Q),\]
and the relation $=_\fr$ is called \emph{free valuation congruence}.
In the sequel we shall
use the notion of a \emph{valuation congruence} for a congruence over \PS\
that can be associated with various forms of short-circuit evaluation. 
So, if $P=_\fr Q$, then each evaluation of $P$ yields the same result on $Q$,
and $=_\fr$ is a congruence relation. The notion `valuation congruence'
stems from~\cite{BP10}.

\begin{theorem}
\label{thm:1}
For all $P,Q\in\PS$, 
$\CP\vdash P=Q\iff P=_\fr Q$.
\end{theorem}
\begin{proof}
It is easy to show that $=_\fr$ is a congruence relation and
that all \CP-axioms are sound. For example, 
the soundness of axiom~\eqref{cp3} follows from
\[
se(\tr\lef P\rig \fa)=se(P)[\tr\mapsto \tr,\fa\mapsto \fa]=se(P),
\]
and the soundness of axiom~\eqref{cp4} from
\begin{align*}
se(P\lef&(Q\lef R\rig S)\rig U)\\
&=se(Q\lef R\rig S)[\tr\mapsto se(P),\fa\mapsto se(U)]\\
&=se(R)[\tr\mapsto se(Q),\fa\mapsto se(S)][\tr\mapsto se(P),\fa\mapsto se(U)]\\
&=se(R)[\tr\mapsto se(Q)[\tr\mapsto se(P),\fa\mapsto se(U)],\\
&\phantom{ =se(R)[}~\fa\mapsto se(S)[\tr\mapsto se(P),\fa\mapsto se(U)]]\\
&=se(R)[\tr\mapsto se(P\lef Q\rig U),\fa\mapsto se(P\lef S\rig U)]\\
&=se((P\lef Q\rig U)\lef R\rig(P\lef S\rig U)).
\end{align*}

As for $\Longleftarrow$, it was proved in~\cite{BP10} that free valuation congruence 
$=_\fr$ as defined in that paper coincides with equality of basic forms, 
and thus with $=_\fr$
as defined above. Also, in~\cite{BP10} it was proved that \CP\
axiomatizes $=_\fr$.
\end{proof}

We note that it was shown in~\cite{Chris} that the axioms of \CP\ are
independent, and also that they are 
$\omega$-complete if the set of 
atoms involved contains at least two elements.
In~\cite{BP10} we define varieties of so-called
\emph{valuation algebras} in order to axiomatize various valuation 
congruences for proposition algebra. 
All varieties discussed 
in~\cite{BP10} 
satisfy the set \CP\ of axioms (see~Table~\ref{CP}) and,
as stated above, the variety that identifies least 
is axiomatized by
exactly these four axioms. We return to valuation algebras
in Section~\ref{sec:Conc}.

\subsection{Definable connectives and their basic properties}
\label{subsec:defop}
With the conditional as a primitive connective,
negation can be defined by
\begin{equation}
\label{eq:neg}
\neg x=\fa\lef x\rig\tr,
\end{equation}
and the following consequences are easily derived 
from the extension of \CP\ with negation:
\begin{align*}
	\fa &= \neg\tr,\\
	\neg \neg x &= x,\\
	\neg (x \lef y \rig z) &= \neg x\lef y \rig \neg z,\\
	x\lef\neg y\rig z&=z\lef y\rig x.
\end{align*}
As an example, we prove the latter identity: 
\begin{align*}
x\lef\neg y\rig z
&=x\lef(\fa\lef y\rig \tr)\rig z
&&\text{by definition}\\
&=(x\lef\fa\rig z)\lef y\rig (x\lef\tr\rig z)
&&\text{by axiom~\eqref{cp4}}\\
&=z\lef y\rig x.
&&\text{by axioms~\eqref{cp2} and \eqref{cp1}}
\end{align*}

Instead of using the programming-oriented notation \texttt{\&\&}
for short-circuit conjunction, we will further use
the notation
\[x\leftand y\]
taken from~\cite{BBR95}, where the small circle at the left 
indicates that the left-argument is evaluated
first, and we shall use the name 
\emph{left-sequential conjunction} for this connective.
Left-sequential conjunction can be defined in \CP\ by
\begin{equation}
\label{eq:leftand}
x\leftand y=y\lef x\rig\fa.
\end{equation}
Left-sequential disjunction $\leftor$ (notation also taken from
\cite{BBR95})
can be defined by a left-sequential form of duality:
\begin{equation}
\label{eq:leftor}
x\leftor y = \neg(\neg x\leftand \neg y).
\end{equation}
A more convenient equation for $\leftor$ is
\begin{equation}
\label{eq:alt}
x\leftor y = \tr\lef x\rig y,
\end{equation}
the correctness of which can be shown as follows:
\begin{align*}
\neg(\neg &x\leftand \neg y)\\
&=\fa\lef((\fa\lef y\rig\tr)\lef(\fa\lef x\rig\tr)\rig\fa)\rig\tr\\
&=\fa\lef(\fa\lef x\rig(\fa\lef y\rig\tr))\rig\tr
&&\text{by \eqref{cp4}, \eqref{cp2} and \eqref{cp1}}\\
&=\tr\lef x\rig(\fa\lef(\fa\lef y\rig\tr)\rig\tr)
&&\text{by \eqref{cp4} and \eqref{cp2}}\\
&=\tr\lef x\rig(\tr\lef y\rig\fa)
&&\text{by \eqref{cp4}, \eqref{cp2} and \eqref{cp1}}\\
&=\tr\lef x\rig y.
&&\text{by \eqref{cp3}}
\end{align*}
We write $\Sigma_{\CP}(\neg,\leftand)$ for the extension of
the signature $\Sigma_\CP$ with $\neg$ and $\leftand$,
and consider $\neg$ as defined by~\eqref{eq:neg},
$\leftand$ as defined by~\eqref{eq:leftand}, and
$\leftor$ as defined by~\eqref{eq:leftor}.

The connectives $\leftand$ and $\leftor$ are associative and 
the dual of each other, where duality refers to a left-sequential
version of De~Morgan's laws.\footnote{
$\neg(x\leftand y)=\neg x\leftor \neg y$ \quad and\quad
$\neg(x\leftor y)=\neg x\leftand \neg y$.}
The associativity of ${\leftand}$ can be derived in \CP\ extended
with $\eqref{eq:neg}-\eqref{eq:leftor}$ in the following way:
\begin{align*}
(x \leftand y)\leftand z
&=z\lef(y \lef x \rig \fa)\rig\fa\\
&=(z\lef y \rig\fa)\lef x \rig (z\lef \fa\rig \fa)
&&\text{by \eqref{cp4}}\\
&=(z\lef y \rig\fa)\lef x \rig \fa
&&\text{by \eqref{cp2}}\\
&=x\leftand(y \leftand z),
\end{align*}
and duality 
immediately follows from the definition of $\leftor$.

\begin{definition}
\label{def:ext}
Extend the definition of the evaluation
function $se$ (Definition~\ref{def:se})
to closed $\Sigma_{\CP}(\neg,\leftand)$-terms 
by the following extra clauses:
\begin{align*}
\SE(\neg P)&=\SE(P)[\tr\mapsto \fa,\fa\mapsto\tr],\\
\SE(P\leftand Q)&=\SE(P)[\tr\mapsto \SE(Q)],\\
\SE(P\leftor Q)&=\SE(P)[\fa\mapsto \SE(Q)].
\end{align*}
An \textbf{evaluation} (Definition~\ref{def:eval})
now also refers to closed $\Sigma_{\CP}(\neg,\leftand)$-terms.
\end{definition}

Under this extension, the function $se$ is well-defined:
$\SE(\neg P)=\SE(\fa\lef P\rig\tr)$,
$\SE(P\leftand Q)=se(Q\lef P\rig\fa)$ and
$\SE(P\leftor Q)=se(\tr\lef P\rig Q)$.

Finally, observe that from \CP\ extended with $\eqref{eq:neg}-\eqref{eq:leftor}$,
the following equations (and their duals) are derivable:
\[\tr\leftand x=x, \quad x\leftand\tr=x,\quad
\tr\leftor x=\tr,\]
in contrast to $x\leftor\tr=\tr$ and $x\leftand\fa=\fa$,
which are not derivable: the evaluation tree of
$se(a\leftor\tr)=\tr\unlhd a\unrhd\tr$ is not equal to
$se(\tr)=\tr$ (cf.~Theorem~\ref{thm:1}).

In~\cite{BP10} we show that not each \PS-term is in \CP\ derivably equal
to one in which only the connectives $\leftand,~\leftor$ and negation
occur (next to the atoms and \tr\ and \fa). For example, $a\lef b\rig c$
cannot be expressed without the conditional connective.

\subsection{Memorizing valuation congruence}
\label{subsec:CPmem}
In~\cite{BP10} we introduced various extensions
of the axiom set
\CP. Such extensions are defined by adding axioms to \CP\
and identify
more propositional statements than those identified
by \CP.
One of these extensions is 
defined by adding this axiom to \CP:
\begin{align*}
\label{CPmem}
\tag{CPmem} 
\qquad
x\lef y\rig(z\lef u\rig(v\lef y\rig w))
&= x\lef y\rig(z\lef u\rig w).
\end{align*}
The axiom~\eqref{CPmem} 
expresses that the first evaluation value of $y$
is memorized.
We use the name ``memorizing \CP'', notation $\CP_\mem$,
for the set $\CP\cup\{\eqref{CPmem}\}$ of axioms. The signature 
of $\CP_\mem$ is $\Sigma_\CP$.

For the sake of completeness, we define in Appendix~\ref{app:Mem} evaluation trees 
that characterize \emph{memorizing evaluations} and a function $mse:\PS\to\T$
that assigns such `memorizing evaluation trees'. 
We write $P=_\mem Q$ (memorizing valuation congruence) if $P$ and $Q$
yield the same memorizing evaluation tree. We here
simply take the result (that is, Theorem~\ref{thm:mem} in Appendix~\ref{app:Mem})
as a point of departure: \emph{For all $P,Q\in\PS$},
\[\CP_\mem\vdash P=Q\iff P=_\mem Q.\]
Below we explain why we need not define memorizing valuation 
congruence ($=_\mem$) in detail at this place and why 
the above theorem is a sufficient point of departure.

In one of the forthcoming completeness proofs we will use the fact that
replacing in axiom~\eqref{CPmem} the variable $y$ by 
$\fa\lef y\rig\tr$ and/or the variable $u$ by $\fa\lef u\rig\tr$
yields various equivalent versions of this axiom, 
in particular,
\begin{align}
\label{CPmem'}
\tag{CPmem$'$}
\qquad
(x\lef y\rig(z\lef u\rig v))\lef u\rig w&=
(x\lef y\rig z)\lef u\rig w,\\
\label{CPmem''}
\tag{CPmem$''$}
x\lef y\rig((z\lef y\rig u)\lef v\rig w)
&= x\lef y\rig(u\lef v\rig w).
\end{align}
If we replace in axiom~\eqref{CPmem} $u$ by $\fa$,
we find the \emph{contraction law}
\begin{equation}
\label{eq:contr}
\qquad
x\lef y\rig(v\lef y\rig w)=x\lef y\rig w,
\end{equation} 
and replacing $y$ by $\fa\lef y\rig\tr$ 
then yields the symmetric contraction law 
\begin{equation}
\label{eq:contr2}
\qquad
(w\lef y\rig v)\lef y\rig x= w\lef y\rig x.
\end{equation}

If we extend memorizing \CP\ with the defining equations
for $\neg$ and $\leftand$ (and thus 
$\Sigma_\CP$ to $\Sigma_\CP(\neg,\leftand)$)
we can easily derive 
with~\eqref{eq:contr2} the \emph{idempotence}
of $\leftand$:
\begin{align*}
x\leftand x&=x\lef x\rig\fa\\
&=(\tr\lef x\rig\fa)\lef x\rig\fa
&&\text{by~\eqref{cp3}}\\
&=\tr\lef x\rig\fa
&&\text{by~\eqref{eq:contr2}}\\
&=x.
&&\text{by~\eqref{cp3}}
\end{align*}

An important property of $\CP_\mem$ extended with $\neg$ and $\leftand$
is that the conditional connective
can be expressed with the binary connectives and negation only:
\begin{align*}
(y\leftand x)\leftor(\neg y\leftand z)
&=\tr\lef(x\lef y\rig\fa)\rig(\fa\lef y\rig z)
&&\text{by~\eqref{eq:leftand} and \eqref{eq:alt}}\\
&=(\tr\lef x\rig(\fa\lef y\rig z))\lef y~\rig
\\
&\phantom{=~}(\fa\lef y\rig z)
&&\text{by \eqref{cp4} and \eqref{cp2}}\\
&=(\tr\lef x\rig\fa)\lef y\rig z
&&\text{by \eqref{CPmem'} and \eqref{eq:contr}}\\
&=x\lef y\rig z.
&&\text{by \eqref{cp3}}
\end{align*}
As a consequence, it is not necessary to define $=_\mem$ in terms of evaluation
trees and we can instead use the axiomatization $\CP_\mem$
for our further purposes.
Another way to express the conditional is $x\lef y\rig z=
(y\leftor z)\leftand(\neg y\leftor x)$.
In Section~\ref{subsec:MSCL} we provide axioms 
over the signature
$\{\tr,\neg,\leftand\}$ that constitute an equational basis 
for $\CP_\mem$.

With $x\leftand x=x$ it easily follows that $=_\mem$ identifies more 
than $=_\fr$ (for example, $se(a)\ne se(a\leftand a)$, so  
$a\ne_\fr a\leftand a$), 
and a typical inequality is
$a\leftand b\ne_\mem b\leftand a$ for different atoms $a$ and $b$: 
an evaluation can be such that 
$a\leftand b$ yields \emph{true} and $b\leftand a$ yields \emph{false}.

\subsection{Static valuation congruence}
\label{subsec:CPstat}
The most identifying axiomatic extension of \CP\ in~\cite{BP10}
is defined by adding to \CP\ both the axiom
\begin{align*}
\label{CPstat}\tag{CPstat} 
\qquad
(x\lef y\rig z)\lef u\rig v
&=(x\lef u\rig v)\lef y\rig (z\lef u\rig v)
\end{align*}
and the contraction law~\eqref{eq:contr2}, that is,
\begin{align*}
(x\lef y\rig z)\lef y\rig u&=x\lef y\rig u.
\end{align*}
We write $\CP_\stat$ for this extension
and we use the name ``static \CP'' for the
axioms of $\CP_\stat$.
In~\cite{BP10}
we prove that $\CP_\stat$ and Hoare's
axiomatization in \cite{Hoa85} are inter-derivable.
Furthermore, $\CP_\stat$ is also an axiomatic extension
of $\CP_\mem$: in Proposition~\ref{prop:3} 
(Section~\ref{subsec:pa}) we give a concise proof
of this fact.

The axiom \eqref{CPstat}
expresses how the order of evaluation of 
$u$ and $y$ can be swapped and thereby excludes any kind of 
side effects. Some simple examples: first, if we take $u=v=\fa$ in axiom~\eqref{CPstat}
we find
\begin{equation}
\label{eq:Hoare}
\fa=\fa\lef y\rig \fa,
\end{equation}
and with this equation we can easily derive
\begin{align*}
y\lef x\rig\fa
&=(\tr\lef y\rig\fa)\lef x\rig\fa
&&\text{by \eqref{cp3}}\\
&=(\tr\lef x\rig\fa)\lef y\rig(\fa\lef x\rig\fa)
&&\text{by \eqref{CPstat}}\\
&=x\lef y\rig\fa.
&&\text{by \eqref{cp3} and \eqref{eq:Hoare}}\
\end{align*}

The valuation congruence that is axiomatized 
by $\CP_\stat$ is called
\emph{static valuation congruence}, notation $=_\stat$, and 
coincides with any standard semantics of propositional logic: $P=_\stat Q$
iff $\overline P\leftrightarrow \overline Q$ is a tautology,
where $\overline P$ refers to Hoare's definition in~\cite{Hoa85}:
\[\overline{P\lef Q\rig R}=
(\overline P\wedge\overline Q)\vee(\neg\overline Q\wedge\overline R).\]

The fact that $=_\stat$ identifies more than $=_\mem$ 
is easily seen 
if we extend $\CP_\stat$ with left-sequential conjunction. 
The commutativity of ${\leftand}$
then immediately follows from the derivation above.
Hence $a\leftand b=_\stat b\leftand a$, while 
$a\leftand b\ne_\mem b\leftand a$ as was argued in the previous 
section.

In Section~\ref{sec:Intro} we stated that the presence 
of side effects refutes the commutativity of ${\leftand}$.
This implies that in static \CP\ and propositional logic,
it is not possible to express
propositions with side effects.
Finally, we note that
short-circuit evaluation and sequential connectives yield an
interesting perspective on static valuation congruence.
In fact, short-circuit logic turned out to be a crucial
tool in finding an axiomatization of static valuation
congruence that is more simple and elegant than $\CP_\stat$
as defined in~\cite{BP10}; we return to this point
in Section~\ref{sec:Digr}.

\section{Free short-circuit logic}
\label{sec:SCL}
In this section we provide a generic definition
of a {short-circuit logic} and a definition
of \FreeSCL\ (Free \SCL), the least identifying short-circuit logic
we consider. In Section~\ref{subsec:SCLe} we present a 
set of axioms that constitutes an equational
axiomatization of \FreeSCL, for which we use an intermediate result that we prove
in  Section~\ref{subsec:cpl}. We define normal forms in Section~\ref{subsec:snf}
and analyze the structure of the associated
$se$-trees in Section~\ref{subsec:tree}.

\subsection{A generic definition of short-circuit logics}
\label{subsec:defSCL}
We define short-circuit logics using notation from \emph{Module
algebra}~\cite{BHK90}. Intuitively, a
short-circuit logic is a logic that implies\footnote{Or, 
  if one prefers the 
  semantical point of view, ``satisfies''.} all
consequences of some \CP-axiomatization that can be expressed in the signature
$\{\tr,\neg,\leftand\}$. 
For example, in \CP\ extended with negation and the axiom that
defines negation in terms of the conditional, that is,
\[\neg x=\fa\lef x\rig\tr,\] 
we can derive $\neg\neg x=x$, as was stated
in Section~\ref{subsec:defop}. 
Our definition
below uses the export-operator $\Box$ of module
algebra to express this state of affairs in a concise way:
in module algebra, $S\export X$  is the operation that 
exports the signature $S$ from module $X$ while declaring 
other signature elements auxiliary. In this case it declares 
conditional composition to be an auxiliary operator.

\begin{definition}
\label{def:SCL}
A \textbf{short-circuit logic}
is a logic that implies the consequences
of the module expression
\begin{align*}
\SCL=\{\tr,\neg,\leftand\}\export(&\CP\\
& + \langle\, \neg x=\fa\lef x\rig\tr\,\rangle\\
& + \langle\, x\leftand y=y\lef x\rig\fa\,\rangle).
\end{align*}
\end{definition}

As a first example, $~\SCL\vdash \neg\neg x=x$~ 
can be formally proved as follows:
\begin{align*}
\neg \neg x&=\fa\lef(\fa\lef x\rig\tr)\rig\tr
&&\text{by }\langle\, \neg x=\fa\lef x\rig\tr\,\rangle\\
&=(\fa\lef\fa\rig\tr)\lef x\rig(\fa\lef\tr\rig\tr)
&&\text{by~\eqref{cp4}}\\
&=\tr\lef x\rig\fa
&&\text{by~\eqref{cp2} and \eqref{cp1}}\\
&=x.
&&\text{by~\eqref{cp3}}
\end{align*}
In Section~\ref{subsec:defop} we already derived some
more  \SCL-identities, such as the
associativity of $\leftand$ and the identities
$\tr\leftand x=x$ and $x\leftand\tr=x$.

We end this section with a few words on the constant \fa\
and the connective ${\leftor}$.
Both are not in the exported signature of \SCL,
but can be easily added to \SCL\ as defined ingredients
to enhance readability:
the constant \fa\ can be added to \SCL\ as a
shorthand for $\neg\tr$ because 
\begin{align*}
(\CP+ \langle\,\neg x=\fa\lef x\rig\tr\,\rangle)\vdash
\neg\tr&=\fa\lef \tr\rig\fa\\
&=\fa,
\end{align*}
and the connective $\leftor$ can 
be added to \SCL\
by its defining equation 
\[x\leftor y=\neg(\neg x\leftand
\neg y).\]

\subsection{Free short-circuit logic: \FreeSCL}
\label{subsec:SCLe}

Following Definition~\ref{def:SCL}, 
we now define the least identifying short-circuit logic.

\begin{definition}
\label{def:FSCL}
\textbf{$\FreeSCL$ (free short-circuit logic)}
is the short-circuit logic that implies no other 
consequences than those of the module expression \SCL.
\end{definition}

\begin{table}
\hrule
\begin{align}
\label{SCL1}
\tag{SCL1}
\fa&=\neg\tr\\[0mm]
\label{SCL2}
\tag{SCL2}
x\leftor y&=\neg(\neg x\leftand\neg y)\\[0mm]
\label{SCL3}
\tag{SCL3}
\neg\neg x&=x\\[0mm]
\label{SCL4}
\tag{SCL4}
\tr\leftand x&=x\\[0mm]
\label{SCL5}
\tag{SCL5}
x\leftand\tr&=x\\[0mm]
\label{SCL6}
\tag{SCL6}
\fa\leftand x&=\fa\\[0mm]
\label{SCL7}
\tag{SCL7}
(x\leftand y)\leftand z&=x\leftand (y\leftand z)
\\[0mm]
\label{SCL8}
\tag{SCL8}
\qquad
x\leftand \fa&= \neg x \leftand\fa
\\[0mm]
\label{SCL9}
\tag{SCL9}
(x\leftand\fa)\leftor y
&=(x\leftor\tr)\leftand y\\
\label{SCL10}
\tag{SCL10}
(x\leftand y)\leftor(z\leftand\fa)&=
(x\leftor (z\leftand\fa))\leftand(y\leftor (z\leftand\fa))
\end{align}
\hrule
\caption{\SCLe, a set of axioms for \FreeSCL}
\label{tab:SCL}
\end{table}

In Table~\ref{tab:SCL} we provide axioms for \FreeSCL\
and we use the name \SCLe\ for this set of axioms.
Some comments: as explained in the previous section,
axiom~\eqref{SCL1} defines $\fa$, and
axiom~\eqref{SCL2} introduces the connective ${\leftor}$.
Both axioms~\eqref{SCL2} and
\eqref{SCL3} imply sequential versions of 
De~Morgan's laws, which allows us to use a left-sequential version of
the duality principle. Axioms~$\eqref{SCL4}-\eqref{SCL7}$
define some standard identities. Axiom~\eqref{SCL8} 
illustrates 
a typical property of a logic that models side effects: 
although it is the case that for each closed \SCL-term $t$,
evaluation of $t\leftand\fa$ yields
\emph{false}, the evaluation of $t$ might also yield
a side effect. 
However, the same side effect and evaluation result
are obtained upon evaluation of $\neg t\leftand\fa$.
Axiom~\eqref{SCL9} characterizes the case that the right-argument
of each of the connectives is ensured to be evaluated.
Finally, observe that axiom~\eqref{SCL10} 
defines a restricted form of 
right-distributivity of $\leftor$ and (by duality) of $\leftand$.

There is a more concise set of axioms as strong as \SCLe:
replacing axioms~\eqref{SCL8} and~\eqref{SCL10} by
\begin{equation}
\label{eq:short}
\tag{SCL8+10}
(x\leftand y)\leftor(z\leftand\fa)=
(x\leftor (z\leftand\fa))\leftand(y\leftor (\neg z\leftand\fa))
\end{equation}
makes both derivable
(for~\eqref{SCL8}, take $x=\tr$ and $y=\fa$). Moreover, with
the axioms $\eqref{SCL1}-\eqref{SCL6}$ we can combine any pair of
equations
in a systematic way: say $L_1=R_1$ and $L_2=R_2$
can be combined with $u$ a fresh variable into
\[(u\leftand L_1)\leftor(\neg u \leftand L_2)=
(u\leftand R_1)\leftor(\neg u\leftand R_2).\]
However, we prefer elegance to conciseness and stick to the
axioms in Table~\ref{tab:SCL}. 

The following lemma is used in our completeness proof for 
\FSCL\ and gives an impression of how cumbersome derivations
in $\SCLe$ can be. We note that the lemma's identity
was used as an \SCLe-axiom in our earlier paper~\cite{BP12a}
and is now replaced by the current axiom~\ref{SCL8}.
\begin{lemma}
\label{lem:seqs}
$\EqFSCL\vdash (x \sleftor y) \sleftand (z \sleftand \fa) = 
(\neg x \sleftor (z
  \sleftand \fa)) \sleftand (y \sleftand (z \sleftand \fa))$.
\end{lemma}
\begin{proof}
We derive:
\begin{align*}
(x&\sleftor y) \sleftand (z \sleftand \fa) \\
&= (x \sleftor y) \sleftand ((z \sleftand \fa) \sleftand \fa)
&&\textrm{by \eqref{SCL6} and \eqref{SCL7}} \\
&= (x \sleftor y) \sleftand (\neg(z \sleftand \fa) \sleftand \fa)
&&\textrm{by \eqref{SCL8}} \\
&= ((x \sleftor y) \sleftand \neg(z \sleftand \fa)) \sleftand \fa
&&\textrm{by \eqref{SCL7}} \\
&= ((\neg x \sleftand \neg y) \sleftor (z \sleftand \fa)) \sleftand \fa
&&\textrm{by \eqref{SCL8}, \eqref{SCL2} and \eqref{SCL3}} \\
&= ((\neg x \sleftor (z \sleftand \fa)) \sleftand (\neg y \sleftor (z
  \sleftand \fa))) \sleftand \fa
&&\textrm{by \eqref{SCL10}} \\
&= (\neg x \sleftor (z \sleftand \fa)) \sleftand ((\neg y \sleftor (z
  \sleftand \fa)) \sleftand \fa)
&&\textrm{by \eqref{SCL7}} \\
&= (\neg x \sleftor (z \sleftand \fa)) \sleftand ((y \sleftand \neg(z
  \sleftand \fa)) \sleftand \fa)
&&\textrm{by \eqref{SCL8}, \eqref{SCL2} and \eqref{SCL3}} \\
&= ((\neg x \sleftor (z \sleftand \fa)) \sleftand y) \sleftand (\neg(z
  \sleftand \fa) \sleftand \fa)
&&\textrm{by \eqref{SCL7}} \\
&= ((\neg x \sleftor (z \sleftand \fa)) \sleftand y) \sleftand ((z
  \sleftand \fa) \sleftand \fa)
&&\textrm{by \eqref{SCL8}} \\
&= ((\neg x \sleftor (z \sleftand \fa)) \sleftand y) \sleftand (z
  \sleftand \fa)
&&\textrm{by \eqref{SCL7} and \eqref{SCL6}} \\
&= (\neg x \sleftor (z \sleftand \fa)) \sleftand (y \sleftand (z
  \sleftand \fa)).
&&\textrm{by \eqref{SCL7}} 
\end{align*}
\end{proof}

Below we  argue that \SCLe\ is ``sound'' and ``complete''
with respect to \FSCL.
Although this use of terminology is not fully standard, we feel it is adequate:
\emph{soundness} here means that each derivable consequence of \SCLe\ is 
valid in \FreeSCL,
while \emph{completeness} states that each valid consequence in \FreeSCL\ 
can be derived from \SCLe.

\begin{proposition}[Soundness]
\label{prop:sound}
For all \SCL-terms $t$ and $t'$,
\[\SCLe\vdash t=t'\quad\Longrightarrow\quad\FreeSCL\vdash t=t'.\]
\end{proposition}

\begin{proof}
Trivial. As an example we prove the soundness of axiom~\eqref{SCL10}, 
where we use that ${\leftor}$ can be defined
in \FreeSCL\ in exactly the same way as is done in
Table~\ref{tab:SCL} and that $x\leftor y=
\neg(\neg x\leftand\neg y)
=\tr\lef x\rig y$, where the latter identity follows in
\CP\ extended with defining equations for $\neg$ and ${\leftand}$
 (cf.\ Section~\ref{subsec:Hoare}), and where we repeatedly use 
axioms \eqref{cp1}, \eqref{cp2} and \eqref{cp4}:
\begin{align*}
(x&\leftand y)\leftor (z\leftand \fa)\\
&=\tr\lef(y\lef x\rig \fa)\rig (\fa\lef z\rig\fa)
&&\text{by definition}\\
&=(\tr\lef y\rig (\fa\lef z\rig\fa))\lef x\rig (\fa\lef z\rig \fa)
&&\text{by \eqref{cp4} and \eqref{cp2}}\\
&=(\tr\lef y\rig (\fa\lef z\rig\fa))\\
&\phantom{=~}\lef x\rig~\\
&\phantom{=~} ((\tr\lef y\rig (\fa\lef z\rig\fa))\lef(\fa\lef z\rig \fa)\rig\fa)
&&\text{by \eqref{cp4} and \eqref{cp2}}\\
&=(\tr\lef y\rig (\fa\lef z\rig\fa))\lef(\tr\lef x\rig (\fa\lef z\rig\fa))\rig\fa
&&\text{by \eqref{cp4}}\\
&=(x\leftor(z\leftand \fa))\leftand (y\leftor(z\leftand \fa)).
&&\text{by definition}
\end{align*}
\end{proof}

\begin{figure}[htb]
\hrule
{\small
\begin{verbatim}

my $x = 0;
print "\n \$x=$x  (assignment)\n";

if ( (($x=$x+1) && not($x=$x+1)) || $x==2 )
  {print " \"((\$x=\$x+1) && not(\$x=\$x+1)) || \$x==2\" is true \n\n";}
else
  {print " \"((\$x=\$x+1) && not(\$x=\$x+1)) || \$x==2\" is FALSE \n\n";}

$x = 0;
print " \$x=$x  (assignment)\n";

if ( (not($x=$x+1) && ($x=$x+1)) || $x==2 ) 
  {print " \"(not(\$x=\$x+1) && (\$x=\$x+1)) || \$x==2\" is TRUE \n";}
else
  {print " \"(not(\$x=\$x+1) && (\$x=\$x+1)) || \$x==2\" is false \n";}
\end{verbatim}
\hrule
\begin{verbatim}
> perl Not.pl 

 $x=0  (assignment)
 "(($x=$x+1) && not($x=$x+1)) || $x==2" is true 

 $x=0  (assignment)
 "(not($x=$x+1) && ($x=$x+1)) || $x==2" is false 
>
\end{verbatim}}
\hrule
\caption{The code of a Perl program \texttt{Not.pl}, followed by the display of 
an execution of \texttt{Not.pl}, which demonstrates that $\texttt{A
\&\& not(A)}\neq\texttt{not(A) \&\& A}$ }
\label{fig:Perl}
\end{figure}

\begin{example}
\label{ex:Perl}
 The programming language \textbf{Perl}~\cite{perl} can be used
to illustrate our claim that \FreeSCL\ defines a reasonable
logic because Perl's 
language definition is rather liberal 
with respect to conditionals and satisfies all
consequences of \FreeSCL.
In Perl, the simple assignment operator is written 
\texttt{=} 
and there is also an equality operator \texttt{==} 
that tests equality and
returns either \emph{true} or \emph{false}. An assignment is comparable
to a procedure that is evaluated for the side effect of 
modifying a variable and 
regardless of which kind of assignment operator is used, 
the final value of the variable on the left is returned as 
the value of the assignment as a whole.
This implies that in Perl assignments can
occur in if-then-else statements and then
the final value of the variable on the left is interpreted
as a Boolean. In particular, any number is evaluated
\emph{true} except for \texttt{0}.
For this reason, Perl can be used to demonstrate 
that certain
axioms that perhaps seem reasonable,
should \emph{not} be added to \FreeSCL, as for example
$x\leftand x=x $ and $x\leftand \neg x=\neg x\leftand x$.
It is not hard to write Perl programs that demonstrate 
the non-validity
of these identities. As an example, consider
the run \texttt{perl Not.pl} of the Perl program
\texttt{Not.pl}
depicted in Figure~\ref{fig:Perl} that shows
that 
\[\verb+A && not(A)+ = \verb+not(A) && A+\]
does not hold in Perl, not even if \verb+A+ is an 
``atom'', as in the code of \verb+Not.pl+
(in this code, \verb+\n+ prescribes a new-line).
Here we consider \verb-($x=$x+1)- and \verb+($x==2)+ as
atoms (cf.~Example~\eqref{eq:vb} in the Introduction). 
\\[1mm]
It can be argued that Perl fragments that constitute conditions provide
an implementation of \FSCL. The purpose
of this example is
to stress that all \FSCL-identities model valid equivalences
for conditions in a programming language such as \textbf{Perl}
in which the evaluation of 
expressions as boolean values (based on a standard
interpretation of built-in data types) is used to interpret the constituents 
of conditions, and that we can only 
expect a better modeling if we partition the occurring
atoms into those that might have a side effect and those that 
are tests without side effects. We return to this issue
in Section~\ref{subsec:varia}.
\end{example}
Our main result is that \SCLe\ is also complete:

\begin{theorem}[Completeness]
\label{thm:daan}
For all closed \SCL-terms $P$ and $Q$,
\[\FreeSCL\vdash P=Q\quad\Longrightarrow\quad \SCLe\vdash P=Q.\]
\end{theorem}

Before we prove this theorem, we briefly introduce the intermediate result
that underlies our proof. The theorem restricts to \emph{closed}
SCL-terms because our proof is based on properties of the evaluation trees 
of such terms.
From now on we will write 
\[\ST\]
for the set of closed \SCL-terms.
The text in the forthcoming three sections is largely taken from~\cite{Daan}:
in Section~\ref{subsec:snf} we define normal forms for $\ST$
and in Section~\ref{subsec:tree} we analyze the $se$-images 
of $\ST$-terms and provide some results on unique decompositions of such trees. In
Section~\ref{subsec:cpl}
we define an inverse function of $se$ (on the appropriate domain) with which we 
can prove our final result in that section, that is,
\[\text{(Theorem~\ref{thm:sclcpl}.) }
\textit{For all $P, Q \in \ST$, if $se(P) = se(Q)$ then $\EqFSCL \vdash P = Q$.}
\]
With this intermediate result, the proof of Theorem~\ref{thm:daan} is trivial.
\begin{proof}[Proof of Theorem~\ref{thm:daan}]
If $\FreeSCL\vdash P=Q$,
then by Definitions~\ref{def:ext} and \ref{def:SCL} and Theorem~\ref{thm:1}, 
$P=_\fr Q$, that is,  $se(P)=se(Q)$. 
By Theorem~\ref{thm:sclcpl} it follows 
that $\EqFSCL \vdash P = Q$.
\end{proof}

\subsection{Normal forms}
\label{subsec:snf}
To aid in the forthcoming proof of Theorem~\ref{thm:sclcpl}
we define normal forms for $\ST$-terms.
When considering trees in $\SE[\ST]$ (the image of $\SE$ for $\ST$-terms),
we note that some trees only have
$\tr$-leaves, some only $\fa$-leaves and some both $\tr$-leaves and
$\fa$-leaves. For any $\ST$-term $P$, \[\SE(P \sleftor \tr)\] is a tree
with only $\tr$-leaves, as can easily be seen from the definition of $\SE$.
Similarly, for any $\ST$-term $P$, $\SE(P \sleftand
\fa)$ is a tree with only $\fa$-leaves. 
The simplest trees in the
image of $\SE$ that have both types of leaves are $\SE(a)$ for $a \in A$. 

We define the grammar for our normal form
before we motivate it.

\begin{definition}
\label{def:snf}
A term $P \in \ST$ is said to be in \textbf{$\SCL$ Normal Form $(\SNF)$} if it
is generated by the following grammar.
\begin{align*}
P &::= P^\tr ~\mid~ P^\fa ~\mid~ P^\tr \sleftand P^* 
&&(\SNF\text{-terms})\\
P^\tr &::= \tr ~\mid~ (a \sleftand P^\tr) \sleftor P^\tr &&(\tr\text{-terms})\\
P^\fa &::= \fa ~\mid~ (a \sleftor P^\fa) \sleftand P^\fa &&(\fa\text{-terms})\\
P^* &::= P^c ~\mid~ P^d &&(*\text{-terms})\\[2mm]
P^\ell &::= (a \sleftand P^\tr ) \sleftor P^\fa
  ~\mid~ (\neg a \sleftand P^\tr ) \sleftor P^\fa&&(\ell\text{-terms})\\
P^c &::= P^\ell ~\mid~ P^* \sleftand P^d\\
P^d &::= P^\ell ~\mid~ P^* \sleftor P^c
\end{align*}
where $a \in A$. We refer to $P^\tr$-forms as $\tr$-terms, to $P^\fa$-forms as
$\fa$-terms,
to $P^\ell$-forms as
$\ell$-terms (the name refers to literal terms), and to $P^*$-forms as $*$-terms.
Finally, a term of the form $P^\tr \sleftand P^*$ is referred to as a
$\tr$-$*$-term.
\end{definition}

For each $\tr$-term $P$, $\SE(P)$ is a tree with only $\tr$-leaves.  
$\ST$-terms that have in their $se$-image only $\tr$-leaves will be rewritten to  
$\tr$-terms. Similarly, terms that have in their $se$-image only $\fa$-leaves 
will be rewritten to $\fa$-terms. 
Note that $\leftor$ is right-associative in $\tr$-terms, e.g.,
\[(a \leftand \tr) \leftor ((b \leftand \tr) \leftor \tr)
\quad\text{is a \tr-term, but 
$((a \leftand \tr) \leftor (b \leftand \tr)) \leftor \tr$ is not,}
\]
and that $\leftand$ is right-associative in $\fa$-terms.
Furthermore, the $se$-images of $\tr$-terms and $\fa$-terms follow a 
simple pattern: observe that for $P,Q\in P^\tr$,
$se((a\leftand P)\leftor Q)$
is of the form
\[
\begin{tikzpicture}[%
      level distance=7.5mm,
      level 1/.style={sibling distance=15mm},
      level 2/.style={sibling distance=7.5mm},
      baseline=(current bounding box.center)]
      \node (a) {$a$}
        child {node (b1) {$se(P)$}
        }
        child {node (b2) {$se(Q)$}
        };
      \end{tikzpicture}
\]    
Indeed, an alternative characterization for $P^\tr$-terms is
\[P^\tr::=\tr\mid P^\tr\lef a\rig P^\tr,\]
which also clearly demonstrates that basic forms (see Definition~\ref{def:basic})
without occurrences of \fa\
can be expressed with $\leftand$ and $\leftor$ as the only
connectives. Of course, a similar result holds for $P^\fa$-terms.

Before we discuss the $\tr$-$*$-terms | the third type of our $\SNF$-normal 
forms | we consider the $*$-terms, which are 
$\leftand$-$\leftor$-combinations of $\ell$-terms with the restriction 
that $\leftand$ and $\leftor$
associate to 
the left. This restriction is defined with help of the syntactical categories 
$P^c$ and $P^d$. 
We will sometimes
use the notation $\ell,\ell_1,\ell_2$ for $\ell$-terms (literal-terms)
to enhance readability. As an example, 
\[(\ell_1\leftand \ell_2)\leftand\ell_3\]
is  a $*$-term 
(it is in 
$P^c$-form), while
$\ell_1\leftand(\ell_2\leftand\ell_3)$ is not a $*$-term. 
We consider $\ell$-terms to be ``basic''
in $*$-terms in the sense that they are the smallest grammatical unit that 
generate
$se$-images in which both \tr\ and \fa\ occur. More precisely, the $se$-image
of an $\ell$-term
has precisely one node (its root) that has paths to both \tr\ and \fa. 

$\ST$-terms that have both \tr\ and \fa\ in their $se$-image
will be rewritten to
$\tr$-$*$-terms. A $\tr$-$*$-term is the conjunction of a
$\tr$-term 
and a $*$-term. The first conjunct is necessary to encode a term such as 
\[[a\leftor(b\leftor\tr)]\leftand c\]
where the evaluation values of $a$ and $b$ are not relevant, but where 
their side effects may influence the evaluation value of $c$, as can be clearly
seen from its $se$-image that has three different nodes 
that model the evaluation of $c$:
\begin{center}
\begin{tikzpicture}[%
level distance=7.5mm,
level 1/.style={sibling distance=30mm},
level 2/.style={sibling distance=15mm},
level 3/.style={sibling distance=7.5mm}
]
\node (a) {$a$}
  child {node (b1) {$c$}
    child {node (c1) {$\tr$}
    }
    child {node (c2) {$\fa$}
    }
  }
  child {node (b2) {$b$}
    child {node (c3) {$c$}
      child {node (d5) {$\tr$}} 
      child {node (d6) {$\fa$}}
    }
    child {node (c4) {$c$}
      child {node (d7) {$\tr$}} 
      child {node (d8) {$\fa$}}
    }
  };
\end{tikzpicture}
\end{center}
From this example it can be easily seen that the above 
\tr-$*$-term can be also represented
as the disjunction of a $\fa$-term and a $*$-term, namely of the \fa-term that
encodes $a\leftand(b\leftand\fa)$ and the $*$-term that encodes $c$, thus as
\[[(a\leftor\fa)\leftand((b\leftor\fa)\leftand\fa)]\leftor [(c\leftand\tr)\leftor\fa].\]
However, we chose to use a \tr-term and a conjunction for this purpose.

\bigskip

From now on we shall use $P^\tr$, $P^*$, etc.~both to denote grammatical 
categories and as
variables for terms in those categories. The remainder of this section is
concerned with defining and proving correct the normalization function 
\[\nfs:\ST \to \SNF. 
\]
We will define $\nfs$ recursively using the functions
\begin{equation*}
\nfs^n: \SNF \to \SNF \quad\text{and}\quad
\nfs^c: \SNF \times \SNF \to \SNF.
\end{equation*}
The first of these will be used to rewrite negated $\SNF$-terms to $\SNF$-terms
and the second to rewrite the conjunction of two $\SNF$-terms to an
$\SNF$-term. By \eqref{SCL2} we have no need for a dedicated function that
rewrites the disjunction of two $\SNF$-terms to an $\SNF$-term.
The normalization function $\nfs: \ST \to \SNF$ is defined
recursively, using $\nfs^n$ and $\nfs^c$, as follows.
\begin{align}
\nfs(a) &= \tr \sleftand ((a \sleftand \tr) \sleftor \fa)
  \label{eq:nfs1} \\
\nfs(\tr) &= \tr
  \label{eq:nfs2} \\
\nfs(\fa) &= \fa
  \label{eq:nfs3} \\
\nfs(\neg P) &= \nfs^n(\nfs(P))
  \label{eq:nfs4} \\
\nfs(P \sleftand Q) &= \nfs^c(\nfs(P), \nfs(Q))
  \label{eq:nfs5} \\
\nfs(P \sleftor Q) &= \nfs^n(\nfs^c(\nfs^n(\nfs(P)), \nfs^n(\nfs(Q)))).
  \label{eq:nfs6}
\end{align}
Observe that $\nfs(a)$ is indeed the unique \tr-$*$-term 
with the property that $se(a)=se(\nfs(a))$, and also that
$se(\tr)=se(\nfs(\tr))$ and $se(\fa)=se(\nfs(\fa))$
(cf.~Theorem~\ref{thm:nfs}).
 
We proceed by defining $\nfs^n$. Analyzing the semantics of $\tr$-terms and
$\fa$-terms together with the definition of $\SE$ on negations, it becomes
clear that $\nfs^n$ must turn $\tr$-terms into $\fa$-terms and vice versa.
We also remark that $\nfs^n$ must preserve the left-associativity of the
$*$-terms in $\tr$-$*$-terms, modulo the associativity within $\ell$-terms.
We define $\nfs^n: \SNF \to \SNF$ as follows, using the auxiliary function
$\nfs^n_1: P^* \to P^*$ to `push down' or `push in' the negation symbols when
negating a $\tr$-$*$-term. We note that there is no ambiguity between the
different grammatical categories present in an $\SNF$-term, i.e., any
$\SNF$-term is in exactly one of the grammatical categories identified in
Definition~\ref{def:snf}, and that all right-hand sides are of the intended 
grammatical category.

\begin{align}
\nfs^n(\tr) &= \fa
  \label{eq:nfsn1} \\
\nfs^n((a \sleftand P^\tr) \sleftor Q^\tr) &= (a \sleftor
  \nfs^n(Q^\tr)) \sleftand \nfs^n(P^\tr)
  \label{eq:nfsn2} \\[2mm]
\nfs^n(\fa) &= \tr
  \label{eq:nfsn3} \\
\nfs^n((a \sleftor P^\fa) \sleftand Q^\fa) &= (a \sleftand
  \nfs^n(Q^\fa)) \sleftor \nfs^n(P^\fa)
  \label{eq:nfsn4} \\[2mm]
\nfs^n(P^\tr \sleftand Q^*) &= P^\tr \sleftand \nfs^n_1(Q^*)
  \label{eq:nfsn5} \\[2mm]
\nfs^n_1((a \sleftand P^\tr) \sleftor Q^\fa) &= (\neg a \sleftand
  \nfs^n(Q^\fa)) \sleftor \nfs^n(P^\tr)
  \label{eq:nfsn6} \\
\nfs^n_1((\neg a \sleftand P^\tr) \sleftor Q^\fa) &= (a \sleftand
  \nfs^n(Q^\fa)) \sleftor \nfs^n(P^\tr)
  \label{eq:nfsn7} \\
\nfs^n_1(P^* \sleftand Q^d) &= \nfs^n_1(P^*) \sleftor \nfs^n_1(Q^d)
  \label{eq:nfsn8} \\
\nfs^n_1(P^* \sleftor Q^c) &= \nfs^n_1(P^*) \sleftand \nfs^n_1(Q^c).
  \label{eq:nfsn9}
\end{align}

Now we turn to defining $\nfs^c$. We distinguish the following cases:
\begin{enumerate}
\item $\nfs^c(P^\tr, Q)$
\item $\nfs^c(P^\fa, Q)$
\item $\nfs^c(P^\tr\leftand P^*, Q)$
\end{enumerate}
In case 1, it is apparent that the conjunction of a $\tr$-term with
another terms always yields a term of the same grammatical category as the
second conjunct. We define $\nfs^c$ recursively by a 
case distinction on its first argument, and in the second case by a further 
case distinction on its second argument.  
\begin{align}
\nfs^c(\tr, P) &= P
  \label{eq:nfsc1} \\
\nfs^c((a \sleftand P^\tr) \sleftor Q^\tr, R^\tr) &= (a \sleftand
  \nfs^c(P^\tr, R^\tr)) \sleftor \nfs^c(Q^\tr, R^\tr)
  \label{eq:nfsc2} \\
\nfs^c((a \sleftand P^\tr) \sleftor Q^\tr, R^\fa) &= (a \sleftor
  \nfs^c(Q^\tr, R^\fa)) \sleftand \nfs^c(P^\tr, R^\fa)
  \label{eq:nfsc3} \\
\nfs^c((a \sleftand P^\tr) \sleftor Q^\tr, R^\tr \sleftand S^*) &=
  \nfs^c((a \sleftand P^\tr) \sleftor Q^\tr, R^\tr) \sleftand S^*.
  \label{eq:nfsc4}
\end{align}

For case 2 (the first argument is an $\fa$-term) we make use
of \eqref{SCL6}. This immediately implies that the conjunction of an
$\fa$-term with another term is itself an $\fa$-term.
\begin{align}
\nfs^c(P^\fa, Q) &= P^\fa
  \label{eq:nfsc5}
\end{align}

For the remaining case 3 (the first argument is an \tr-$*$-term)
we distinguish three sub-cases:
\begin{enumerate}
\item[3.1.] The second argument is a $\tr$-term,
\item[3.2.] The second argument is a $\fa$-term, and
\item[3.3.] The second argument is a \tr-$*$-term.
\end{enumerate}
For case 3.1 we will use an auxiliary function
$\nfs^c_1: P^* \times P^\tr \to P^*$ to turn conjunctions of a $*$-term with
a $\tr$-term into $*$-terms. We define $\nfs^c_1$ recursively by a 
case distinction on its first argument. 
Together with \eqref{SCL7} (associativity) this allows us to
define $\nfs^c$ for this case. Observe that the right-hand
sides of the clauses defining $\nfs^c_1$ are indeed $*$-terms. 
\begin{align}
\nfs^c(P^\tr \sleftand Q^*, R^\tr) &= P^\tr \sleftand
  \nfs^c_1(Q^*, R^\tr) 
  \label{eq:nfsc6} \\[2mm]
\nfs^c_1((a \sleftand P^\tr) \sleftor Q^\fa, R^\tr) &= (a \sleftand
  \nfs^c(P^\tr, R^\tr)) \sleftor Q^\fa
  \label{eq:nfsc7} \\
\nfs^c_1((\neg a \sleftand P^\tr) \sleftor Q^\fa, R^\tr) &= (\neg a
  \sleftand \nfs^c(P^\tr, R^\tr)) \sleftor Q^\fa
  \label{eq:nfsc8} \\
\nfs^c_1(P^* \sleftand Q^d, R^\tr) &= P^* \sleftand \nfs^c_1(Q^d, R^\tr)
  \label{eq:nfsc9} \\
\nfs^c_1(P^* \sleftor Q^c, R^\tr) &= \nfs^c_1(P^*, R^\tr) \sleftor
  \nfs^c_1(Q^c, R^\tr).
  \label{eq:nfsc10}
\end{align}
For case 3.2 we need to define 
$\nfs^c(P^\tr \sleftand Q^*, R^\fa)$, which will 
be an
$\fa$-term. Using \eqref{SCL7} we reduce this problem to converting
$Q^*$ to an $\fa$-term, for which we use the auxiliary function
$\nfs^c_2: P^* \times P^\fa \to P^\fa$ that we define recursively by a 
case distinction on its first argument. Observe that the right-hand
sides of the clauses defining $\nfs^c_2$ are all $\fa$-terms. 
\begin{align}
\nfs^c(P^\tr \sleftand Q^*, R^\fa) &= \nfs^c(P^\tr, \nfs^c_2(Q^*,
  R^\fa))
  \label{eq:nfsc11} \\[2mm]
\nfs^c_2((a \sleftand P^\tr) \sleftor Q^\fa, R^\fa) &= (a \sleftor
  Q^\fa) \sleftand \nfs^c(P^\tr, R^\fa)
  \label{eq:nfsc12} \\
\nfs^c_2((\neg a \sleftand P^\tr) \sleftor Q^\fa, R^\fa) &= (a
  \sleftor \nfs^c(P^\tr, R^\fa)) \sleftand Q^\fa
  \label{eq:nfsc13} \\
\nfs^c_2(P^* \sleftand Q^d, R^\fa) &= \nfs^c_2(P^*, \nfs^c_2(Q^d,
  R^\fa))
  \label{eq:nfsc14} \\
\nfs^c_2(P^* \sleftor Q^c, R^\fa) &= \nfs^c_2(\nfs^n(\nfs^c_1(P^*,
  \nfs^n(R^\fa))), \nfs^c_2(Q^c, R^\fa)).
  \label{eq:nfsc15}
\end{align}
For case 3.3 we need to define $\nfs^c(P^\tr \sleftand Q^*, R^\tr \sleftand S^*)$.
We use the auxiliary function $\nfs^c_3: P^*
\times (P^\tr \sleftand P^*) \to P^*$ to ensure that the result is a
$\tr$-$*$-term, and we define $\nfs^c_3$ by a case distinction on its second argument. 
Observe that the right-hand
sides of the clauses defining $\nfs^c_3$ are all $*$-terms. 
\begin{align}
\nfs^c(P^\tr \sleftand Q^*, R^\tr \sleftand S^*) &= P^\tr \sleftand 
  \nfs^c_3(Q^*, R^\tr \sleftand S^*)
  \label{eq:nfsc16} \\[2mm]
\nfs^c_3(P^*, Q^\tr \sleftand R^\ell) &= \nfs^c_1(P^*, Q^\tr) \sleftand
  R^\ell
  \label{eq:nfsc17} \\
\nfs^c_3(P^*, Q^\tr \sleftand (R^* \sleftand S^d)) &= \nfs^c_3(P^*, Q^\tr
  \sleftand R^*) \sleftand S^d
  \label{eq:nfsc18} \\
\nfs^c_3(P^*, Q^\tr \sleftand (R^* \sleftor S^c)) &= \nfs^c_1(P^*, Q^\tr)
  \sleftand (R^* \sleftor S^c).
  \label{eq:nfsc19} 
\end{align}

\begin{theorem}[Normal forms]
\label{thm:nfs}
For any $P \in \ST$, $\nfs(P)$ terminates, $\nfs(P) \in \SNF$ and 
\[\EqFSCL\vdash \nfs(P) = P.\]
\end{theorem}

In Appendix~\ref{app:nf} we first prove a number of lemmas showing that the
definitions $\nfs^n$ and $\nfs^c$ are correct and use those to prove the above
theorem. We have chosen to use a function rather than a rewriting system to
prove the correctness of the normal form, because 
this relieves us of the
task of proving confluence for the underlying rewriting system.


%
%
%
\subsection{Tree structure and decompositions}
\label{subsec:tree}
In Section~\ref{subsec:cpl} we will prove that on $\SNF$ we can invert the function $\SE$. 
To do this we need to
prove several structural properties of the trees in $se[\SNF]$,
the image of $\SE$. In the
definition of $\SE$ we can see how $\SE(P \sleftand Q)$ is assembled from
$\SE(P)$ and $\SE(Q)$ and similarly for $\SE(P \sleftor Q)$. To decompose 
trees in $se[\SNF]$ we will introduce some notation. 
The trees in the image of $\SE$ are all
finite binary trees over $A$ with leaves in $\{\tr, \fa\}$, i.e.,
$\SE[\ST] \subseteq \T$. We will now also consider the set $\Tone$ of binary
trees over $A$ with leaves in $\{\tr, \fa, \Box\}$, where $\Box$ is 
called ``box". 
The box will be used as a placeholder when composing or
decomposing trees. Replacement of the leaves of trees in $\Tone$ by trees in
$\T$ or $\Tone$ is defined analogous to replacement for trees in $\T$, adopting
the same notational conventions.
As a first example, we have by definition of $\SE$ that 
$\SE(P \sleftand Q)$ can be
decomposed as
\begin{equation*}
\SE(P)\sub{\tr}{\Box}\sub{\Box}{\SE(Q)},
\end{equation*}
where $\SE(P)\sub{\tr}{\Box} \in \Tone$ and $\SE(Q) \in \T$. We note that
this only works because the trees in the image of $\SE$, or in $\T$ in general,
do not contain any boxes. 
Of course, each tree $X\in\T$ has the \emph{trivial 
decomposition} that involves a replacement of the form $\sub{\Box}Y$, namely 
\[\Box\sub{\Box}X.~\footnote{Also, for each $X\in\T$ it follows
 that $X=X\sub{\Box}Y$ for any $Y\in\T$, but
 we do not consider $X\sub{\Box}Y$ to be a `decomposition' of $X$ in
 this case.}
\]

We start with some simple properties
of the $\SE$-images of \tr-terms, \fa-terms, and
$*$-terms.

\begin{lemma}[Leaf occurrences]
\label{lem:TF}~
\begin{enumerate}
\item
For any
\tr-term $P$, $\SE(P)$ contains  $\tr$, but not $\fa$,
\item
For any
\fa-term $P$, $\SE(P)$ contains  $\fa$, but not $\tr$,
\item
For any
$*$-term $P$, $\SE(P)$ contains both $\tr$ and $\fa$.
\end{enumerate}
\end{lemma}

\begin{proof}
By induction on the structure of $P$. A proof of the first two statements is trivial.
For the third statement, if $P$ is an $\ell$-term, we find that by definition
of the grammar of $P$ that
one branch from the root of $\SE(P)$ will only contain $\tr$ and not $\fa$,
and for the other branch this is the other way around. 

For the induction we have to consider both 
$\SE(P_1 \sleftand P_2)$ and $\SE(P_1\sleftor P_2)$.
Consider $\SE(P_1 \sleftand P_2)$, which equals by definition $se(P_1)\sub{\tr}{se(P_2)}$.
By induction, both $se(P_1)$ and $se(P_2)$ contain both \tr\ and \fa,
so $\SE(P_1 \sleftand P_2)$ contains both \tr\ and \fa.
The case $\SE(P_1\sleftor P_2)$ can be dealt with in a similar way.
\end{proof}

Decompositions of the $se$-image
of $*$-terms turn out to be crucial in our approach. As an example, the $se$-image
of the $*$-term 
\[(\ell_1\leftor\ell_2)\leftand\ell_3\quad\text{with}\quad 
\ell_i=((a_i\leftand\tr)\leftor\fa)\]
can be decomposed as $X_1\sub{\Box}Y$ with $X_1\in\Tone$ as follows:
\[
\begin{tikzpicture}[%
level distance=7.5mm,
level 1/.style={sibling distance=30mm},
level 2/.style={sibling distance=15mm},
level 3/.style={sibling distance=7.5mm}
]
\node (a) {$a_1$}
  child {node (b1) {$\Box$}
  }
  child {node (b2) {$a_2$}
    child {node (c3) {$a_3$}
      child {node (d5) {$\tr$}} 
      child {node (d6) {$\fa$}}
    }
    child {node (c4) {$\fa$}
    }
  };
\end{tikzpicture}
\]
and with $Y=se(\ell_3)$, thus $Y=\tr\unlhd a_3\unrhd\fa$, and
two other decompositions are $X_2\sub{\Box}Y=
X_3\sub{\Box}Y$ with $X_2,X_3\in\Tone$ as follows:
\[
\begin{tikzpicture}[%
level distance=7.5mm,
level 1/.style={sibling distance=30mm},
level 2/.style={sibling distance=15mm}
]
\node (a) {$a_1$}
  child {node (b1) {$a_3$}
    child {node (c1) {$\tr$}
    }
    child {node (c2) {$\fa$}
    }
  }
  child {node (b2) {$a_2$}
    child {node (c3) {$\Box$}
    }
    child {node (c4) {$\fa$}
    }
  };
\end{tikzpicture}
\qquad\text{and}\qquad
\begin{tikzpicture}[%
level distance=7.5mm,
level 1/.style={sibling distance=30mm},
level 2/.style={sibling distance=15mm}
]
\node (a) {$a_1$}
  child {node (b1) {$\Box$}
  }
  child {node (b2) {$a_2$}
    child {node (c3) {$\Box$}
    }
    child {node (c4) {$\fa$}
    }
  };
\end{tikzpicture}
\]
Observe that the first two decompositions have the property that $Y$ is a 
subtree of $X_1$ and $X_2$, respectively.
Furthermore, observe that $X_3=se(\ell_1\leftor\ell_2)\sub{\tr}{\Box}$, and 
hence that
this decomposition agrees with the definition of
the function $se$.
When we want to express that a certain decomposition $X\sub{\Box}Y$ has the property
that $Y$ is not a subtree of $X$, we say that $X\sub{\Box}Y$ 
is a \emph{strict decomposition}. Finally observe that each of these
decompositions satisfy the property that $X_i$ contains \tr\ or \fa, which
is a general property of decompositions of $*$-terms and a consequence of 
Lemma~\ref{lem:snondectf} (see below). The following lemma
provides the $\SE$-image of the rightmost $\ell$-term in a $*$-term as a witness.

\begin{lemma}[Witness decomposition]
\label{lem:sperttf}
For all $*$-terms $P$, $\SE(P)$ can be decomposed as $X\sub{\Box}{Y}$ with $X
\in \Tone$ and $Y \in \T$ such that $X$ contains $\Box$ and $Y = \SE(R)$ for
the rightmost $\ell$-term $R$ in $P$. 
Note that $X$ may be $\Box$. 
\\[1mm]
We will refer to $Y$ as \textbf{the witness} for this lemma for $P$.
\end{lemma}

\begin{proof}
By induction on the number of $\ell$-terms in $P$. 
In the base case $P$ is an $\ell$-term and $\SE(P) =
\Box\sub{\Box}{\SE(P)}$ is the desired decomposition. 
For the induction we have to consider both $\SE(P \sleftand
Q)$ and $\SE(P \sleftor Q)$.

We start with $\SE(P \sleftand Q)$ and let $X\sub{\Box}{Y}$ be the
decomposition for $\SE(Q)$ which we have by induction hypothesis, so
$Y$ is the witness for this lemma for $Q$ and the $se$-image of its
rightmost $\ell$-term, say $R$. Since by
definition of $\SE$ on ${\sleftand}$ we have
\begin{equation*}
\SE(P \sleftand Q) = \SE(P)\sub{\tr}{\SE(Q)}
\end{equation*}
we also have
\begin{equation*}
\SE(P \sleftand Q) = \SE(P)\sub{\tr}{X\sub{\Box}{Y}} =
\SE(P)\sub{\tr}{X}\sub{\Box}{Y}.
\end{equation*}
The last equality is due to the fact that $\SE(P)$ does not contain any boxes.
This gives our desired decomposition: $\SE(P)\sub{\tr}{X}$ contains $\Box$ because
$\SE(P)$ contains \tr\ (Lemma~\ref{lem:TF}) and $X$ contains $\Box$,
and $Y$ is the $se$-image of the 
rightmost $\ell$-term $R$ in $P\leftand Q$.

The case for $\SE(P \sleftor Q)$ is
analogous.
\end{proof}

The following lemma illustrates another structural property of trees in the
image of $*$-terms under $\SE$, namely that each non-trivial decomposition
$X\sub{\Box}{Y}$ of a $*$-term has the property that at least one of \tr\ and \fa\
occurs in $X$.

\begin{lemma}[Non-decomposition]
\label{lem:snondectf}
There is no $*$-term $P$ such that $\SE(P)$ can be decomposed as
$X\sub{\Box}{Y}$ with $X \in \Tone$ and $Y \in \T$, where $X \neq \Box$ and $X$
contains $\Box$, but not $\tr$ or $\fa$.
\end{lemma}

\begin{proof}
We prove the lemma's statement by induction on the number of $\ell$-terms in $P$.
Let $P$ be a single $\ell$-term. When we analyze the grammar of $P$ we find that
one branch from the root of $\SE(P)$ only contains $\tr$ and not $\fa$,
and the other way around for the other branch. 
Hence if $\SE(P) = X\sub{\Box}{Y}$ and $X$ does not contain \tr\ or \fa,
then $Y$ contains occurrences of both $\tr$ and $\fa$. Hence, $Y$ must contain the
root and $X = \Box$.

For the induction we
assume that the lemma holds for all $*$-terms that contain less $\ell$-terms than $P
\sleftand Q$ and $P \sleftor Q$.
We start with the case for $\SE(P \sleftand Q)$. Towards a contradiction, suppose 
that for some $*$-terms $P$ and $Q$,
\begin{equation}
\label{eq:sub}
\SE(P\sleftand Q) = X\sub{\Box}{Y}
\end{equation}
with $X \neq \Box$ and $X$ not containing
any occurrences of $\tr$ or $\fa$. 
Let $Z$ be the witness of Lemma
\ref{lem:sperttf} for $P$
(so one branch of the root of $Z$ contains only \fa-leaves,
and the other only \tr-leaves). Observe that $\SE(P \sleftand Q)$ has one or
more occurrences of the subtree
\[Z\sub{\tr}{\SE(Q)}.\]
The interest of this observation is that one branch of the root of this 
subtree contains only \fa, and the other branch contains both \tr\ and \fa\
(because $se(Q)$ does by Lemma~\ref{lem:TF}). It follows that all occurrences of 
$Z\sub{\tr}{\SE(Q)}$
in $se(P\leftand Q)$ are subtrees in $Y$ after being substituted in $X$:
\begin{itemize}
\item
Because $X$ does not contain \tr\ and \fa, Lemma~\ref{lem:TF} and \eqref{eq:sub}
imply that $Y$ contains both \tr\ and \fa.
\item
Assume there is an occurrence of 
$Z\sub{\tr}{\SE(Q)}$ in $X\sub{\Box}{Y}$ that has its root in $X$.
Hence the parts of the two branches from this root node that are in $X$
must have $\Box$ as their leaves. 
For the branch that only has \fa-leaves this implies that 
$Y$ does not contain \tr, which is a contradiction. 
\end{itemize}
So, $Y$ contains at least one occurrence of $Z\sub{\tr}{\SE(Q)}$, hence
\begin{equation}
\label{JAB}
\text{$se(Q)$ is a \emph{proper} subtree of $Y$.}
\end{equation}
This implies that \emph{each} occurrence of $se(Q)$ in $se(P\leftand Q)$
is an occurrence in $Y$ (after being substituted): if this were
not the case, the root of $se(Q)$ occurs also in $X$
and the parts of the two branches from this node that are in $X$
must have $\Box$ as their leaves, which implies that $Y$ after being substituted
in $X$ is a proper subtree of $se(Q)$. 
By~\eqref{JAB} this implies 
that $se(Q)$ is a proper subtree of itself, which is a contradiction.

Because each occurrence of $se(Q)$ in $se(P\leftand Q)= X\sub{\Box}{Y}$
is an occurrence in $Y$ (after being substituted) and the fact that 
$se(P\leftand Q)=se(P)\sub{\tr}{se(Q)}$, 
it follows that $\SE(P) = X\sub{\Box}{V}$ where $V$ is obtained from $Y$ by
replacing all occurrences of the subtree ${\SE(Q)}$ by \tr. But this
violates the induction hypothesis. 
This concludes the induction step for the case of $se(P\leftand Q)$. 

A proof for the case $\SE(P \sleftor Q)$ is symmetric.
\end{proof}

We now arrive at two crucial definitions concerning decompositions. When
considering $*$-terms, we already know that $\SE(P \sleftand Q)$ can be
decomposed as
\begin{equation*}
\SE(P)\sub{\tr}{\Box}\sub{\Box}{\SE(Q)}.
\end{equation*}
Our goal now is to give a definition for a kind of decomposition so that this
is the only such decomposition for $\SE(P \sleftand Q)$. We also ensure that
$\SE(P \sleftor Q)$ does not have a decomposition of that kind, so that we can
distinguish $\SE(P \sleftand Q)$ from $\SE(P \sleftor Q)$. Similarly, we need
to define another kind of decomposition so that $\SE(P \sleftor Q)$ can only be
decomposed as 
\begin{equation*}
\SE(P)\sub{\fa}{\Box}\sub{\Box}{\SE(Q)}
\end{equation*}
and that $\SE(P \sleftand Q)$ does not have a decomposition of that kind.

\begin{definition}
\label{def:candidate}
The pair $(Y, Z) \in \Tone \times \T$ is a \textbf{candidate conjunction
decomposition (ccd)} of $X \in \T$, if
\begin{itemize}
\item $X = Y\sub{\Box}{Z}$,
\item $Y$ contains $\Box$,
\item $Y$ contains $\fa$, but not $\tr$, and
\item $Z$ contains both $\tr$ and $\fa$.
\end{itemize}
Similarly, $(Y, Z)$ is a \textbf{candidate disjunction decomposition (cdd)} of
$X$, if
\begin{itemize}
\item $X = Y\sub{\Box}{Z}$,
\item $Y$ contains $\Box$,
\item $Y$ contains $\tr$, but not $\fa$, and
\item $Z$ contains both $\tr$ and $\fa$.
\end{itemize}
\end{definition}

Observe that any ccd or cdd $(Y, Z)$ is \emph{strict} because $Z$
contains both
$\tr$ and \fa, and thus cannot be a subtree of $Y$.
A first, crucial property of ccd's and cdd's is the following connection
with $se$-images of $*$-terms.

\begin{lemma}
\label{lem:nocdd}
For any $*$-term $P \sleftand Q$,
$\SE(P \sleftand Q)$ has no cdd. 
Similarly, for any $*$-term $P \sleftor Q$, 
$\SE(P \sleftor Q)$ has no ccd.
\end{lemma}

\begin{proof}
We first treat the case for $P \sleftand Q$,
so $P \in P^*$ and $Q \in P^d$.
Towards a contradiction, suppose that $(Y, Z)$ is a cdd of $\SE(P \sleftand Q)$. 
Let $Z'$ be the witness of Lemma~\ref{lem:sperttf} for $P$.
Observe that $\SE(P \sleftand Q)$ has one or
more occurrences of the subtree
\[Z'\sub{\tr}{\SE(Q)}.\]
It follows that all occurrences
of $Z'\sub{\tr}{se(Q)}$ in $se(P\leftand Q)$ are subtrees in $Z$ after being 
substituted in $Y$, which
can be argued in a similar way as in the proof of Lemma~\ref{lem:snondectf}:
\begin{itemize}
\item Assume there is an occurrence of
$Z'\sub{\tr}{se(Q)}$ in $Y\sub{\Box}{Z}$ that has its root in $Y$. Following
the branch from this node that only has \fa-leaves and that leads in $Y$
to one or more $\Box$-leaves, this implies 
that $Z$ does not contain \tr, which is a contradiction by definition of a cdd.
\end{itemize}
So, $Z$ contains at least one occurrence of $Z'\sub{\tr}{se(Q)}$. This implies 
that \emph{each} occurrence of $se(Q)$ in $se(P\leftand Q)$ is an occurrence in 
$Z$ (after being substituted): if this were
not the case, the root of $se(Q)$ occurs in $Y$ and this
implies that $se(Q)$ is a proper subtree of itself, which is a contradiction.
By definition of $se$, all the occurrences of $\tr$ in $\SE(P \sleftand Q)$
are in occurrences of the subtree $\SE(Q)$. 
Because $Y$ does not contain the root of an $se(Q)$-occurrence,
$Y$ does not contain any occurrences of
$\tr$, so $(Y,Z)$ is not a cdd of $\SE(P \sleftand Q)$.  
A proof
for the case $\SE(P \sleftor Q)$ is symmetric.
\end{proof}

However, the ccd and cdd are not necessarily the decompositions we are
looking for, because, for example, $\SE((P \sleftand Q) \sleftand R)$ has a ccd
\[(\SE(P)\sub{\tr}{\Box}, \SE(Q \sleftand R)),\] 
while the decomposition we
need to reconstruct the constituents of a $*$-term is
\[(\SE(P \sleftand Q)\sub{\tr}{\Box}, \SE(R)).\]
A more intricate example of a ccd $(Y, Z)$ 
that does not produce the constituents of a $*$-term
is this pair of trees $Y$ and $Z$: 
\[
\begin{tikzpicture}[%
level distance=7.5mm,
level 1/.style={sibling distance=30mm},
level 2/.style={sibling distance=15mm},
level 3/.style={sibling distance=7.5mm}
]
\node (a) {$a$}
  child {node (b1) {$\Box$}
  }
  child {node (b2) {$a_2$}
    child {node (c3) {$\fa$}
    }
    child {node (c4) {$\fa$}
          }
  };
\end{tikzpicture}
\qquad
\begin{tikzpicture}[%
level distance=7.5mm,
level 1/.style={sibling distance=30mm},
level 2/.style={sibling distance=15mm},
level 3/.style={sibling distance=7.5mm}
]
\node (a) {$a_1$}
  child {node (b1) {$b$}
    child {node (c1) {$\tr$}
    }
    child {node (c2) {$\fa$}
    }
  }
  child {node (b2) {$b$}
    child {node (c3) {$\tr$}
    }
    child {node (c4) {$\fa$}
          }
  };
\end{tikzpicture}
\]
It is clear that $(Y,Z)$ is a ccd of $se(\ell_1\leftand \ell_2)$ with $\ell_1$ and $\ell_2$ these $\ell$-terms:
\[
\ell_1=(a\leftand ((a_1\leftand\tr)\leftor\tr))\leftor((a_2\leftor\fa)\leftand\fa),
\quad
\ell_2=(b\leftand \tr)\leftor\fa.
\]
Therefore we refine
Definition~\ref{def:candidate} to obtain the decompositions we seek.

\begin{definition}
The pair $(Y, Z) \in \Tone \times \T$ is a \textbf{conjunction decomposition
(cd)} of $X \in \T$, if it is a ccd of $X$ and there is no other ccd $(Y', Z')$
of $X$ where the depth of $Z'$ is smaller than that of $Z$. 

Similarly, the pair $(Y, Z) \in \Tone \times \T$
is a \textbf{disjunction decomposition (dd)} of $X$, if it is a cdd of $X$ and
there is no other cdd $(Y', Z')$ of $X$ where the depth of $Z'$ is smaller than
that of $Z$.
\end{definition}

\begin{theorem}
\label{thm:scddd}
For any $*$-term $P \sleftand Q$, i.e., with $P \in P^*$ and $Q \in P^d$,
$\SE(P \sleftand Q)$ has the unique cd
\begin{equation*}
(\SE(P)\sub{\tr}{\Box}, \SE(Q))
\end{equation*}
and no dd. For any $*$-term $P \sleftor Q$, i.e., with $P \in P^*$ and $Q \in
P^c$, $\SE(P \sleftor Q)$ has no cd and its unique dd is
\begin{equation*}
(\SE(P)\sub{\fa}{\Box}, \SE(Q)).
\end{equation*}
\end{theorem}

\begin{proof}
By simultaneous induction on the number of $\ell$-terms
in $P \sleftand Q$ and $P \sleftor Q$. 

In the basis we have to consider, for $\ell$-terms
$\ell_1$ and $\ell_2$, the terms $\ell_1 \sleftand \ell_2$ and $\ell_1 \sleftor 
\ell_2$. By symmetry, it is sufficient to consider the first case. 
By definition of a ccd and Lemma~\ref{lem:TF},
$(\SE(\ell_1)\sub{\tr}{\Box}, \SE(\ell_2))$ is a ccd of
$\SE(\ell_1 \sleftand \ell_2)$. Furthermore observe that the smallest subtree
in $se(\ell_1\leftand \ell_2)$ that contains both \tr\ and \fa\ is
$se(\ell_2)$. 
Therefore $(\SE(\ell_1)\sub{\tr}{\Box}, \SE(\ell_2))$ is the \emph{unique} cd of
$\SE(\ell_1\sleftand \ell_2)$.
Now for the dd. It suffices to show that there is no cdd of $\SE(\ell_1 \sleftand
\ell_2)$ and this follows from Lemma~\ref{lem:nocdd}. 

For the induction we assume that the
theorem holds for all $*$-terms with less $\ell$-terms than $P \sleftand Q$ and
$P \sleftor Q$. 
We will first treat the case for $P \sleftand Q$ and show that
$(\SE(P)\sub{\tr}{\Box}, \SE(Q))$ is the unique cd of $\SE(P \sleftand Q)$. 
In this case, observe that for any other ccd $(Y, Z)$ either $Z$ 
is a proper subtree of
$\SE(Q)$, or vice versa: if this were not the case, then there are occurrences of
$Z$ and $se(Q)$ in $Y\sub{\Box}Z=se(P\leftand Q)$ 
that are disjoint and at least one of the following cases applies:
\begin{itemize}
\item $Y$ contains an occurrence of $se(Q)$, and hence of \tr, which is a 
contradiction.
\item $se(P)\sub{\tr}{\Box}$ contains an occurrence of $Z$, and hence of \tr,
which is a contradiction.
\end{itemize}
Hence, by definition of a cd
it suffices to show that there is no ccd $(Y, Z)$ where $Z$ is
a proper subtree of $\SE(Q)$. 
Towards a contradiction, suppose that such a ccd $(Y,
Z)$ does exist.
By definition of $*$-terms $Q$ is either an $\ell$-term or a disjunction. 
\begin{itemize}
\item
If
$Q$ is an $\ell$-term and $Z$ a proper subtree of $\SE(Q)$, then $Z$
does not contain both $\tr$ and $\fa$ because one branch from the root of 
$\SE(Q)$ will only contain $\tr$ and not $\fa$,
and the other branch vice versa. 
Therefore $(\SE(P)\sub{\tr}{\Box}, \SE(Q))$ is the \emph{unique} cd of
$\SE(P \sleftand Q)$.
\item
If $Q$ is a disjunction and $Z$ a proper subtree of $\SE(Q)$, then we can
decompose $\SE(Q)$ as $\SE(Q) = U\sub{\Box}{Z}$ for some $U \in \Tone$ that
contains but is not equal to $\Box$ and such that $U\sub{\Box}{Z}$ is strict, 
i.e., $Z$ is not a subtree of $U$. By Lemma~\ref{lem:snondectf} this implies
that $U$ contains either $\tr$ or $\fa$. 
\begin{itemize}
\item
If $U$ contains $\tr$, then so
does $Y$, because 
$Y = \SE(P)\sub{\tr}{U}$, which is the case
because 
\begin{align*}
Y\sub{\Box}Z&=se(P\leftand Q)\\
&=se(P)\sub{\tr}{U\sub{\Box}Z}\\
&=se(P)\sub{\tr}{U}\sub{\Box}Z,
\end{align*}
and the only way in which $Y\ne\SE(P)\sub{\tr}{U}$ is possible is that 
$U$ contains an occurrence of $Z$, which is excluded because
$U\sub{\Box}{Z}$ is strict. Because $Y$ contains an occurrence of \tr,
$(Y, Z)$ is not a ccd of $\SE(P \sleftand Q)$.
\item
If $U$ only contains
$\fa$ then $(U, Z)$ is a ccd of $\SE(Q)$ which violates the induction
hypothesis. 
\end{itemize}
Therefore $(\SE(P)\sub{\tr}{\Box}, \SE(Q))$ is the \emph{unique}
cd of $\SE(P \sleftand Q)$.
\end{itemize}
Now for the dd. By Lemma~\ref{lem:nocdd} there is no cdd of $\SE(P \sleftand
Q)$, so there is neither a dd of $\SE(P \sleftand Q)$. 
A proof
for the case $\SE(P \sleftor Q)$ is symmetric.
\end{proof}

At this point we have the tools necessary to invert $\SE$ on $*$-terms, at
least down to the level of $\ell$-terms. We can easily detect if a tree in the
image of $\SE$ is in the image of $P^\ell$, because all leaves to the left of
the root are one truth value, while all the leaves to the right are the other.
To invert $\SE$ on $\tr$-$*$-terms we still need to be able to reconstruct
$\SE(P^\tr)$ and $\SE(Q^*)$ from $\SE(P^\tr \sleftand Q^*)$. To this end we
define a $\tr$-$*$-decomposition, and as with cd's and dd's we first define a
candidate $\tr$-$*$-decomposition.

\begin{definition}
The pair $(Y, Z) \in \Tone \times \T$ is a \textbf{candidate
$\tr$-$*$-decomposition (ctsd)} of $X \in \T$, if 
\begin{itemize}
\item
$X = Y\sub{\Box}{Z}$, 
\item 
$Y$
does not contain $\tr$ or $\fa$,
\item $Z$ contains both \tr\ and \fa,
\end{itemize}
and there is no decomposition $(U, V) \in
\Tone \times \T$ of $Z$ such that
\begin{itemize}
\item $Z = U\sub{\Box}{V}$,
\item $U$ contains $\Box$,
\item $U \neq \Box$, and
\item $U$ contains neither $\tr$ nor $\fa$.
\end{itemize}
\end{definition}

However, this is not necessarily the decomposition we seek in this case.
Consider for example the $\tr$-term $P^\tr$ with the following semantics:
\begin{center}
\begin{tikzpicture}[%
level distance=7.5mm,
level 1/.style={sibling distance=30mm},
level 2/.style={sibling distance=15mm},
level 3/.style={sibling distance=7.5mm}
]
\node (a) {$a$}
  child {node (b1) {$b$}
    child {node (c1) {$c$}
      child {node (d1) {$\tr$}} 
      child {node (d2) {$\tr$}}
    }
    child {node (c2) {$d$}
      child {node (d3) {$\tr$}} 
      child {node (d4) {$\tr$}}
    }
  }
  child {node (b2) {$b$}
    child {node (c3) {$c$}
      child {node (d5) {$\tr$}} 
      child {node (d6) {$\tr$}}
    }
    child {node (c4) {$d$}
      child {node (d7) {$\tr$}} 
      child {node (d8) {$\tr$}}
    }
  };
\end{tikzpicture}
\end{center}
and observe that
$\SE(P^\tr \sleftand Q^*)$ has a ctsd
\begin{equation*}
(\Box \tlef a \trig \Box, (\SE(Q^*) \tlef c \trig \SE(Q^*)) \tlef b \trig
(\SE(Q^*) \tlef d \trig \SE(Q^*))).
\end{equation*}
But the decomposition we seek is $(\SE(P^\tr)\sub{\tr}{\Box}, \SE(Q^*))$.
Hence we will refine the above definition to aid in the theorem below.

\begin{definition}
The pair $(Y, Z) \in \Tone \times \T$ is a \textbf{$\tr$-$*$-decomposition
(tsd)} of $X \in \T$, if it is a ctsd of $X$ and there is no other ctsd $(Y',
Z')$ of $X$ where the depth of $Z'$ is smaller than that of $Z$.
\end{definition}

\begin{theorem}
\label{thm:stsd}
For any $\tr$-term $P$ and $*$-term $Q$ the unique tsd of $\SE(P \sleftand
Q)$ is
\begin{equation*}
(\SE(P)\sub{\tr}{\Box}, \SE(Q)).
\end{equation*}
\end{theorem}
\begin{proof}
First observe that $(\SE(P)\sub{\tr}{\Box}, \SE(Q))$ is a ctsd because by
definition $\SE(P)\sub{\tr}{\SE(Q)} = \SE(P
\sleftand Q)$ and $\SE(Q)$ is non-decomposable by Lemma~\ref{lem:snondectf}.

Towards a contradiction, suppose there exists a ctsd $(Y, Z)$ such that the depth of
$Z$ is smaller than that of $\SE(Q)$. Now either $Z$ is a proper subtree of
$\SE(Q)$, or vice versa, for otherwise there would be occurrences of
$Z$ and $se(Q)$ in $Y\sub{\Box}Z=\SE(P)\sub{\tr}{\SE(Q)}$ 
that are disjoint and at least one of the following cases applies:
\begin{itemize}
\item $Y$ contains an occurrence of $se(Q)$, and hence of \tr\ and \fa, which is a 
contradiction.
\item $se(P)\sub{\tr}{\Box}$ contains an occurrence of $Z$, and hence of \tr\ and \fa,
which is a contradiction.
\end{itemize}
By definition of a tsd if suffices to only consider the case that 
$Z$ is a proper subtree of $se(Q)$. If this is the case, then
$\SE(Q) = U\sub{\Box}{Z}$ for some $U \in \Tone$ that is not equal
to $\Box$ and does not contain $\tr$ or $\fa$ (because then $Y$ would
too). But this violates Lemma~\ref{lem:snondectf}, which states that no such
decomposition exists. 
Hence, $(\SE(P)\sub{\tr}{\Box}, \SE(Q))$ is the \emph{unique} tsd of 
$\SE(P \sleftand Q)$.
\end{proof}

\subsection{Defining an inverse}
\label{subsec:cpl}
The two decomposition theorems from the previous section
enable us to prove the the intermediate result that we used in our completeness proof
for $\FSCLT$ (Theorem~\ref{thm:daan}). We define three auxiliary functions to aid in our definition of the
inverse of $\SE$ on $\SNF$. Let 
\[\cd : \T \to \Tone \times \T\]
be the function
that returns the conjunction decomposition of its argument, $\dd$ of the same
type its disjunction decomposition and $\tsd$, also of the same type, its
$\tr$-$*$-decomposition. Naturally, these functions are undefined when their
argument does not have a decomposition of the specified type. Each of these
functions returns a pair and we will use $\cd_1$ ($\dd_1$, $\tsd_1$) to denote
the first element of this pair and $\cd_2$ ($\dd_2$, $\tsd_2$) to denote the
second element.

We define $\invs: \T \to \ST$ using the functions $\invs^\tr: \T \to \ST$ for
inverting trees in the image of $\tr$-terms and $\invs^\fa$, $\invs^\ell$
and $\invs^*$ of the same type for inverting trees in the image of
$\fa$-terms, $\ell$-terms and $*$-terms, respectively. These functions are
defined as follows.
\begin{align}
\label{eq:g1}
\invs^\tr(X) &=
  \begin{cases}
    \tr
      &\textrm{if $X = \tr$,} \\
    (a \sleftand \invs^\tr(Y)) \sleftor \invs^\tr(Z) 
      &\textrm{if $X = Y \tlef a \trig Z$.}
  \end{cases} \\ \displaybreak[0]
\label{eq:g2}
\invs^\fa(X) &=
  \begin{cases}
    \fa
      &\textrm{if $X = \fa$,} \\
    (a \sleftor \invs^\fa(Z)) \sleftand \invs^\fa(Y)
      &\textrm{if $X = Y \tlef a \trig Z$.}
  \end{cases} \\ \displaybreak[0]
\label{eq:g3}
\invs^\ell(X) &=
  \begin{cases}
    (a \sleftand \invs^\tr(Y)) \sleftor \invs^\fa(Z) 
      &\textrm{if $X = Y \tlef a \trig Z$ for some $a \in A$} \\
      &\textrm{and $Y$ only has $\tr$-leaves,} \\
    (\neg a \sleftand \invs^\tr(Z)) \sleftor \invs^\fa(Y)
      &\textrm{if $X = Y \tlef a \trig Z$ for some $a \in A$} \\
      &\textrm{and $Z$ only has $\tr$-leaves.}
  \end{cases} \\ \displaybreak[0]
\label{eq:g4}
\invs^*(X) &=
  \begin{cases}
    \invs^*(\cd_1(X)\sub{\Box}{\tr}) \sleftand \invs^*(\cd_2(X))
      &\textrm{if $X$ has a cd,} \\
    \invs^*(\dd_1(X)\sub{\Box}{\fa}) \sleftor \invs^*(\dd_2(X))
      &\textrm{if $X$ has a dd,} \\
    \invs^\ell(X)
      &\textrm{otherwise.}
  \end{cases} \\ \displaybreak[0]
\label{eq:g5}
\invs(X) &=
  \begin{cases}
    \invs^\tr(X)
      &\textrm{if $X$ has only $\tr$-leaves,} \\
    \invs^\fa(X)
      &\textrm{if $X$ has only $\fa$-leaves,} \\
    \invs^\tr(\tsd_1(X)\sub{\Box}{\tr}) \sleftand \invs^*(\tsd_2(X))
      &\textrm{otherwise.}
  \end{cases}
\end{align}

We use the symbol $\equiv$ to denote `syntactic equivalence' and we have
the following result on our normal forms.
\begin{theorem}
\label{thm:sclinv}
For all $P \in \SNF$, $\invs(\SE(P)) \equiv P$.
\end{theorem}

The proof of this theorem is provided in Appendix~\ref{app:sclinvs}. 
Theorem~\ref{thm:sclinv}
immediately implies the intermediate result that we used
in our proof of Theorem~\ref{thm:daan}, which is the next theorem.

\begin{theorem}
\label{thm:sclcpl}
For all $P, Q \in \ST$, if $se(P) = se(Q)$ then $\EqFSCL \vdash P = Q$.
\end{theorem}

\begin{proof}
Suppose that $P$ and $Q$ are two $\ST$-terms with $\SE(P) = \SE(Q)$. 
By Theorem~\ref{thm:nfs}, $P$ is derivably equal to an $\SNF$-term
$P'$, i.e., $\EqFSCL \vdash P = P'$, and  $Q$ is derivably equal to an
$\SNF$-term $Q'$, i.e., $\EqFSCL \vdash Q = Q'$. 
By Proposition~\ref{prop:sound}, $\SE(P) = \SE(P')$ and $\SE(Q) = \SE(Q')$, 
so $\invs(se(P'))\equiv\invs(se(Q'))$. By 
Theorem~\ref{thm:sclinv} it follows that 
$P'\equiv Q'$ and hence $\EqFSCL \vdash P' = Q'$, and thus $\EqFSCL \vdash P = Q$.
\end{proof}

\section{Other short-circuit logics}
\label{sec:Other}
In this section we consider some other variants of 
short-circuit logic. 
In Section~\ref{subsec:MSCL} we define a short-circuit
logic that is based on $\CP_\mem$ (see Section~\ref{subsec:CPmem}), 
and in Section~\ref{subsec:complete} we prove a completeness result.
In Section~\ref{subsec:SSCL} we discuss a short-circuit logic
that constitutes a 
variant of propositional logic. 
In Section~\ref{subsec:rpcon} we consider
two other short-circuit logics that
stem from proposition algebra axiomatizations 
in between \CP\ and $\CP_\mem$ and both
make certain identifications on successive 
occurrences of atoms. In Section~\ref{subsec:varia} we discuss
full left-sequential connectives and a setting in which different 
short-circuit logics can be useful.

\subsection{Memorizing short-circuit logic: \MSCL}
\label{subsec:MSCL}
We define a short-circuit
logic that is based on $\CP_\mem$.
This logic identifies 
much more than \FreeSCL, but the connective $\leftand$ is not commutative.

\begin{definition}
\label{def:MSCL}
\textbf{$\MSCL$ (memorizing short-circuit logic)}
is the short-circuit logic that implies no other 
consequences than those of the module expression
\begin{align*}
\{\tr,\neg,\leftand\}\export(&\CP_\mem\\
& + \langle\, \neg x=\fa\lef x\rig\tr\,\rangle\\
& + \langle\, x\leftand y=y\lef x\rig\fa\,\rangle).
\end{align*}
\end{definition}

According to Definition~\ref{def:SCL}, \MSCL\ is a 
short-circuit logic because $\CP_\mem$ is an 
axiomatic extension
of \CP\ ($\CP_\mem=\CP+\eqref{CPmem}$, 
see Section~\ref{subsec:CPmem}).
In Section~\ref{subsec:MSCL} we provide axioms for \MSCL\
and in Section~\ref{subsec:complete} we prove their completeness.


\begin{table}
\hrule
\begin{align}
\fa&=\neg\tr
\tag{\ref{SCL1}}
\\
x\leftor y&=\neg(\neg x\leftand\neg y)
\tag{\ref{SCL2}}
\\
\neg\neg x&=x
\tag{\ref{SCL3}}
\\
\tr\leftand x&=x
\tag{\ref{SCL4}}
\\ 
\nonumber
x\leftand\tr&=x
\tag{\ref{SCL5}}
\\ 
\fa\leftand x&=\fa
\tag{\ref{SCL6}}
\\
(x\leftand y)\leftand z&=x\leftand (y\leftand z)
\tag{\ref{SCL7}}
\\
x\leftand \fa
&= \neg x \leftand \fa
\tag{\ref{SCL8}}
\\[0mm]
\label{MSCL1}
\tag{MSCL1}
x\leftand(x\leftor y)&=x\\
\label{MSCL2}
\tag{MSCL2}
x\leftand(y\leftor z)&=(x\leftand y)\leftor(x\leftand z)\\
\label{MSCL3}
\tag{MSCL3}
(x\leftor y)\leftand(\neg x\leftor z)&=(\neg x\leftor z)
\leftand(x\leftor y)\\
\label{eq:rightdis}
\tag{MSCL4}
\qquad
((x\leftand y)\leftor(\neg x\leftand z))\leftand u
&=(x\leftor (z\leftand u))\leftand
(\neg x\leftor(y\leftand u))
\end{align}
\hrule
\caption{$\MSCLe$, a set of axioms for $\MSCL$.}
\label{tab:ML}
\end{table}

In Table~\ref{tab:ML} we present
a set of axioms for \MSCL\ and we call this set \MSCLe.
Axioms $\eqref{SCL1}-\eqref{SCL8}$ stem from
in $\SCLe$ (see Table~\ref{tab:SCL}).
None of the axioms $\eqref{MSCL1}-\eqref{eq:rightdis}$ is a consequence
of \SCLe, it is not hard to find closed instances that yield different $se$-images. 
Some comments on
the axioms $\eqref{MSCL1}-\eqref{eq:rightdis}$, 
where we use the notation
$(n)'$ for the dual version of axiom~$(n)$:
\begin{itemize}
\item
Axiom~\eqref{MSCL1} defines a sequential
form of absorption that implies the idempotence of 
$\leftand$ (with~$\eqref{SCL5}'$ and $y=\fa$)
and $\leftor$.

\item
Axiom~\eqref{MSCL2}
defines the left-distributivity of $\leftand$, and 
that of $\leftor$ follows by duality.
\item
Axiom~\eqref{MSCL3} and its dual define 
a restricted form of
commutativity of ${\leftand}$ and $\leftor$, reminiscent
of the identity $y\lef x\rig z=z\lef \neg x\rig y$.
We will sometimes use this identity with 
$y$ and/or $z$ equal to $\fa$,
as in 
\[x\leftand(\neg x\leftor z)=(\neg x\leftor z)\leftand x.\]

\item
Axiom~\eqref{eq:rightdis} is a combination of two more
comprehensible equations: first, with $u=\tr$ it yields
\begin{equation}
\label{eq:switch}
(x\leftand y)\leftor(\neg x\leftand z)=
(x\leftor z)\leftand
(\neg x\leftor y),
\end{equation}
which introduces another defining equation for $y\lef x\rig z$
(cf.\ the identity $y\lef x\rig z=
(x\leftand y)\leftor(\neg x\leftand z)$ discussed 
in Section~\ref{subsec:CPmem}).
Application of~\eqref{eq:switch} 
to the right-hand side of
equation~\eqref{eq:rightdis} reveals a restricted form
of right-distributivity of $\leftand$:
\begin{equation}
\label{eq:AA}
((x\leftand y)\leftor(\neg x\leftand z))\leftand u
=(x\leftand (y\leftand u))\leftor
(\neg x\leftand (z\leftand u)),
\end{equation}
and with $y=x$ and $z=\neg x$ this yields
\begin{equation}
\label{eq:BB}
(x\leftor \neg x)\leftand u=
(x\leftand u)\leftor(\neg x\leftand u).
\end{equation}
Right-distributivity is restricted in~\eqref{eq:AA}
in the sense that the ``guards'' $x\leftand ..$
and $\neg x\leftand ..$ must be present.

In fact, axiom~\eqref{eq:rightdis} can in \MSCLe\ be replaced by
equations $\eqref{eq:switch}$ and $\eqref{eq:AA}$
if one prefers elegance to conciseness. 
\end{itemize}

\begin{proposition}[Soundness]
\label{prop:sound2}
For all \SCL-terms $t$ and $t'$,
\[\MSCLe\vdash t=t'\quad\Longrightarrow\quad \MSCL\vdash t=t'.\]
\end{proposition}

\begin{proof}
We use that $\leftor$ can be defined
in exactly the same way in \MSCL\ as in \MSCLe\ (cf.\
the proof
of Proposition~\ref{prop:sound}). 
With respect to the soundness of \MSCLe, axiom
\eqref{eq:rightdis}, i.e.,
\[((x\leftand y)\leftor(\neg x\leftand z))\leftand u
=(x\leftor (z\leftand u))\leftand(\neg x\leftor(y\leftand u)),\]
is the only non-trivial case. Write $L=R$ 
for axiom~\eqref{eq:rightdis}, then
\begin{align*}
L&=u\lef(\tr\lef(y\lef x\rig\fa)\rig (\fa\lef x\rig z))\rig \fa
\\
&=u\lef(y\lef x\rig\fa)\rig(u\lef(\fa\lef x\rig z)\rig\fa)
&&\text{by \eqref{cp4} and \eqref{cp1}}\\
&=u\lef(y\lef x\rig\fa)\rig(\fa\lef x\rig (u\lef z\rig\fa))
&&\text{by \eqref{cp4} and \eqref{cp2}}\\
&=[u\lef y\rig(\fa\lef x\rig(u\lef z\rig\fa))]\lef x\rig~
\\
&\phantom{=~}~~
[\fa\lef x\rig(u\lef z\rig\fa)]
&&\text{by \eqref{cp4}}\\
&=(u\lef y\rig\fa)\lef x\rig(u\lef z\rig\fa)
&&\text{by \eqref{CPmem'} and \eqref{eq:contr}}\\
&=(u\lef y\rig \fa)\lef x\rig(\tr\lef (u\lef z\rig\fa)\rig\fa)
&&\text{by \eqref{cp3}}\\
&=[(u\lef y\rig \fa)\lef x\rig\tr]\lef x\rig~\\
&\phantom{=~}~~
[((u\lef y\rig\fa)\lef x\rig\tr)\lef(u\lef z\rig\fa)\rig\fa]
&&\text{by \eqref{eq:contr2} and \eqref{CPmem''}}\\
&=[(u\lef y\rig \fa)\lef x\rig\tr]\lef(\tr\lef x\rig(u\lef z\rig\fa))\rig\fa
&&\text{by \eqref{cp4} and \eqref{cp1}}\\
&=R.
\end{align*}
\end{proof}

As stated above, $\MSCL$ does not imply commutativity 
of $x\leftand y$,
but otherwise has some familiar consequences such as
the absorption law~\eqref{MSCL1}
and its dual (and these are the \emph{only} variants of absorption
that are valid in $\MSCL$). 
Some typical consequences of \MSCLe\ that express the flavor 
of \MSCL\ are these:
\begin{align}
\label{eq:4}
x\leftand \neg x&=x\leftand \fa,\\
\label{eq:5}
x\leftand y&=x\leftand (\neg x\leftor y),\\
\label{eq:7}
(x\leftand y)\leftand x&=x\leftand y,\\
\label{eq:ss}
(x\leftand y)\leftand \neg x&=(x\leftand y)\leftand\fa
\end{align}
(derivations are given below).
Equations~$\eqref{eq:4} - \eqref{eq:ss}$ will be used in 
the next sections and each of these expresses a typical property of $\MSCL$:
the first evaluation result of $x$ is memorized. 

\begin{itemize}
\item
Equation~\eqref{eq:4} can be derived as follows:
\begin{align*}
x\leftand \neg x&=
(x\leftor\fa)\leftand (\neg x\leftor\fa)\\
&=
(x\leftand\fa)\leftor (\neg x\leftand\fa)&& \text{by~\eqref{eq:switch}}\\
&=
(x\leftand\fa)\leftor (x\leftand\fa)&& \text{by~\eqref{SCL8}}\\
&=
x\leftand \fa,
\end{align*}
and hence, $\neg x\leftand x=\neg x\leftand \neg\neg x
=\neg x\leftand\fa=x\leftand \neg x$.
Note that the dual of~\eqref{eq:4},  thus
\[x\leftor \neg x=x\leftor\tr,\]
can be seen as a weak version of the law
of the excluded middle.

\item
Equation~\eqref{eq:5}
can be derived as follows:
\begin{align*}
x\leftand y&=x\leftand(\fa\leftor y)&&\text{by~\eqref{SCL4}}'\\
&=(x\leftand\fa)\leftor(x\leftand y)&&\text{by~\eqref{MSCL2}}\\
&=(x\leftand\neg x)\leftor (x\leftand y)
&&\text{by~\eqref{eq:4}}\\
&=x\leftand (\neg x\leftor y).&&\text{by~\eqref{MSCL2}}
\end{align*}
Two immediate consequences of this identity are
\begin{equation}
\label{eq:10}
x=x\leftand (\neg x\leftor \tr)
\quad\text{and its dual}\quad
x=x\leftor(\neg x\leftand\fa).
\end{equation}

\item
Equation~\eqref{eq:7}, i.e.,
$(x\leftand y)\leftand x=x\leftand y$,
is an immediate consequence of the more general equation 
$(x\leftand y)\leftand (x\leftor u)=x\leftand y$ (take $u=\fa$),
which can be derived as follows:
\begin{align*}
(x\leftand y)\leftand (x\leftor u)
&=(x\leftand (\neg x\leftor y))\leftand (x\leftor u)
&&\text{by~\eqref{eq:5}}\\
&=x\leftand ((\neg x\leftor y)\leftand (x\leftor u))
&&\text{by~\eqref{SCL7}}\\
&=x\leftand ((x\leftor u)\leftand (\neg x\leftor y))
&&\text{by~\eqref{MSCL3}}\\
&=(x\leftand (x\leftor u))\leftand (\neg x\leftor y)
&&\text{by~\eqref{SCL7}}\\
&=x\leftand (\neg x\leftor y)
&&\text{by~\eqref{MSCL1}}\\
&=x\leftand y.
&&\text{by~\eqref{eq:5}}
\end{align*}

\item Equation~\eqref{eq:ss} can be proven as follows:
\begin{align*}
(x\leftand y)\leftand \neg x
&=((x\leftand y)\leftand x)\leftand \neg x
&&\text{by~\eqref{eq:7}}\\
&=(x\leftand y)\leftand (x\leftand \neg x)
&&\text{by~\eqref{SCL7}}\\
&=(x\leftand y)\leftand(x\leftand\fa)
&&\text{by~\eqref{eq:4}}\\
&=((x\leftand y)\leftand x)\leftand\fa
&&\text{by~\eqref{SCL7}}\\
&=(x\leftand y)\leftand\fa.
&&\text{by~\eqref{eq:7}}
\end{align*}
\end{itemize}

So, \MSCL\ embodies the typical property that upon
the evaluation of a closed \SCL-term
$P$, once an atom has been evaluated, all
subsequent evaluations of that same atom will yield the same truth value. 
More precisely, if $(\sigma,B)$ is an evaluation of $P$
(see Definitions~\ref{def:eval}
and~\ref{def:ext}) and $aB_i\in\sigma$, then for all occurrences of $aB_j$ in
$\sigma$, $B_j=B_i$. 

\begin{example}
\label{ex:mem}
An example of a `memorizing' atom in a test expression in a conditional
program fragment is a call to a
\emph{memoizing function} with a fixed argument: a memoizing function is a 
function which maintains
a cache of function values for arguments it has previously been called with
(see~\cite{ABH} for a recent and detailed account of memoization and 
\url{http://en.wikipedia.org/wiki/Memoization} for some general information).
Another example that might support \MSCL\ evolves when
we consider programs (say, in a Perl-like language)
that allow in conditions only comparative tests on scalar 
variables, and also special tests on whether a 
program-variable has been
evaluated before, 
say \verb+eval$x+ for scalar variable \verb+x+.
This combines well with the consequences of \MSCL, but
refutes identities that are typically not in \MSCL,
such as for example
\[(\verb+$x==1+)\leftand (\verb+eval$x+)=(\verb+eval$x+)\leftand (\verb+$x==1+)\]
where the left-hand side always evaluates to \emph{true},
while the right-hand side can yield \emph{false}.
\end{example}

\subsection{An axiomatization of \MSCL}
\label{subsec:complete}
In this section we prove that $\MSCLe$ is a complete
axiomatization of \MSCL. We start with an intermediate result.

\begin{lemma}
\label{lem:AA}
$\MSCLe\vdash
x\leftand(y\leftor ((x\leftand z)\leftor(\neg x\leftand u)))
=x\leftand(y\leftor z)$.
\end{lemma}

\begin{proof}
We derive
\begin{align*}
x&\leftand(y\leftor((x\leftand z)\leftor(\neg x\leftand u)))\\
&=(x\leftand y)\leftor (x\leftand[(x\leftand z)\leftor (x\leftand
(\neg x\leftand u))])
&&\text{by \eqref{MSCL2}}\\
&=(x\leftand y)\leftor ((x\leftand (x\leftand z))\leftor (x\leftand
(\neg x\leftand u)))
&&\text{by \eqref{MSCL2}}\\
&=(x\leftand y)\leftor (((x\leftand x)\leftand z)\leftor ((x\leftand
\neg x)\leftand u))
&&\text{by \eqref{SCL7}}\\
&=(x\leftand y)\leftor ((x\leftand z)\leftor (x\leftand\fa))
\\
&=(x\leftand y)\leftor (x\leftand (z\leftor \fa))
&&\text{by \eqref{MSCL2}}\\
&=(x\leftand y)\leftor (x\leftand z)
\\
&=x\leftand (y\leftor z).
&& \text{by~\eqref{MSCL2}}
\end{align*}
\end{proof}

\begin{theorem}[Completeness]
\label{thm:MSCL}
For all \SCL-terms $t$ and $t'$,
\[\MSCL\vdash t=t'\quad\Longrightarrow\quad\MSCLe\vdash t=t'.\]
\end{theorem}

\begin{proof}
It suffices to prove that the axioms of $\CP_\mem$ 
are derivable from \MSCLe.
In this proof we use $\fa$ and $\leftor$ in the familiar
way and also the fact that with the axioms of $\CP_\mem$, the conditional 
$x\lef y\rig z$ can be expressed by
$x\lef y\rig z=(y\leftand x)\leftor(\neg y\leftand z)$ 
(see Section~\ref{subsec:CPmem}). Furthermore, we use
equation~\eqref{eq:switch}, i.e., 
\[
(y\leftand x)\leftor(\neg y\leftand z)
=(y\leftor z)\leftand (\neg y\leftor x).\]
\eqref{cp1}: $x\lef\tr\rig y=(\tr\leftand x)\leftor(\fa\leftand y)=
x\leftor \fa=x$.
\\[2mm]
\eqref{cp2}: $x\lef\fa\rig y=(\fa\leftand x)\leftor (\tr\leftand y)=
\fa\leftor y=y$.
\\[2mm]
\eqref{cp3}: $\tr\lef x\rig\fa= (x\leftand \tr)\leftor(\neg x\leftand\fa)=
x\leftor (\neg x\leftand\fa)=x$ by~\eqref{eq:10}.
\\[2mm]
\eqref{cp4}: To derive \eqref{cp4}, i.e., ~
$x\lef (y\lef z\rig u)\rig v=
(x\lef y\rig v)\lef z\rig(x\lef u\rig v)$,~
to which we further refer to by $L=R$, 
we use the identity
\begin{equation}
\label{eq:version}
(x\leftor(y\leftand z))\leftand(\neg x\leftor(u\leftand z))
=((x\leftor y)\leftand(\neg x\leftor u))\leftand z,
\end{equation}
which can be easily derived from equations~\eqref{eq:switch}
and \eqref{eq:rightdis}.
Then 
\[L=(X\leftand x)\leftor(\neg X\leftand v),\]
with $X=(z\leftand y)\leftor(\neg z\leftand u)$.
Hence, $X=(z\leftor u)\leftand(\neg z\leftor y)$
is derivable from \MSCLe, and so is
$\neg X=(z\leftor\neg u)\leftand(\neg z\leftor \neg y)$.
We derive
\begin{align*}
L&=
([z\leftor(u\leftand x)]\leftand
[\neg z\leftor (y\leftand x)])~\leftor\\
&
\phantom{L=\;}
([z\leftor(\neg u\leftand v)]\leftand
[\neg z\leftor (\neg y\leftand v)])
&&\text{by \eqref{eq:version}}\\
&=
(z\leftand (y\leftand x))\leftor
(\neg z\leftand (u\leftand x))
~\leftor
&&\text{by \eqref{eq:switch} }\\
&
\phantom{=~}
(z\leftand (\neg y\leftand v))\leftor
(\neg z\leftand (\neg u\leftand v))
&&\text{and \smash{$\eqref{SCL7}'$}}\\
&=(z\leftand (y\leftand x))\leftor(z\leftand 
(\neg y\leftand v))
~\leftor\\
&
\phantom{=~}
(\neg z\leftand (u\leftand x))
\leftor
(\neg z\leftand (\neg u\leftand v))
&&\text{by \eqref{MSCL3}}\\
&=
(z\leftand~((y\leftand x)\leftor(\neg y\leftand v)))
~\leftor
\\
&
\phantom{=~}
(\neg z\leftand~((u\leftand x)\leftor
(\neg u\leftand v)))
&&\text{by \eqref{MSCL2}}\\
&=R.
\end{align*}
\eqref{CPmem}: As argued in Section~\ref{subsec:CPmem}
it is sufficient to
derive axiom~\eqref{CPmem'}, that is,
\[(w\lef y\rig(z\lef x\rig u))\lef 
x\rig v=
(w\lef y\rig z)\lef x\rig v,\]
say $L=R$.
We derive
\begin{align*}
L&=(x\leftand(w\lef y\rig (z\lef x\rig u)))\leftor(\neg x\leftand v)\\
&=(x\leftand[(y\leftor [z\lef x\rig u])\leftand
(\neg y\leftor w)])\leftor(\neg x\leftand v)
\\
&=(x\leftand[(y\leftor [(x\leftand z)\leftor(\neg x\leftand u)])
\leftand(\neg y\leftor w)])\leftor(\neg x\leftand v)\\
&=([x\leftand(y\leftor [(x\leftand z)\leftor(\neg x\leftand u)])]
\leftand(\neg y\leftor w))\leftor(\neg x\leftand v)
&&\text{by~\eqref{SCL7}}
\\
&=([x\leftand(y\leftor z)]\leftand(\neg y\leftor w))
\leftor(\neg x\leftand v)&& \text{by~Lemma~\ref{lem:AA}}\\
&=(x\leftand[(y\leftor z)\leftand(\neg y\leftor w)])
\leftor(\neg x\leftand v)&& \text{by~\eqref{SCL7}}\\
&=R.
\end{align*}
\end{proof}

We end this section with a proof of the
derivability of the \SCLe-axioms \eqref{SCL9}~and~\eqref{SCL10} 
from \MSCLe. 
Of course, derivability of all
closed instances of these axioms follows
from Theorem~\ref{thm:MSCL} and the fact that 
\MSCL\ identifies more than \FreeSCL.

\smallskip

\noindent
$\MSCLe\vdash\eqref{SCL9}$.
This can be derived as
follows:
\begin{align*}
(x\leftand\fa)\leftor y
&=(x\leftand\neg x)\leftor y
&& \text{by~\eqref{eq:4}}\\
&=(x\leftor y)\leftand (\neg x\leftor y)
&& \text{by~\eqref{eq:BB}}'\\
&=(x\leftor \neg x)\leftand y
&& \text{by~\eqref{eq:AA}}\\
&=(x\leftor\tr)\leftand y.
&& \text{by~\eqref{eq:4}}'
\end{align*}

\noindent
$\MSCLe\vdash\eqref{SCL10}$.
First, we prove this auxiliary result:
\begin{align}
\nonumber
(x\leftand y)\leftor z
&=(x\leftand (\neg x\leftor y))\leftor z
&&\text{by~\eqref{eq:5}}\\
\nonumber
&=((x\leftor \fa)\leftand (\neg x\leftor y))\leftor z\\
\nonumber
&=
(x\leftand(y\leftor z))\leftor (\neg x\leftand(\fa\leftor z))
&& \text{by~\eqref{eq:rightdis}}'\\
\label{eq:8}
&=(x\leftand(y\leftor z))\leftor (\neg x\leftand z).
\end{align}
With~\eqref{eq:8} and the identity 
$\underline{(z\leftand\fa)}=
\underline{(z\leftand\fa)\leftand u}$, 
the axiom~\eqref{SCL10} can be easily derived:
\begin{align*}
(x\leftand y)&\leftor(z\leftand\fa)\\
&=
(x\leftand(y\leftor(z\leftand\fa)))\leftor(\neg x\leftand(z\leftand\fa))
&& \text{by~\eqref{eq:8}}\\
&=(x\leftor\underline{(z\leftand\fa)})\leftand
(\neg x\leftor (y\leftor(z\leftand\fa)))
&& \text{by~\eqref{eq:switch}}'\\
&=(x\leftor[\underline{(z\leftand\fa)\leftand(y\leftor(z\leftand\fa))}])~
\leftand
\\
&\phantom{=~}
(\neg x\leftor (y\leftor(z\leftand\fa)))\\
&=(x\leftor (z\leftand\fa))\leftand(y\leftor (z\leftand\fa)).
&& \text{by~\eqref{eq:8}}'
\end{align*}

\subsection{Static short-circuit logic: \SSCL}
\label{subsec:SSCL}
In this section we prove that the equation
$x\leftand \fa=\fa$ marks a distinguishing feature between
\MSCL\ and propositional logic: adding this equation to \MSCLe\ 
yields an equational characterization of propositional logic (be
it in sequential notation and defined with short-circuit 
evaluation).

\begin{definition}
\label{def:SSCL}
\textbf{$\SSCL$ (static
 short-circuit logic)}
is the short-circuit logic that implies no other 
consequences than those of the module expression 
\begin{align*}
\{\tr,\neg,\leftand\}\export(&\CP_\mem \\
& + \langle\, \fa\lef x\rig\fa=\fa\,\rangle\\
& + \langle\, \neg x=\fa\lef x\rig\tr\,\rangle\\
& + \langle\, x\leftand y=y\lef x\rig\fa\,\rangle).
\end{align*}
\end{definition}

\begin{definition}
The set \SSCLe\ is defined as the extension of \MSCLe\
(see~Table~\ref{tab:ML})
with the axiom 
\[x\leftand \fa=\fa.\]
\end{definition}
Our first result
is a very simple corollary
of Theorem~\ref{thm:MSCL}.
\begin{theorem}[Soundness and completeness]
\label{thm:SSCL}
For all \SCL-terms $t$ and $t'$,
\[\SSCLe\vdash t=t'\iff \SSCL\vdash t=t'.\]
\end{theorem}
\begin{proof}
Soundness, i.e., $\Longrightarrow$, follows
trivially from Proposition~\ref{prop:sound2} and the
fact that $\SSCL\vdash x\leftand\fa=\fa\lef x\rig\fa=\fa$. 

In order to show $\Longleftarrow$ it suffices to prove
that the axioms of $\CP_\mem$ and $\fa\lef x\rig \fa=\fa$
are derivable from \SSCLe.
By Theorem~\ref{thm:MSCL} and using
the expressibility of $x\lef y\rig z$ as we did in the proof of
Theorem~\ref{thm:MSCL}, it suffices to show that 
$\SSCLe\vdash(x\leftand\fa)\leftor(\neg x\leftand\fa)=\fa$, which is trivial:
\begin{align*}
(x\leftand\fa)\leftor(\neg x\leftand\fa)
&=\fa\leftor\fa
&&\text{by~\eqref{SCL8} and the axiom $x\leftand \fa=\fa$}\\
&=\fa.
&&\text{by $\eqref{SCL4}'$}
\end{align*}
\end{proof}

Combining identity~\eqref{eq:4} (that is, $x\leftand \neg x
=x\leftand\fa$) and $x\leftand\fa=\fa$ yields
\[x\leftand \neg x=\fa\quad
\text{and thus also}\quad x\leftor \neg x=\tr.
\]

\begin{lemma}
\label{lem:comm}
$\SSCLe\vdash x\leftand y=y\leftand x$.
\end{lemma}

\begin{proof}
We derive
\begin{align*}
x\leftand y&=(y\leftor\neg y)\leftand(x\leftand y)\\
&=(y\leftand(x\leftand y))\leftor
(\neg y\leftand(x\leftand y))
&& \text{by~\eqref{eq:BB}}\\
&=(y\leftand x)\leftor((\neg y\leftand x)\leftand\fa)
&& \text{by~\eqref{eq:7} and~\eqref{eq:ss}}\\
&=y\leftand x.
&& \text{by~$x\leftand\fa=\fa$}
\end{align*}
\end{proof}

As a consequence, all equations defining absorption (among which~\eqref{MSCL1})
and distributivity (among which~\eqref{MSCL2}) 
follow from \SSCLe, and it is not difficult
to see that \SSCLe\ defines the mentioned
variant of ``sequential propositional logic'': this follows for example immediately
from~\cite{Sioson} in which equational bases for Boolean 
algebra are provided, and each of these bases
can be easily derived from \SSCLe\ (below we return to this point).

The attentive reader may wonder why we did not define
\SSCL\ using the axiom set $\CP_\stat$ defined in~\cite{BP10} 
(and briefly discussed in Section~\ref{subsec:CPstat}). 
Recall that $\CP_\stat$ is defined as the
extension of \CP\ with the axiom \eqref{CPstat}, that is,
\[(x\lef y\rig z)\lef u\rig v
=(x\lef u\rig v)\lef y\rig (z\lef u\rig v)\]
and the contraction law~\eqref{eq:contr2}, that is, 
\[
(x\lef y\rig z)\lef y\rig u=x\lef y\rig u.
\]
In Section~\ref{subsec:pa} (Proposition~\ref{prop:3}) 
we show that $\CP_\stat$ and 
$(\CP_\mem+\langle\, \fa\lef x\rig\fa=\fa\,\rangle)$ 
are equally strong. Hence we have the following corollary.
\begin{corollary}
\SSCL\ equals 
\begin{align*}
\{\tr,\neg,\leftand\}\export(&\CP_\stat \\
 &+ \langle\, \neg x=\fa\lef x\rig\tr\,\rangle\\
 &+ \langle\, x\leftand y=y\lef x\rig\fa\,\rangle).
\end{align*}
\end{corollary}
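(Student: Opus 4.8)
The plan is to read the Corollary off Definition~\ref{def:SSCL} together with the two chains of derivations displayed immediately above its statement. By Definition~\ref{def:SSCL}, $\SSCL$ is the logic
\begin{align*}
\{\tr,\neg,\leftand\}\export(&\CP_\mem + \langle\, \fa\lef x\rig\fa=\fa\,\rangle\\
& + \langle\, \neg x=\fa\lef x\rig\tr\,\rangle + \langle\, x\leftand y=y\lef x\rig\fa\,\rangle),
\end{align*}
so it is enough to show that this module and the module $\CP_\stat + \langle\, \neg x=\fa\lef x\rig\tr\,\rangle + \langle\, x\leftand y=y\lef x\rig\fa\,\rangle$ have the same consequences; applying the export operator to both then yields the same short-circuit logic, which is exactly the statement of the Corollary.

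Since the two adjoined defining equations for $\neg$ and $\leftand$ are literally the same in both module expressions, everything reduces to showing that $\CP_\stat$ and $\CP_\mem + \langle\, \fa\lef x\rig\fa=\fa\,\rangle$ are inter-derivable as equational theories over $\Sigma_{\CP}$. Indeed, whenever theories $A,B$ over one signature satisfy $A\vdash B$ and $B\vdash A$, the theories $A+C$ and $B+C$ prove the same equations for any further axiom set $C$, and since the export operator of module algebra (\cite{BHK90}) depends on its argument only through the set of derivable equations in the exported signature, exporting any common subsignature from $A+C$ and from $B+C$ gives the same logic. The required inter-derivability is precisely what the text establishes just before the Corollary: one direction uses the short calculation deriving \eqref{CPmem} from $\CP_\stat$ (via \eqref{cp4}, \eqref{CPstat} and \eqref{eq:contr}) together with the instance $\fa\lef x\rig\fa=\fa$ of $v=v\lef y\rig v$ obtained in Section~\ref{subsec:CPstat}; the other direction uses that \eqref{eq:contr} is derivable in $\CP_\mem$ and the six-step calculation deriving \eqref{CPstat} from $\CP_\mem + \langle\, \fa\lef x\rig\fa=\fa\,\rangle$ by means of the commutativity of $\leftand$ (equation~\eqref{eq:comm}) rendered in conditional notation.

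Because all the real work has already been carried out, the only point deserving care is that the commutativity of $\leftand$ invoked in the second direction — equation~\eqref{eq:comm}, proved in $\SSCLe$ — genuinely is available in $\CP_\mem + \langle\, \fa\lef x\rig\fa=\fa\,\rangle$ once $\neg$, $\leftand$ and $\leftor$ are replaced by their defining $\Sigma_{\CP}$-terms, i.e.\ that its conditional form $y\lef x\rig\fa = x\lef y\rig\fa$ is a theorem there. This follows from Proposition~\ref{prop:sound2}: every $\MSCLe$-axiom is a consequence of $\MSCL$, hence a $\{\tr,\neg,\leftand\}$-equation provable in $\CP_\mem$ extended with the definitions, hence (the definitions being eliminable) translates to a $\Sigma_{\CP}$-theorem of $\CP_\mem$; adding $\fa\lef x\rig\fa=\fa$ covers the remaining $\SSCLe$-axiom $x\leftand\fa=\fa$, so the $\Sigma_{\CP}$-translation of $x\leftand y = y\leftand x$ is a theorem of $\CP_\mem + \langle\, \fa\lef x\rig\fa=\fa\,\rangle$, as needed. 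I expect no obstacle beyond this bookkeeping; with the observation in place the Corollary follows at once.
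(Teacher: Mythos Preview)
Your proposal is correct and follows essentially the same approach as the paper: the Corollary is stated immediately after the two chains of derivations establishing that $\CP_\stat$ and $\CP_\mem+\langle\,\fa\lef x\rig\fa=\fa\,\rangle$ are inter-derivable, and the paper simply reads off the conclusion from these. Your extra paragraph justifying that the commutativity~\eqref{eq:comm} is genuinely available in $\CP_\mem+\langle\,\fa\lef x\rig\fa=\fa\,\rangle$ (via Proposition~\ref{prop:sound2} and the eliminability of the defining equations) is a point the paper leaves implicit, so your write-up is in fact slightly more careful than the original.
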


Hoare proved in~\cite{Hoa85} that 
each tautology in propositional logic can be (expressed and)
proved with his axioms for the conditional. 
According to~\cite{BP10},
this also holds for $\CP_\stat$, and
thus also for $\SSCLe$
if we identify the symmetric connectives with their
left-sequential counterparts.

\subsection{Contractive and Repetition-Proof short-circuit logic}
\label{subsec:rpcon}
We briefly discuss two other variants of short-circuit logics 
which both involve explicit reference to the set $A$ of atoms
(propositional variables).
Both these variants are located in between \FreeSCL\ and \MSCL.

In~\cite{BP10} we introduced $\CP_\con$ (contractive
\CP)  which is defined by
the extension of \CP\ with these axiom schemes 
($a\in A$):
\begin{align*}
\label{CPcr1}
\tag{CPcr1}
\qquad
(x\lef a\rig y)\lef a\rig z&=x\lef a\rig z,\\
\label{CPcr2}
\tag{CPcr2}
\qquad
x\lef a\rig (y\lef a\rig z)&=x\lef a\rig z.
\end{align*}
These schemes contract for each atom $a$
respectively the \emph{true}-case and the 
\emph{false}-case.
We write $\CP_\con(A)$ to make explicit that 
these axiom schemes refer to the set $A$ of atoms.

\begin{definition}
\label{def:CSCL}
\textbf{$\CSCL$ (contractive
 short-circuit logic)}
is the short-circuit logic that implies no other 
consequences than those of the module expression 
\begin{align*}
\{\tr,\neg,\leftand,a\mid a\in A\}\export(&\CP_\con(A)\\
& + \langle\, \neg x=\fa\lef x\rig\tr\,\rangle\\
& + \langle\, x\leftand y=y\lef x\rig\fa\,\rangle).
\end{align*}
\end{definition}

\begin{table}
\hrule
\begin{align}
\label{con1}
a\leftand(a\leftor x)&=a
\\
\label{con2}
a\leftor(a\leftand x)&=a
\\
\label{con3}
a\leftor\neg a&=a\leftor \tr\\
\label{con4}
a\leftand \neg a&= a\leftand\fa
\end{align}
\hrule
\caption{Axiom schemes ($a\in A$) 
for \CSCL}
\label{tab:con}
\end{table}

The identities defined by \CSCL\ 
include those 
derivable from \SCLe\ 
(see Table~\ref{tab:SCL})
and the axiom schemes in Table~\ref{tab:con}.
The axiom schemes \eqref{con1} and \eqref{con2}
are the counterparts of the axiom schemes \eqref{CPcr1}
and \eqref{CPcr2} (for $a\in A$).
Observe that from \SCLe\ and axiom
schemes~\eqref{con1} and \eqref{con2}
the following equations can be derived
($a\in A$): 
\begin{align*}
a\leftand a&=a,&
a\leftor a&=a,\\
\neg a\leftand(\neg a\leftor x)&=\neg a,
&
\neg a\leftor(\neg a\leftand x)&=\neg a,\\
\neg a\leftand\neg a&=\neg a,& 
\neg a\leftor \neg a &=\neg a.
\end{align*}
Furthermore, it is not hard to prove that 
the axiom schemes~\eqref{con3} and~\eqref{con4}
are also valid in \CSCL, and imply with \SCLe\ 
these consequences
($a\in A$): 
\begin{align*}
\neg a\leftand a&=a\leftand\fa,
&\neg a\leftor a&=a\leftor\tr.
\end{align*}
The question whether the extension of \SCLe\ 
with the axiom schemes in Table~\ref{tab:con}
provides for closed terms an axiomatization of \CSCL\
is left open.

\begin{example}
\label{ex:con}
An example that illustrates
the use of
\CSCL\ concerns atoms that define
manipulation of Boolean registers:
\begin{itemize}
\item 
Consider
atoms \texttt{set:$i$:$j$} and \texttt{eq:$i$:$j$}
with $i\in\{1,...,n\}$ (the number of registers)
and $j\in\{\tr,\fa\}$ (the value of registers).
\item
An atom \texttt{set:$i$:$j$} can
have a side effect
(it sets register $i$ to value $j$) and yields
upon evaluation always \emph{true}.
\item An atom
\texttt{eq:$i$:$j$} has no side effect but 
yields upon evaluation only \emph{true} if register $i$
has value $j$. 
\end{itemize}
Clearly, the consequences mentioned above
are derivable in $\CSCL$, but $x\leftand x=x$ is not:
assume register 1 has value \fa\ and let $t=
\texttt{eq:1:\fa}\leftand \texttt{set:1:\tr}$.
Then $t$ yields \emph{true} upon evaluation in this state, while
$t\leftand t$ yields \emph{false}.
\end{example}

In~\cite{BP10} we also introduced $\CP_\rp$ (repetition-proof
\CP) for the axioms 
in \CP\ extended with these axiom schemes 
($a\in A$):
\begin{align*}
\label{CPrp1}
\tag{CPrp1}
\qquad
(x\lef a\rig y)\lef a\rig z&=(x\lef a\rig x)\lef a\rig z,\\
\label{CPrp2}
\tag{CPrp2}
\qquad
x\lef a\rig (y\lef a\rig z)&=x\lef a\rig (z\lef a \rig z).
\end{align*}
It is easily seen that the axiom schemes~\eqref{CPrp1}
and~\eqref{CPrp2} are derivable in $\CP_\con$ 
(so $\CP_\con$ is also an axiomatic extension
of $\CP_\rp$). 
We write $\CP_\rp(A)$ to make explicit that
the axioms schemes
refer to the set $A$ of atoms.

\begin{definition}
\label{def:RPSCL}
\textbf{$\RPSCL$ (repetition-proof
 short-circuit logic)}
is the short-circuit logic that implies no other 
consequences than those of the module expression
\begin{align*}
\{\tr,\neg,\leftand,a\mid a\in A\}\export(&\CP_\rp(A)\\
& + \langle\, \neg x=\fa\lef x\rig\tr\,\rangle\\
& + \langle\, x\leftand y=y\lef x\rig\fa\,\rangle).
\end{align*}
\end{definition}

\begin{table}
\hrule
\begin{align}
\label{rp1}
a\leftand(a\leftor x)&=a\leftand a
&&(\text{cf.~\eqref{CPrp1})}\\[0mm]
\label{rp2}
a\leftor(a\leftand x)&=a\leftor a
&&(\text{cf.~\eqref{CPrp2})}\\[2mm]
\label{rp3}
(a\leftor\neg a)\leftand x&=(\neg a\leftand a)\leftor x
&&\text{(cf.~\eqref{SCL4}}\\
\label{rp4}
(\neg a\leftor a)\leftand x&=(a\leftand \neg a)\leftor x
&&\text{~and \eqref{SCL9})}\\[2mm]
\label{rp5}
(a\leftand\neg a)\leftand x &=a\leftand\neg a
&&(\text{cf.~\eqref{SCL6}})\\
\label{rp6}
(\neg a\leftand a)\leftand x &=\neg a\leftand a
\\[2mm]
\label{rp9}
(x\leftand y)\leftor(a\leftand\neg a)&=
(x\leftor (a\leftand\neg a))\leftand(y\leftor (a\leftand\neg a))
&&(\text{cf.~\eqref{SCL10}})\\
\label{rp10}
(x\leftand y)\leftor(\neg a\leftand a)&=
(x\leftor (\neg a\leftand a))\leftand(y\leftor (\neg a\leftand a))
\end{align}
\hrule
\caption{Axiom schemes ($a\in A$) 
for \RPSCL}
\label{tab:rp}
\end{table}

The identities defined by \RPSCL\
include those that are 
derivable from \SCLe\ (see Table~\ref{tab:SCL})
and Table~\ref{tab:rp}.
Axiom schemes~\eqref{rp1} and \eqref{rp2} are
the  counterparts of the axiom schemes 
\eqref{CPrp1}
and \eqref{CPrp2}, and axioms schemes~\eqref{rp3} 
and \eqref{rp4} are the counterparts of the identity
$\tr\leftand x=\fa\leftor x$ and also of axiom~\eqref{SCL9}.
Axiom schemes~$\eqref{rp5}-\eqref{rp10}$ are 
the counterparts of the remaining
\SCLe-axioms that involve \tr\ or \fa.
We do not know whether the extension of \SCLe\ 
with the axiom schemes in Table~\ref{tab:rp}
provides for closed terms an axiomatization
of \RPSCL. It can be easily shown that
all axiom schemes in Table~\ref{tab:rp}
follow from \SCLe\ extended with the 
axiom schemes in Table~\ref{tab:con} for \CSCL.

\begin{example}
\label{ex:rp}
An example that illustrates 
the use of \RPSCL\ is
a combination of Example~\ref{ex:Perl} (on \FreeSCL\ and Perl)
and the above example on \CSCL.
Consider simple arithmetic expressions over the natural 
numbers (or the integers) and a program notation for imperative
programs or algorithms in which each atom is either 
a test or an assignment.
Assume that assignments
when used as conditions always evaluate to \emph{true}
(next to having their intended effect).
Then, these atoms
satisfy the axioms in Tables~\ref{tab:SCL} and~\ref{tab:rp}.
However, the assignment \texttt{(n=n+1)} clearly does not
satisfy the contraction law $a\leftand a=a$ because
$(\texttt{(n=n+1)}\leftand\texttt{(n=n+1)})\leftand
\texttt{(n==2)}$ and
$\texttt{(n=n+1)}\leftand
\texttt{(n==2)}$ can yield different evaluation results.
Hence we have a clear example of the repetition-proof
characteristic of \RPSCL.
\footnote{This is related to the work of Lars Wortel~\cite{Wortel},
in which a comparable extension of \emph{Dynamic 
Logic}~\cite{DynLog1,DynLog2} is 
defined. The difference with the ``semantics'' of 
conditions in \textbf{Perl} as described in 
Section~\ref{subsec:SCLe} is that in such a simple extension of
Dynamic Logic it is natural to assume that assignments (as atoms) 
always evaluate to \emph{true} (because they always succeed).}
\end{example}

\subsection{Side effects, full evaluation and combining short-circuit logics}
\label{subsec:varia}
In this section we consider the
role of the constants \tr\ and \fa\ in short-circuit logic, and we propose a 
formal definition of an atom having a side effect 
(given some execution environment) that is 
based on these constants. Then, we briefly
discuss connectives that prescribe \emph{full left-sequential evaluation}
and a setting in which different short-circuit logics can be useful.

\bigskip

A perhaps interesting variant of the module
\SCL\ that generically defines short-circuit logics (Definition~\ref{def:SCL})
is obtained by leaving out the constant \tr\ in the exported
signature (and thus also leaving out
$\fa$ as a definable constant).
Such a variant could be motivated by the fact that these constants are
usually absent in conditions in imperative programming (although 
they may be always used; also $\tr$ can be mimicked by a void equality 
test such as \texttt{(1==1)}, or simply by \texttt{(1)} in an expression-evaluated
language such as \textbf{Perl}). 
Observe that in ``\SCL\ without \tr'' only the
\SCLe-axioms expressing
duality, double negation shift,
and associativity (axioms~\eqref{SCL2}, 
\eqref{SCL3} and~\eqref{SCL7}, respectively) 
remain. Moreover,
these axioms then yield a complete
axiomatization of this restricted form of free valuation congruence.
However, we think that ``\SCL\ without \tr'' does not yield an appropriate 
point of view: in a logic about truth and falsity (up to and including \MSCL)
one should be able to express \emph{true} itself as a value.


Although side effects are well understood in programming, see 
e.g., \cite{BW96,Nor97}, they are often explained without a general formal 
definition. 
 We first 
quote from~\cite{Daan} a discussion about side effects in which a formal
definition is proposed:

\begin{citaat}[from~\cite{Daan}]
``If an
atom $a \in A$ does not have a side effect, then it always behaves either as
the constant $\tr$ or as the constant $\fa$ depending on the atoms that
were evaluated before it and the state of the execution environment. For $a \in
A$ and $P$ a closed \SCL-term let $[\tr / a]P$ denote the term which results from
replacing each occurrence of $a$ in $P$ by $\tr$.  Similarly, let $[\fa /
a]P$ be the term that results from replacing each occurrence of $a$ in $P$ with
$\fa$. Let $y_e$ be the function that returns the evaluation result of an
$\SCL$-term when it is evaluated in execution environment $e$. An atom $a \in
A$ has a side effect if there is some execution environment $e$ and there are
closed \SCL-terms $P$ and $Q$ with $y_e(P) = y_e(Q)$ such that either
\begin{equation*}
y_e([\tr / a]P) \neq y_e([\tr / a]Q)
\quad\text{or}\quad
y_e([\fa / a]P) \neq y_e([\fa / a]Q).
\end{equation*}


As an example consider atoms $a$ and $b$ and suppose that a side effect of $a$
is that any evaluation of $b$ that follows it will yield $\true$. Also suppose
that if $b$ were not preceded by $a$ it would yield $\false$. To make this
concrete we could imagine $a$ being a method that sets some global variable in
the execution environment and always yields $\true$. We could then see $b$ as
being a method that checks whether that variable has been set, in which case it
yields $\true$, or not, in which case it yields $\false$. Letting $e$ be some
execution environment where the global variable is not set, or alternatively
the empty execution environment, we see that $y_e(a \leftand b) = \tr =
y_e(\neg b)$ and that $y_e(\tr \leftand b) = \fa \neq y_e(\neg b)$. Hence
$a$ has a side effect by our definition.
{ [...] }
With this definition of a side effect we see that $\SCLe$, 
unlike propositional logic, preserves
side effects in the sense that 
\[\SCLe \vdash P = Q\]
implies
$\SCLe \vdash
[\tr / a]P = [\tr / a]Q\text{ and }\SCLe \vdash [\fa / a]P = [\fa / a]Q$
for all closed \SCL-terms $P$ and $Q$ and all $a \in A$. Thus, if
we adopt our proposed definition of side effects,  $\SCLe$ can be
used to reason about propositional expressions with atoms that may have side
effects.''
\end{citaat}
Of course, the definition proposed above also preserves side effects
with respect to each of the other short-circuit logics discussed 
(and implies that  \SSCL\ excludes the presence
of atoms having side effects).


We continue with a brief discussion of two
other ``Boolean operators" that are common in imperative
programming, namely those that prescribe full evaluation of their arguments.
First we consider the connective  \texttt{\&}, which in~\cite{Daan}
is called \emph{full left-sequential conjunction}, with
notation
\[x\fulland y.\]
In~\cite{Daan}, an interpretation function to evaluation trees
is defined that can be
seen as an extension of $se$:
\[se(P\fulland Q)=se(P)[\tr\mapsto se(Q),\fa\mapsto se(Q)[\tr\mapsto\fa]].\]
For example, the evaluation tree of $a\fulland b$ is the following perfect binary tree:
\[
\begin{tikzpicture}[%
      level distance=7.5mm,
      level 1/.style={sibling distance=15mm},
      level 2/.style={sibling distance=7.5mm},
      baseline=(current bounding box.center)]
      \node (a) {$a$}
        child {node (b1) {$b$}
          child {node (d1) {$\tr$}} 
          child {node (d2) {$\fa$}}
        }
        child {node (b2) {$b$}
          child {node (d5) {$\fa$}} 
          child {node (d6) {$\fa$}}
        };
      \end{tikzpicture}
\]
In a mixed setting with negation, $\leftand$ and $\leftor$, 
and the constants $\tr$
and $\fa$, the connective $\fulland$ and its dual $\fullor$
are definable:
\begin{align*}
x\fulland y&=(x\leftor(y\leftand\fa))\leftand y,\\
x\fullor y&=(x\leftand(y\leftor\tr))\leftor y.
\end{align*}
For the setting in which next to atoms only negation, $\fulland$, $\fullor$ and the 
constants $\tr$ and $\fa$ occur (and thus no short-circuit connectives), 
an equational axiomatization of $se$-equality is 
provided in~\cite{Daan}.

\bigskip

In~\cite{Par10}, Parnas writes that
\emph{Most mainline methods disparage side effects as a bad programming practice. 
Yet even in well-structured, reliable software, many components do have side 
effects; side effects are very useful in practice. It is time to investigate 
methods that deal with side effects as the normal case.}
In the following we
argue that side effects occurring from short-circuit evaluation of propositional
statements as illustrated by  
Examples~\ref{ex:Perl}-\ref{ex:rp} can be analyzed with help of a partition of
the set $A$ of atoms (or the subset of $A$ that is relevant for analysis). 
Let 
\[A=A_{\sef}\cup A_{\see},\] 
where $A_{\sef}$
contains the atoms that under no circumstance can have a side effect in 
some condition
to be analyzed (for example, a simple test as \verb+($x==2)+ in a Perl program), say \sef-atoms, and $A_\see$ is the
set of those atoms that can have a side effect.
It is obvious that terms over $A_\sef$ are subject to \SSCL, and can be
simplified or rewritten to certain standard forms using the axioms of
\SSCLe\ (or propositional logic).
The set $A_\see$ can have
various explicit subsets
\begin{align*}
A_\see~\supseteq~ A_\rp ~\supseteq~ A_\con ~\supseteq~ A_\mem,
\end{align*}
of respectively repetition-proof atoms (\rp-atoms),  contractive atoms (\con-atoms),
and memorizing atoms (\mem-atoms). Then (sub)terms containing only \rp-atoms
are subject to \RPSCL, (sub)terms containing only \con-atoms to \CSCL, and
(sub)terms containing only \mem-atoms are subject to \MSCL.
In this way propositional statements can be rewritten or simplified while preserving 
all possible side effects in a setting
with ``mixed atoms''. Furthermore, taking into account that we can translate
left-sequential connectives that prescribe \emph{full evaluation} as explained above, 
we obtain a framework that is suitable for a systematic and atom-based analysis of 
``mixed propositional statements''.

\section{Conclusions}
\label{sec:Conc}

In~\cite{BP10} we introduced proposition 
algebra using Hoare's conditional $x\lef y\rig z$
and the constants $\tr$ and $\fa$.
We distinguished various valuation
congruences that are defined by means of short-circuit 
evaluation,
and provided axiomatizations of these congruences: 
$\CP$ (four axioms) characterizes the least identifying valuation congruence
we consider, and the extension $\CP_\mem$
(one extra axiom) characterizes the
most identifying valuation congruence below propositional
logic. 
In~\cite{HMA,BP12a} we provide an alternative valuation semantics
for proposition algebra in the form of \emph{Hoare-McCarthy algebras} (HMAs)
that is more elegant than the semantical framework
introduced in~\cite{BP10}: HMA-based semantics
has the advantage that one can define a valuation congruence
without first defining the 
valuation \emph{equivalence} it is contained in. 

\bigskip

This paper arose by an attempt to answer the question
whether the extension of $\CP_\mem$ with $\neg$ and
${\leftand}$ characterizes a reasonable logic
if one restricts to axioms defined over the signature
$\{\tr,\neg,\leftand\}$ (and with $\fa$
and $\leftor$ being definable).
After having found an axiomatization of \MSCL\ 
(memorizing short-circuit logic),
we defined
\FreeSCL\ (free short-circuit logic) as the 
most basic (least identifying) short-circuit logic,
where we took \CP\ as a 
point of departure. We used the module expression
\[
\SCL=\{\tr,\neg,\leftand\}\export(\CP
+ \langle\, \neg x=\fa\lef x\rig\tr\,\rangle
+ \langle\, x\leftand y=y\lef x\rig\fa\,\rangle)
\] 
in our generic definition of a short-circuit logic
(Definition~\ref{def:SCL}) and 
proved that \SCLe\ is an axiomatization of
\FreeSCL, and that \SSCLe\ axiomatizes \SSCL\ (static short-circuit logic).
The first proof is based on normal forms for \FreeSCL. 

Furthermore, we defined \CSCL\ (contractive short-circuit logic) and
\RPSCL\ (repetition-proof short-circuit logic) and 
listed some obvious axioms for the associated
extensions of \SCLe, but we left their
completeness as open questions. We provided 
two application-oriented examples (Examples~\ref{ex:con} and~\ref{ex:rp}), 
of which the latter is interesting
because it shows a typical difference with Example~\ref{ex:Perl}
(about Perl), or, more general, with examples taken from programming
languages that use the evaluation
of expressions as their standard semantics: in Example~\ref{ex:rp} 
the (Boolean) evaluation of
an assignment yields \true\ and is independent of its possible side 
effect, and therefore this example is a ``pure example''
for the case of repetition-proof short-circuit logic. 

Finally, we argued in Section~\ref{subsec:varia}
that different short-circuit logics can be used to rewrite or 
simplify propositional statements built from 
short-circuit connectives, full left-sequential connectives, and
atoms with different kinds of side effects. 
The question whether it would be fruitful and feasible 
to apply such rewriting to real-life examples in programming
and to develop tool-support for this purpose
is of course very interesting, and we think that this paper can 
serve as a basis for such future work.
In the remainder of this section
we mention some related
work, and we conclude with some more suggestions for future work.

\paragraph{Related work}
Short-circuit conjunction \texttt{\&\&} is often defined in imperative
programming as an associative operator. However, we did not succeed in finding 
work that provides a systematic answer (or discussion) to the question 
\emph{What are the logical laws that axiomatize short-circuit evaluation}? 
In the following we mention some work that addresses side effects and evaluation
strategies in a wider sense.
In~\cite{BW96,BW98}, Black and Windley have proposed a framework that extends
Hoare axiomatic semantics which removes side effects
from expressions and treats them as separate statements. 
In~\cite{PM}, Papaspyrou and Macos describe a study of evaluation order semantics 
in expressions with side effects and provide in their Section~5 a concise
overview of related work, which includes the above-mentioned references to work on 
side effects (this overview can also be found in~\cite{Pap}).

\paragraph{Future work}
We mention some issues that were not resolved or dealt with in the current paper 
and that suggest future work.
First, we propose an investigation to the applicability of 
$\omega$-completeness of \SCLe\footnote{If
$A=\emptyset$, $\SCLe$ is certainly not $\omega$-complete
because $x\leftand\fa=\fa$ is not derivable, but all
its closed instances are.}
 and the independence of its axioms.
(In~\cite{Chris} it is proved that
\CP\ is an $\omega$-complete axiomatization of free valuation
congruence if $A$ contains at least two atoms.)
Furthermore, we recall the two open questions raised
in Section~\ref{subsec:rpcon}:
\begin{enumerate}
\item
Is the extension of \SCLe\ 
with the axiom schemes in Table~\ref{tab:con}
an axiomatization of \CSCL?
\item Is the extension of \SCLe\ 
with the axiom schemes in Table~\ref{tab:rp}
an axiomatization of \RPSCL?
\end{enumerate}
It is questionable whether the approach we use to prove the
completeness of \FreeSCL\ 
can be lifted to these cases.
This would require a congruence that identifies different
evaluation trees (cf.~the approach in Appendix~\ref{app:Mem}), while 
at the same time unique decomposition of such trees is at stake.
The different types of basic forms for contractive and
repetition-proof valuation congruence
defined in~\cite{BP10} provide a first idea for this case and 
may suggest appropriate normal forms.

Finally, we mention that in~\cite{Ber10} a connection
is proposed between short-circuit logic and instruction
sequences as studied in \emph{Program algebra}~\cite{BL,BM}.
In particular, the relation between non-atomic test
instructions (i.e.,
test instructions involving left-sequential connectives)
and their decomposition in atomic tests and jump 
instructions is 
considered, evolving into a discussion about the length
of instruction sequences and their minimization. Also, 
the paper~\cite{Ber10} contains a 
discussion about a classification of side effects, derived
from
a classification of atoms (thus partitioning $A$, but in a different way
as is proposed in Section~\ref{subsec:varia}). 
We expect that these matters might lead to future work; they
certainly formed an inspiration for \cite{Wortel}.

\section{Digression: Program algebra revisited}
\label{sec:Digr}
The focus on left-sequential conjunction 
that is typical for this paper led to some
new results on proposition algebra. 
In Section~\ref{subsec:pa} we show that the valuation congruences
that we considered can be axiomatized in a purely incremental way, and
in Section~\ref{subsec:ele} we show that static valuation congruence has a 
very concise and elegant axiomatization.

\subsection{Incremental axiomatizations}
\label{subsec:pa}
The valuation congruences
that we considered in Section~\ref{sec:Hoare} can be
axiomatized in a purely incremental way:
the axiom systems $\CP_\rp$ up to and including  
$\CP_\stat$ as defined in~\cite{BP10}
all share the axioms
of \CP\ and each next system can be defined by 
the addition of
either one or two axioms, in most cases making 
previously added axiom(s) redundant. 
For the case of $\CP_\stat$ (see Section~\ref{subsec:CPstat})
we still have to prove
the following result.

\begin{proposition}
\label{prop:3}
The axiom sets
$\CP_\stat$ and 
$(\CP_\mem+\langle\, \fa\lef x\rig\fa=\fa\,\rangle)$ 
are equally strong.
\end{proposition}

\begin{proof} We show that all axioms in the one set are 
derivable from the other set.
We first prove that
the axiom~\eqref{CPmem} is derivable
from $\CP_\stat$:
\begin{align*}
\CP_\stat\vdash~ &x\lef y\rig(z\lef u\rig(v\lef y\rig w))
\\
&=x\lef y\rig((v\lef y\rig w)\lef(\fa\lef u\rig\tr)\rig z)
&&\text{by \eqref{cp4}}\\
&=x\lef y~\rig\\
&
\phantom{=~}
((v\lef(\fa\lef u\rig\tr)\rig z)\lef y\rig
(w\lef(\fa\lef u\rig\tr)\rig z))
&&\text{by \eqref{CPstat}}\\
&= x\lef y\rig(w\lef(\fa\lef u\rig\tr)\rig z)
&&\text{by \eqref{eq:contr}}\\
&= x\lef y\rig(z\lef u\rig w),
&&\text{by \eqref{cp4}}
\end{align*}
where the contraction law~\eqref{eq:contr}, that is
$x\lef y\rig(v\lef y\rig w)=x\lef y\rig w$,
is derivable from $\CP_\stat$: replace $y$ by $\fa\lef y\rig\tr$ in 
the $\CP_\stat$-axiom~\eqref{eq:contr2}, that is,
\[(x\lef y\rig z)\lef y\rig u=x\lef y\rig u.\]
Hence $\CP_\stat\vdash \eqref{CPmem}$.
Furthermore, we showed in Section~\ref{subsec:CPstat}
(equation~\eqref{eq:Hoare})
that 
$\CP_\stat\vdash \fa\lef x\rig\fa=\fa$.

In order to show that $(\CP_\mem+\langle\, \fa\lef x\rig\fa=\fa\,\rangle)\vdash
\CP_\stat$ recall that the contraction law~\eqref{eq:contr2}
is derivable from $\CP_\mem$ (see Section~\ref{subsec:CPmem}). 
So, it remains to be proved that
$(\CP_\mem+\langle\, \fa\lef x\rig\fa=\fa\,\rangle)\vdash
\eqref{CPstat}$. 
In order to give a short proof, we use our completeness result
on \SSCL\ (Theorem~\ref{thm:SSCL}) and the
commutativity of ${\leftand}$ which is derivable in 
\SSCL\
(see Lemma~\ref{lem:comm}).
From these results it follows that
\begin{equation}
\label{eq:comm}
(\CP_\mem+\langle\, \fa\lef x\rig\fa=\fa\,\rangle)\vdash~
y\lef x\rig\fa=x\lef y\rig\fa,
\end{equation}
and with this identity we can easily derive the axiom~\eqref{CPstat}:
\begin{align*}
(\CP_\mem+~&\langle\, \fa\lef x\rig\fa=\fa\,\rangle)\vdash~(x\lef y\rig z)\lef u\rig v\\
&=(x\lef y\rig (z\lef u\rig v))\lef u\rig v
&&\text{by \eqref{CPmem'}}\\
&=(x\lef y\rig (z\lef u\rig v))\lef u\rig (z\lef u\rig v)
&&\text{by \eqref{eq:contr}}\\
&=x\lef (y\lef u\rig\fa)\rig (z\lef u\rig v)
&&\text{by \eqref{cp4}}\\
&=x\lef (u\lef y\rig\fa)\rig (z\lef u\rig v)
&&\text{by \eqref{eq:comm}}\\
&=(x\lef u\rig(z\lef u\rig v))\lef y\rig (z\lef u\rig v)
&&\text{by \eqref{cp4}}\\
&=(x\lef u\rig v)\lef y\rig (z\lef u\rig v).
&&\text{by \eqref{eq:contr}}
\end{align*}
\end{proof}

Summing up, this yields the following scheme on the
proposition algebra axiomatizations discussed:
\begin{align*}
\CP_\rp&=\CP + \eqref{CPrp1} + \eqref{CPrp2},\\
\CP_\con&\dashv\vdash\CP_\rp + \eqref{CPcr1} + \eqref{CPcr2},\\
\CP_\mem&\dashv\vdash\CP_\con + \eqref{CPmem},\\
\CP_\stat&\dashv\vdash\CP_\mem + \langle\,\fa\lef x\rig \fa=\fa\,\rangle.
\end{align*}
(In~\cite{BP10} we also define \emph{weakly memorizing 
valuation congruence} that has an axiomatization
with the same property in between $\CP_\con$ 
and $\CP_\mem$.) 

\subsection{An elegant equational basis for static valuation congruence}
\label{subsec:ele}

Another result on proposition algebra that arose from our focus
on left-sequential conjunction
concerns a concise and elegant axiomatization of static valuation
congruence (see Section~\ref{subsec:CPstat}).  
In Table~\ref{CPstat2} we introduce a set of axioms that we call
\[\CP_\stat^*\]
and we shall prove that $\CP_\stat^*$ is equivalent with $\CP_\stat$
and that its axioms are independent.
First observe that $\CP_\stat^*$ 
consists of only five axioms that are
more simple than those of $\CP_\stat$ and  
$\CP_\mem + \langle\,\fa\lef x\rig \fa=\fa\,\rangle$.\footnote{Of course, 
  ``simple and few in number'' is 
  not an easy qualification in the setting of proposition algebra: 
  with the axioms $x\lef\tr\rig y=x$ and $x\lef\fa\rig y
  =y$, each other pair of axioms $L_1 = R_1$
  and $L_2 = R_2$ can be combined using a
  fresh variabele, say $u$, to a single axiom
  $L_1\lef u\rig L_2 = R_1 \lef u\rig R_2$. So, each 
  extension of \CP\ can be defined by adding a
  single (and ugly) axiom to \CP\ (and \CP\ itself can be axiomatized
  with three axioms).}
Furthermore, it is immediately
clear that the axioms \eqref{cp3*} and \eqref{cp5}
are derivable from $\CP_\stat$: \eqref{cp3*} is the conditional
expression for $x\leftor y=y\leftor x$, and \eqref{cp5} 
is an instance of the contraction law~\eqref{eq:contr2} of $\CP_\stat$.
Hence, $\CP_\stat\vdash \CP_\stat^*$.

\begin{table}
\centering
\hrule
\begin{align*}
	\tag{\ref{cp1}}
	x \lef \tr \rig y &= x\\
	\tag{\ref{cp2}}
	x \lef \fa \rig y &= y\\
	\label{cp3*}\tag{CP3$^*$} 
	\tr \lef x \rig y  &= \tr\lef
	y\rig x\\
    \tag{\ref{cp4}}
    x \lef (y \lef z \rig u)\rig v &= 
	(x \lef y \rig v) \lef z \rig (x \lef u \rig v)\\
	\label{cp5}\tag{CP5}
\qquad	(x\lef y\rig z)\lef y\rig \fa
   &=x\lef y\rig\fa
\end{align*}
\hrule
\caption{$\CP_\stat^*$, a set of axioms for static \CP}
\label{CPstat2}
\end{table}

We now show that $\CP_\stat^*\vdash \CP_\stat$ and first discuss some
intermediate results.
Axioms~\eqref{cp3*} and \eqref{cp2} imply
axiom~\eqref{cp3}
(i.e., $\tr\lef x\rig\fa=x$), thus $\CP_\stat^*\vdash\CP$.
Furthermore, with~\eqref{cp1} we find
$\tr\lef x\rig\tr=\tr\lef\tr\rig x=\tr$.
In the extension of $\CP_\stat^*$ with the defining equations for
$\neg$ and $\leftand$ (and ${\leftor}$ as a derived connective) we 
immediately find $x\leftor y=y\leftor x$
and $x\leftor\tr=\tr$. Also, observe that
the duality principle is derivable
(because it is in \CP, cf.~Section~\ref{subsec:Hoare}),
so $x\leftand y =y\leftand x$ and 
$x\leftand \fa=\fa$. 
With
\eqref{cp5} it follows that
\begin{align}
\label{D.1}
x\leftand \neg x&=(\fa\lef x\rig\tr)\lef x\rig\fa=\fa\lef x\rig\fa=\fa,\\
\label{D.2}
x\leftand x&=(\tr\lef x\rig\fa)\lef x\rig\fa=\tr\lef x\rig\fa=x.
\end{align}
Finally, distributivity is derivable in $\CP_\stat^*$ extended 
with $\neg$, $\leftand$ and $\leftor$:
\begin{align*}
x\leftor (y\leftand z)&=\tr\lef x\rig (z\lef y\rig \fa)\\
&=\tr\lef (z\lef y\rig\fa)\rig x
&&\text{by~\eqref{cp3*}}\\
&=(\tr\lef z\rig x)\lef y\rig x
&&\text{by~\eqref{cp4} and~\eqref{cp2}}\\
&=(\tr\lef x\rig z)\lef y\rig (\tr\lef x\rig\fa)
&&\text{by~\eqref{cp3*}}\\
&=(\tr\lef x\rig z)\lef y\rig ((\tr\lef x\rig z)\lef x\rig\fa)
&&\text{by~\eqref{cp5}}\\
&=(\tr\lef x\rig z)\lef (\tr\lef y\rig x)\rig \fa
&&\text{by~\eqref{cp4} and~\eqref{cp1}}\\
&=(\tr\lef x\rig z)\lef (\tr\lef x\rig y)\rig \fa
&&\text{by~\eqref{cp3*}}\\
&=(x\leftor y)\leftand (x\leftor z).
\end{align*}

With the identity $x\lef y\rig z =
(y\leftand x)\leftor(\neg y\leftand z)$ it is now a 
simple exercise to derive the $\CP_\stat$-axiom 
\eqref{CPstat} and the contraction law~\eqref{eq:contr2}: we
implicitly use commutativity and associativity
of the binary connectives, and equations~\eqref{D.1} and
\eqref{D.2}, and  derive
\begin{align*}
(x\lef u\rig v)&\lef y\rig (z\lef u\rig v)\\
&=(y\leftand[(u\leftand x)\leftor(\neg u\leftand v)])\leftor
(\neg y\leftand[(u\leftand z)\leftor(\neg u\leftand v)])\\
&=(y\leftand u\leftand x)\leftor(y\leftand \neg u\leftand v)
\leftor (\neg y\leftand u\leftand z)\leftor (\neg y
\leftand \neg u\leftand v)\\
&=(y\leftand u\leftand x)\leftor((y\leftor\neg y)
\leftand \neg u\leftand v)
\leftor (\neg y\leftand u\leftand z)\\
&=(u\leftand [(y\leftand x)\leftor(\neg y\leftand z)])
\leftor(\neg u \leftand v)\\
&=(x\lef y\rig z)\lef u\rig v,
\end{align*}
and
\begin{align*}
(x\lef y\rig z)\lef y\rig u
&=(y\leftand [(y\leftand x)\leftor(\neg y \leftand z)])\leftor 
(\neg y\leftand u)\\
&=(y\leftand y\leftand  x)\leftor(y\leftand \neg y \leftand z)
\leftor (\neg y\leftand u)\\
&=(y\leftand  x)\leftor (\neg y\leftand u)\\
&=x\lef y\rig u.
\end{align*}
Hence, $\CP_\stat^*\vdash \CP_\stat$. In~\cite{Chris} it is proved that $\CP_\stat$ 
is $\omega$-complete, so $\CP_\stat^*$ is $\omega$-complete
as well.

Next, we show that the axioms of $\CP_\stat^*$
are independent.
Inspired by~\cite{Chris,Sioson} we define five different 
independence-models, 
where $P,Q$ and $R$
range over closed terms, 
$a$ and $b$ are two atoms,
and $\phi$ is the interpretation function. The domain of
the first two
independence-models is $\{\tr,\fa\}$ and the interpretation 
refers to propositional logic.

\begin{enumerate}
\item The model defined by
$\phi(\tr)=\fa$, $\phi(\fa)=\phi(a)=\phi(b)=\tr$ and
\[\phi(P\lef Q\rig R)=\phi(Q)\wedge\phi(R)\] satisfies
all axioms but~\eqref{cp1}: $\phi(a)=\tr$ and $\phi(a\lef\tr\rig\fa)=\fa$.
\item The model defined by
$\phi(\tr)=\phi(a)=\phi(b)=\tr$, $\phi(\fa)=\fa$ and
\[\phi(P\lef Q\rig R)=\phi(P)\] satisfies
all axioms but~\eqref{cp2}: $\phi(a)=\tr$, while $\phi(\fa\lef\fa\rig a)=\fa$.
\item Any model for $\CP_\mem$ that does
not satisfy $T\lef a\rig b=T\lef b\rig a$,
satisfies all axioms but~\eqref{cp3*}. Such models
exist and are discussed in~\cite{BP10,HMA}.
\item The model with the natural numbers as its domain, 
and the interpretation defined by
\begin{align*}
\phi(\tr)&=0,~
\phi(\fa)=1,~
\phi(a)=2,~
\phi(b)=3,\\
\phi(P\lef Q\rig R)&=
\begin{cases}
\phi(P)&\text{if $\phi(Q)=0$,}\\
\phi(R)&\text{if $\phi(Q)=1$,}\\
\phi(Q)\cdot\phi(R)&\text{otherwise,}
\end{cases} 
\end{align*}
satisfies
all axioms but~\eqref{cp4}: first observe that
$\phi(\fa\lef a\rig \tr)=\phi(a)\cdot\phi(\tr)=0$ and
$\phi(\fa\lef\tr\rig\tr)=\phi(\fa)=1$, so 
\[\phi(\fa\lef(\fa\lef a\rig \tr)\rig \tr)=\phi(\fa)=1,\] 
while 
$\phi((\fa\lef\fa\rig \tr)\lef a\rig
(\fa\lef\tr\rig \tr))=\phi(a)\cdot\phi(\fa)=2$.

\item The model with the integers numbers as its domain, 
and the interpretation defined by
\begin{align*}
\phi(\tr)&=0,~
\phi(\fa)=1,~
\phi(a)=2,~\phi(b)=3,\\
\phi(P\lef Q\rig R)&=(1-\phi(Q))\cdot\phi(P)+
\phi(Q)\cdot\phi(R),
\end{align*}
satisfies
all axioms but~\eqref{cp5}: $\phi(\tr\lef a\rig \fa)=2$, 
while
\[\phi((\tr\lef a\rig \fa)\lef a\rig\fa)
=(1-2)\cdot 2+2\cdot 1=0.\]
\end{enumerate}

Finally, we note that $\CP_\stat^*$ has an elegant, 
symmetric property: exchanging~\eqref{cp3*} with 
$x\lef y\rig\fa=y\lef x\rig\fa$ and \eqref{cp5} with 
$\tr\lef x\rig(y\lef x\rig z)=\tr\lef x\rig z$ 
yields an equally strong axiomatization. 
Moreover, the resulting axioms
are also independent (modifying the above independence 
proofs is not difficult).

\section*{Acknowledgement}
We wish to thank two anonymous referees for their careful reviews
and constructive comments.

\addcontentsline{toc}{section}{References}

\newpage

\appendix

\section{Memorizing evaluations}
\label{app:Mem}
We define \emph{memorizing evaluation trees} (me-trees) in order to
model evaluations in which the evaluation result of an atom that was 
evaluated before is memorized.

\begin{definition}
Let $B\in\{\tr,\fa\}$.
The unary 
\textbf{memorizing short-circuit evaluation} function $mse:\PS\to \T$
is defined by
\[mse(P)=m(se(P)),\]
where the function $m:\T\to\T$ is defined for all $a\in A$ by
\begin{align*}
m(B)&=B,\\
m(X\unlhd a\unrhd Y)&=(m(X\sub a{\tr}))\unlhd a\unrhd(m(Y\sub a{\fa})),
\end{align*}
and the node reduction function
$\sub{a}B: \T\to\T$, notation $X\sub{a}B$, is defined by
\begin{align*}
\tr\sub aB&=\tr,\\
\fa\sub aB&=\fa,\\
(X\unlhd a\unrhd Y)\sub a\tr&=X\sub a{\tr},\\
(X\unlhd a\unrhd Y)\sub a\fa&=Y\sub a{\fa},\\
(X\unlhd b\unrhd Y)\sub aB&=X\sub aB\unlhd b\unrhd Y\sub aB
&&\text{if }b\ne a.
\end{align*}
For each closed term in \PS, we shall refer to $mse(P)$ as an
\textbf{me-tree} (memorizing evaluation tree).
\end{definition}
So, $mse$ maps a propositional statement $P$ to $se(P)$ and then
reduces $se(P)$ to an me-tree in the following way:
if $a$ is the root of $se(P)$ then each 
proper subtree in the \tr-branch of $se(P)$ with root $a$ is 
reduced to its \tr-branch, and each 
proper subtree in the \fa-branch of $se(P)$ with root $a$ is 
reduced to its \fa-branch. Then the same procedure is repeated for all nodes 
at depth 1, and so forth.

A simple example (using the extension to $\leftand$):
$mse(a\leftand (b\leftand a))=m(X)$ with $X$ the following tree
\[
\begin{tikzpicture}[%
level distance=7.5mm,
level 1/.style={sibling distance=30mm},
level 2/.style={sibling distance=15mm},
level 3/.style={sibling distance=7.5mm}
]
\node (a) {$a$}
  child {node (b1) {$b$}
    child {node (c1) {$a$}
      child {node (d1) {$\tr$}} 
      child {node (d2) {$\fa$}}
    }
    child {node (c2) {$\fa$}
    }
  }
  child {node (b2) {$\fa$}
  };
\end{tikzpicture}
\]
and thus $m(X)=(\tr\unlhd b\unrhd\fa)\unlhd a\unrhd\fa$. 
Note that $m(X)=mse(a\leftand b)=se(a\leftand b)$. 
Also, $mse(a\leftand (a\leftand a))=\tr\unlhd a\unrhd \fa$, thus
$mse(a)=mse(a\leftand (a\leftand a))=\tr\unlhd a\unrhd \fa$.
For $P,Q\in\PS$, we write
\[P=_\mem Q \iff mse(P)=mse(Q),\]
and the relation $=_\mem$ is called \emph{memorizing valuation congruence}.

\begin{theorem}
\label{thm:mem}
For all $P,Q\in\PS$, 
$\CP_\mem\vdash P=Q\iff P=_\mem Q$.
\end{theorem}

\begin{proof}
In~\cite{BP10} it is shown that memorizing valuation congruence $=_\mem$ 
as defined in that paper coincides with equality on a particular subset
of the basic forms (Definition~\ref{def:basic}), namely those basic forms
in which no atom occurs more than once. This subset of basic forms
coincides with the image of the function $mse$.
Using these basic forms, it easily follows that $=_\mem$ is a congruence relation 
and that all $\CP_\mem$-axioms are sound, which proves $\Longrightarrow$.

With respect to $\Longleftarrow$ we prove in~\cite{BP10} that 
$\CP_\mem$ axiomatizes $=_\mem$ by further reasoning on this particular 
subset of basic forms.
\end{proof}
We conjecture that those \SNF-terms (see Definition~\ref{def:snf})
in which no atom occurs more than once
constitute a set of \emph{normal forms}
for \MSCL, but we leave this matter open for future work.

\section{Correctness of the normalization function \nfs}
\label{app:nf}
In order to prove that $\nfs: \ST \to \SNF$ is indeed a normalization function
we need to prove that for all $\SCL$-terms $P$, $\nfs(P)$ terminates, $\nfs(P)
\in \SNF$ and $\EqFSCL \vdash \nfs(P) = P$. To arrive at this result, we prove
several intermediate results about the functions $\nfs^n$ and $\nfs^c$ in the
order in which their definitions were presented in Section~\ref{subsec:snf}. For
the sake of brevity we will not explicitly prove that these functions
terminate. To see that each function terminates consider that a termination
proof would closely mimic the proof structure of the lemmas dealing with the
grammatical categories of the images of these functions.

\begin{lemma}
\label{lem:ptpfs}
For all $P\in P^\fa$ and $Q\in P^\tr$, $\EqFSCL \vdash P = P \sleftand
x$ and $\EqFSCL \vdash Q = Q \sleftor x$.
\end{lemma}
\begin{proof}
We prove both claims simultaneously by induction. In the base case we have
$\fa = \fa \sleftand x$ by axiom~\eqref{SCL6}. The base case for the
second claim follows from that for the first claim by duality.

For the induction we have $(a \sleftor P_1) \sleftand P_2 = (a
\sleftor P_1) \sleftand (P_2 \sleftand x)$ by the induction
hypothesis and the result follows from \eqref{SCL7}. For the second claim we
again appeal to duality.
\end{proof}

The equality we showed as an example in Lemma~\ref{lem:seqs} 
(Section~\ref{subsec:SCLe}) will prove
useful in this appendix, as will the following equalities, which also deal
with terms of the form $x \sleftand \fa$ and $x \sleftor \tr$.

\begin{lemma}
\label{lem:seqs2}
The following equations can all be derived from $\EqFSCL$.
\begin{enumerate}
\item
$(x \sleftor (y \sleftand \fa)) \sleftand (z \sleftand \fa) = (\neg
  x \sleftor (z \sleftand \fa)) \sleftand (y \sleftand \fa),$
  \label{eq:b2}
\item
$(x \sleftand (y \sleftor \tr)) \sleftor (z \sleftand \fa) 
= (x
  \sleftor (z \sleftand \fa)) \sleftand (y \sleftor \tr),$
  \label{eq:b3}
\item $(x \sleftor \tr) \sleftand \neg y = \neg((x \sleftor \tr) \sleftand
  y),$
  \label{eq:b4}
\item $(x \sleftand (y \sleftand (z \sleftor \tr))) \sleftor
  (w \sleftand (z \sleftor \tr)) = ((x \sleftand y) \sleftor w) \sleftand
  (z \sleftor \tr),$
  \label{eq:b5}
\item $(x \sleftor ((y \sleftor \tr) \sleftand (z \sleftand \fa)))
  \sleftand ((w \sleftor \tr) \sleftand (z \sleftand \fa)) =
  ((x \sleftand (w \sleftor \tr)) \sleftor (y \sleftor \tr)) \sleftand (z
  \sleftand \fa),$
  \label{eq:b6}
\item $(x \sleftor ((y \sleftor \tr) \sleftand (z \sleftand \fa)))
  \sleftand (w \sleftand \fa) = ((\neg x \sleftand (y \sleftor \tr))
  \sleftor (w \sleftand \fa)) \sleftand (z \sleftand \fa).$
  \label{eq:b7}
\end{enumerate}
\end{lemma}

\begin{proof}
We derive the equations in order in Table~\ref{tab:proof}.
\end{proof}

\begin{table}
\centering
\hrule
\small\begin{align*}
(x&\sleftor (y \sleftand \fa)) \sleftand (z \sleftand
  \fa)\\
&= (\neg x \sleftor (z \sleftand \fa)) \sleftand ((y \sleftand \fa)
  \sleftand (z \sleftand \fa))
&&\textrm{by Lemma~\ref{lem:seqs}} \\
&= (\neg x \sleftor (z \sleftand \fa)) \sleftand (y \sleftand \fa),
&&\textrm{by \eqref{SCL6} and \eqref{SCL7}} 
\\[2mm]
(x&\sleftand (y \sleftor \tr)) \sleftor (z \sleftand
  \fa)\\
&= (x \sleftor (z \sleftand \fa)) \sleftand ((y \sleftor \tr) \sleftor (z
  \sleftand \fa))
&&\textrm{by \eqref{SCL10}} \\
&= (x \sleftor (z \sleftand \fa)) \sleftand (y \sleftor \tr),
&&\textrm{by $\eqref{SCL6}'$ and $\eqref{SCL7}'$}
\\[2mm]
(x &\sleftor \tr)\sleftand \neg y\\
&= \neg((\neg x \sleftand \fa) \sleftor y)
&&\textrm{by duality} \\
&= \neg((x \sleftand \fa) \sleftor y)
&&\textrm{by \eqref{SCL8}} \\
&= \neg((x \sleftor \tr) \sleftand y),
&&\textrm{by \eqref{SCL9}}
\\[2mm]
(x &\sleftand (y \sleftand (z \sleftor \tr))) \sleftor
  (w \sleftand (z \sleftor \tr)) \\
&= ((x \sleftand y) \sleftand (z \sleftor \tr)) \sleftor
  (w \sleftand (z \sleftor \tr))
&&\textrm{by \eqref{SCL7}} \\
&= ((x \sleftand y) \sleftor w) \sleftand (z \sleftor \tr),
&&\textrm{by $\eqref{SCL10}'$}
\\[2mm]
(x &\sleftor ((y \sleftor \tr) \sleftand (z \sleftand \fa)))
  \sleftand ((w \sleftor \tr) \sleftand (z \sleftand \fa)) \\
&= (x \sleftor ((y \sleftand \fa) \sleftor (z \sleftand \fa)))
\\
&\phantom{=~} \sleftand~((w \sleftor \tr) \sleftand (z \sleftand \fa)) 
&&\textrm{by \eqref{SCL9}} \\
&= ((x \sleftor (y \sleftand \fa)) \sleftor (z \sleftand \fa))
\\
&\phantom{=~}\sleftand~((w \sleftor \tr) \sleftand (z \sleftand \fa)) 
&&\textrm{by $\eqref{SCL7}'$} \\
&= (\neg(x \sleftor (y \sleftand \fa)) \sleftor (w \sleftor \tr))
  \sleftand (z \sleftand \fa)
&&\textrm{by Lemma~\ref{lem:seqs}} \\
&= ((\neg x \sleftand (\neg y \sleftor \tr)) \sleftor (w \sleftor \tr))
  \sleftand (z \sleftand \fa)
&&\textrm{by duality} \\
&= ((\neg x \sleftand (y \sleftor \tr)) \sleftor (w \sleftor \tr))
  \sleftand (z \sleftand \fa)
&&\textrm{by $\eqref{SCL8}'$} \\
&= ((\neg x \sleftand (y \sleftor \tr)) \sleftor (w \sleftor (\tr
  \sleftor (y \sleftor \tr))))\\
&\phantom{=~}\sleftand~ (z \sleftand \fa)
&&\textrm{by $\eqref{SCL6}'$} \\
&= ((\neg x \sleftand (y \sleftor \tr)) \sleftor ((w \sleftor \tr)
  \sleftor (y \sleftor \tr)))\\
&\phantom{=~}\sleftand~(z \sleftand \fa)
&&\textrm{by $\eqref{SCL7}'$} \\
&= ((x \sleftand (w \sleftor \tr)) \sleftor (y \sleftor \tr))
  \sleftand (z \sleftand \fa),
&&\textrm{by Lemma~$\ref{lem:seqs}'$} 
\\[2mm]
(x&\sleftor ((y \sleftor \tr) \sleftand (z \sleftand
  \fa))) \sleftand (w \sleftand \fa) \\
&= (\neg x \sleftor (w \sleftand \fa)) \sleftand (((y \sleftor \tr)
  \sleftand (z \sleftand \fa))\\
&\phantom{=~}\sleftand~ (w \sleftand \fa))
&&\textrm{by Lemma~\ref{lem:seqs}} \\
&= (\neg x \sleftor (w \sleftand \fa)) \sleftand ((y \sleftor \tr)
  \sleftand (z \sleftand \fa))
&&\textrm{by \eqref{SCL6} and \eqref{SCL7}} \\
&= ((\neg x \sleftor (w \sleftand \fa)) \sleftand (y \sleftor \tr))
  \sleftand (z \sleftand \fa)
&&\textrm{by \eqref{SCL7}} \\
&= ((\neg x \sleftand (y \sleftor \tr)) \sleftor ((w \sleftand \fa)
  \sleftand (y \sleftor \tr)))\\
&\phantom{=~}\sleftand~(z \sleftand \fa)
&&\textrm{by $\eqref{SCL10}'$} \\
&= ((\neg x \sleftand (y \sleftor \tr)) \sleftor (w \sleftand \fa))
  \sleftand (z \sleftand \fa).
&&\textrm{by \eqref{SCL6} and \eqref{SCL7}}
  &\qedhere
\end{align*}
\hrule
\caption{Derivations for the proof of Lemma~\ref{lem:seqs2}}
\label{tab:proof}
\end{table}

\begin{lemma}
\label{lem:nfsn}
For all $P \in \SNF$, if $P$ is a $\tr$-term then $\nfs^n(P)$ is an
$\fa$-term, if it is an $\fa$-term then $\nfs^n(P)$ is a $\tr$-term, if
it is a $\tr$-$*$-term then so is $\nfs^n(P)$, and
\begin{equation*}
\EqFSCL \vdash \nfs^n(P) = \neg P.
\end{equation*}
\end{lemma}
\begin{proof}
We first prove the claims for $\tr$-terms, by induction on $P^\tr$.  In the
base case $\nfs^n(\tr) = \fa$ by \eqref{eq:nfsn1}, so
$\nfs^n(\tr)$ is an
$\fa$-term. The claim that $\EqFSCL \vdash \nfs^n(\tr) = \neg \tr$ is
immediate by \eqref{SCL1}. For the inductive case we have that $\nfs^n((a
\sleftand P^\tr) \sleftor Q^\tr) = (a \sleftor \nfs^n(Q^\tr)) \sleftand
\nfs^n(P^\tr)$ by \eqref{eq:nfsn2}, where we assume that $\nfs^n(P^\tr)$ and 
$\nfs^n(Q^\tr)$
are $\fa$-terms and that $\EqFSCL \vdash \nfs^n(P^\tr) = \neg P^\tr$
and $\EqFSCL \vdash \nfs^n(Q^\tr) = \neg Q^\tr$. It follows from the
induction hypothesis that $\nfs^n((a \sleftand P^\tr) \sleftor Q^\tr)$ is
an $\fa$-term. Furthermore, noting that by the induction hypothesis we may
assume that $\nfs^n(P^\tr)$ and $\nfs^n(Q^\tr)$ are $\fa$-terms, we
have:
\begin{align*}
\nfs^n((a&\sleftand P^\tr) \sleftor Q^\tr)\\
&= (a \sleftor \nfs^n(Q^\tr)) \sleftand \nfs^n(P^\tr)
&&\textrm{by \eqref{eq:nfsn2}} \\
&= (a \sleftor (\nfs^n(Q^\tr) \sleftand \fa)) \sleftand (\nfs^n(P^\tr)
  \sleftand \fa)
&&\textrm{by Lemma~\ref{lem:ptpfs}} \\
&= (\neg a \sleftor (\nfs^n(P^\tr) \sleftand \fa)) \sleftand
  (\nfs^n(Q^\tr) \sleftand \fa )
&&\textrm{by Lemma~\ref{lem:seqs2}.\ref{eq:b2}} \\
&= (\neg a \sleftor \nfs^n(P^\tr)) \sleftand \nfs^n(Q^\tr)
&&\textrm{by Lemma~\ref{lem:ptpfs}} \\
&= (\neg a \sleftor \neg P^\tr) \sleftand \neg Q^\tr
&&\textrm{by induction hypothesis} \\
&= \neg((a \sleftand P^\tr) \sleftor Q^\tr).
&&\textrm{by \eqref{SCL2} and its dual}
\end{align*}

For $\fa$-terms we prove our claims by induction on $P^\fa$. In the base
case $\nfs^n(\fa) = \tr$ by \eqref{eq:nfsn3}, so $\nfs^n(\fa)$ is a
$\tr$-term. The claim that $\EqFSCL \vdash \nfs^n(\fa) = \neg \fa$ is
immediate by the dual of \eqref{SCL1}. For the inductive case we have that
$\nfs^n((a \sleftor P^\fa) \sleftand Q^\fa) = (a \sleftand
\nfs^n(Q^\fa)) \sleftor \nfs^n(P^\fa)$ by \eqref{eq:nfsn4}, where we assume that
$\nfs^n(P^\fa)$ and $\nfs^n(Q^\fa)$ are $\tr$-terms and that $\EqFSCL
\vdash \nfs^n(P^\fa) = \neg P^\fa$ and $\EqFSCL \vdash \nfs^n(Q^\fa) =
\neg Q^\fa$. It follows from the induction hypothesis that $\nfs^n((a
\sleftor P^\fa) \sleftand Q^\fa)$ is a $\tr$-term. Furthermore, noting
that by the induction hypothesis we may assume that $\nfs^n(P^\fa)$ and
$\nfs^n(Q^\fa)$ are $\tr$-terms, the proof of derivably equality is dual
to that for $\nfs^n((a \sleftand P^\tr) \sleftor Q^\tr)$.

To prove the lemma for $\tr$-$*$-terms we first verify that the auxiliary
function $\nfs^n_1$ returns a $*$-term and that for any $*$-term $P$, $\EqFSCL
\vdash \nfs^n_1(P) = \neg P$. We show this by induction on the number of
$\ell$-terms in $P$. For the base cases it is immediate by the above cases
for $\tr$-terms and $\fa$-terms that $\nfs^n_1(P)$ is a $*$-term.
Furthermore, if $P$ is an $\ell$-term with a positive first atom we
have:
\begin{align*}
\nfs^n_1((a&\sleftand P^\tr) \sleftor Q^\fa)\\
&= (\neg a \sleftand \nfs^n(Q^\fa)) \sleftor \nfs^n(P^\tr)
&&\textrm{by \eqref{eq:nfsn6}} \\
&= (\neg a \sleftand (\nfs^n(Q^\fa) \sleftor \tr)) \sleftor
  (\nfs^n(P^\tr) \sleftand \fa)
&&\textrm{by Lemma~\ref{lem:ptpfs}} \\
&= (\neg a \sleftor (\nfs^n(P^\tr) \sleftand \fa)) \sleftand
  (\nfs^n(Q^\fa) \sleftor \tr)
&&\textrm{by Lemma~\ref{lem:seqs2}.\ref{eq:b3}} \\
&= (\neg a \sleftor \nfs^n(P^\tr)) \sleftand \nfs^n(Q^\fa)
&&\textrm{by Lemma~\ref{lem:ptpfs}} \\
&= (\neg a \sleftor \neg P^\tr) \sleftand \neg Q^\fa
&&\textrm{by induction hypothesis} \\
&= \neg((a \sleftand P^\tr) \sleftor Q^\fa).
&&\textrm{by \eqref{SCL2} and its dual}
\end{align*}
If $P$ is an $\ell$-term with a negative first atom the proof proceeds
the same, substituting $\neg a$ for $a$ and applying \eqref{eq:nfsn7}
and \eqref{SCL3} where
needed. For the inductive step we assume that the result holds for all
$*$-terms with fewer $\ell$-terms than $P^* \sleftand Q^d$ and $P^* \sleftor
Q^c$. By \eqref{eq:nfsn8} and \eqref{eq:nfsn9},
each application of $\nfs^n_1$ changes the main connective
(not occurring inside an $\ell$-term) and hence the result is a $*$-term.
Derivable equality is, given the induction hypothesis, an instance of (the dual
of) \eqref{SCL2}.

With this result we can now see that $\nfs^n(P^\tr \sleftand Q^*)$ is indeed
a $\tr$-$*$-term. We note that, by the above, Lemma~\ref{lem:ptpfs}
implies that $\neg P^\tr = \neg P^\tr \sleftand \fa$. Now we find that:
\begin{align*}
\nfs^n(P^\tr \sleftand Q^*)
&= P^\tr \sleftand \nfs^n_1(Q^*)
&&\textrm{by \eqref{eq:nfsn5}} \\
&= P^\tr \sleftand \neg Q^*
&&\textrm{as shown above} \\
&= (P^\tr \sleftor \tr) \sleftand \neg Q^*
&&\textrm{by Lemma~\ref{lem:ptpfs}} \\
&= \neg((P^\tr \sleftor \tr) \sleftand Q^*)
&&\textrm{by Lemma~\ref{lem:seqs2}.\ref{eq:b4}} \\
&= \neg(P^\tr \sleftand Q^*).
&&\textrm{by Lemma~\ref{lem:ptpfs}}
\end{align*}
Hence for all $P \in \SNF$, $\EqFSCL \vdash \nfs^n(P) = \neg P$.
\end{proof}

\begin{lemma}
\label{lem:nfsc1}
For any $\tr$-term $P$ and $Q \in \SNF$, $\nfs^c(P, Q)$ has the same
grammatical category as $Q$ and
\begin{equation*}
\EqFSCL \vdash \nfs^c(P, Q) = P \sleftand Q.
\end{equation*}
\end{lemma}
\begin{proof}
By induction on the complexity of the $\tr$-term. In the base case we see
that $\nfs^c(\tr, P) = P$ by \eqref{eq:nfsc1}, which is clearly of the same grammatical category
as $P$. Derivable equality is an instance of \eqref{SCL4}.

For the inductive step we assume that the result holds for all $\tr$-terms of
lesser complexity than $(a \sleftand P^\tr)\leftor Q^\tr$. The claim about the grammatical
category follows immediately from the induction hypothesis. For the claim about
derivable equality we make a case distinction on the grammatical category of
the second argument. If the second argument is a $\tr$-term, we prove
derivable equality as follows:
\begin{align*}
\nfs^c((a&\sleftand P^\tr) \sleftor Q^\tr, R^\tr) \\
&= (a \sleftand \nfs^c(P^\tr, R^\tr)) \sleftor \nfs^c(Q^\tr, R^\tr)
&&\textrm{by \eqref{eq:nfsc2}} \\
&= (a \sleftand (P^\tr \sleftand R^\tr)) \sleftor (Q^\tr \sleftand
  R^\tr)
&&\textrm{by induction hypothesis} \\
&= (a \sleftand (P^\tr \sleftand (R^\tr \sleftor \tr))) \sleftor (Q^\tr
  \sleftand (R^\tr \sleftor \tr))
&&\textrm{by Lemma~\ref{lem:ptpfs}} \\
&= ((a \sleftand P^\tr) \sleftor Q^\tr) \sleftand (R^\tr \sleftor \tr)
&&\textrm{by Lemma~\ref{lem:seqs2}.\ref{eq:b5}} \\
&= ((a \sleftand P^\tr) \sleftor Q^\tr) \sleftand R^\tr.
&&\textrm{by Lemma~\ref{lem:ptpfs}}
\end{align*}
If the second argument is an $\fa$-term, we prove derivable equality as
follows:
\begin{align*}
\nfs^c((a &\sleftand P^\tr) \sleftor Q^\tr, R^\fa) \\
&= (a \sleftor \nfs^c(Q^\tr, R^\fa)) \sleftand \nfs^c(P^\tr, R^\fa)
&&\textrm{by \eqref{eq:nfsc3}} \\
&= (a \sleftor (Q^\tr \sleftand R^\fa)) \sleftand (P^\tr \sleftand
  R^\fa)
&&\textrm{by induction hypothesis} \\
&= (a \sleftor ((Q^\tr \sleftor \tr) \sleftand (R^\fa \sleftand \fa)))
  \sleftand~\\
  &\phantom{~=}((P^\tr \sleftor \tr) \sleftand (R^\fa \sleftand \fa))
&&\textrm{by Lemma~\ref{lem:ptpfs}} \\
&= ((a \sleftand (P^\tr \sleftor \tr)) \sleftor (Q^\tr \sleftor \tr))
  \sleftand (R^\fa \sleftand \fa)
&&\textrm{by Lemma~\ref{lem:seqs2}.\ref{eq:b6}} \\
&= ((a \sleftand P^\tr) \sleftor Q^\tr) \sleftand R^\fa.
&&\textrm{by Lemma~\ref{lem:ptpfs}}
\end{align*}

If the second argument is $\tr$-$*$-term, the result follows 
by \eqref{eq:nfsc4} from the case
where the second argument is a $\tr$-term, and \eqref{SCL7}.
\end{proof}

\begin{lemma}
\label{lem:nfsc2}
For any $\fa$-term $P$ and $Q \in \SNF$, $\nfs^c(P, Q)$ is an
$\fa$-term and
\begin{equation*}
\EqFSCL \vdash \nfs^c(P, Q) = P \sleftand Q.
\end{equation*}
\end{lemma}
\begin{proof}
The grammatical result is immediate by \eqref{eq:nfsc5}
and the claim about derivable equality
follows from Lemma~\ref{lem:ptpfs}, \eqref{SCL7} and \eqref{SCL6}.
\end{proof}

\begin{lemma}
\label{lem:nfsc3}
For any $\tr$-$*$-term $P$ and $\tr$-term $Q$, $\nfs^c(P, Q)$ has the same
grammatical category as $P$ and
\begin{equation*}
\EqFSCL \vdash \nfs^c(P, Q) = P \sleftand Q.
\end{equation*}
\end{lemma}
\begin{proof}
By \eqref{eq:nfsc6} and \eqref{SCL7}
it suffices to prove the claims for $\nfs^c_1$, i.e., that
$\nfs^c_1(P^*, Q^\tr)$ is a $*$-term and that $\EqFSCL \vdash \nfs^c_1(P^*,
Q^\tr) = P^* \sleftand Q^\tr$. We prove this by induction on the number of
$\ell$-terms in $P^*$. In the base case we deal with $\ell$-terms and the
grammatical claim follows from Lemma~\ref{lem:nfsc1}. We prove derivable
equality as follows, letting $\hat{a} \in \{a, \neg a\}$:
\begin{align*}
\nfs^c_1((\hat{a} \sleftand P^\tr) \sleftor Q^\fa, R^\tr)
&= (\hat{a} \sleftand \nfs^c(P^\tr, R^\tr)) \sleftor Q^\fa
&&\textrm{by \eqref{eq:nfsc7}, \eqref{eq:nfsc8}} \\
&= (\hat{a} \sleftand (P^\tr \sleftand R^\tr)) \sleftor Q^\fa
&&\textrm{by Lemma~\ref{lem:nfsc1}} \\
&= ((\hat{a} \sleftand P^\tr) \sleftand R^\tr) \sleftor Q^\fa
&&\textrm{by \eqref{SCL7}} \\
&= ((\hat{a} \sleftand P^\tr) \sleftand (R^\tr \sleftor \tr))
  \sleftor (Q^\fa \sleftand \fa)
&&\textrm{by Lemma~\ref{lem:ptpfs}} \\
&= ((\hat{a} \sleftand P^\tr) \sleftor (Q^\fa \sleftand \fa))
  \sleftand (R^\tr \sleftor \tr)
&&\textrm{by Lemma~\ref{lem:seqs2}.\ref{eq:b3}} \\
&= ((\hat{a} \sleftand P^\tr) \sleftor Q^\fa) \sleftand R^\tr.
&&\textrm{by Lemma~\ref{lem:ptpfs}}
\end{align*}

For the induction step we assume that the result holds for all $*$-terms with
fewer $\ell$-terms than $P^* \sleftand Q^d$ and $P^* \sleftor Q^c$.  In the
case of conjunctions the results follow from \eqref{eq:nfsc9}, the induction 
hypothesis, and \eqref{SCL7}. 
In the case of disjunctions the results follow immediately
from \eqref{eq:nfsc10}, the induction hypothesis, Lemma~\ref{lem:ptpfs}, 
and the dual of \eqref{SCL10}.
\end{proof}

\begin{lemma}
\label{lem:nfsc4}
For any $\tr$-$*$-term $P$ and $\fa$-term $Q$, $\nfs^c(P, Q)$ is an
$\fa$-term and
\begin{equation*}
\EqFSCL \vdash \nfs^c(P, Q) = P \sleftand Q.
\end{equation*}
\end{lemma}
\begin{proof}
By \eqref{eq:nfsc11}, Lemma~\ref{lem:nfsc1} and \eqref{SCL7} it suffices 
to prove that
$\nfs^c_2(P^*, Q^\fa)$ is an $\fa$-term and that $\EqFSCL \vdash
\nfs^c_2(P^*, Q^\fa) = P^* \sleftand Q^\fa$. We prove this by induction
on the number of $\ell$-terms in $P^*$. In the base case we deal with
$\ell$-terms and the grammatical claim follows from Lemma~\ref{lem:nfsc1}. We
derive the remaining claim for $\ell$-terms with positive first atoms
as:
\begin{align*}
\nfs^c_2((a \sleftand P^\tr) \sleftor Q^\fa, R^\fa)
&= (a \sleftor Q^\fa) \sleftand \nfs^c(P^\tr, R^\fa)
&&\textrm{by \eqref{eq:nfsc12}} \\
&= (a \sleftor Q^\fa) \sleftand (P^\tr \sleftand R^\fa)
&&\textrm{by Lemma~\ref{lem:nfsc1}} \\
&= ((a \sleftor Q^\fa) \sleftand P^\tr) \sleftand R^\fa
&&\textrm{by \eqref{SCL7}} \\
&= ((a \sleftor (Q^\fa \sleftand \fa)) \sleftand (P^\tr \sleftor
  \tr)) \sleftand R^\fa
&&\textrm{by Lemma~\ref{lem:ptpfs}} \\
&= ((a \sleftand (P^\tr \sleftor \tr)) \sleftor (Q^\fa \sleftand
  \fa)) \sleftand R^\fa
&&\textrm{by Lemma~\ref{lem:seqs2}.\ref{eq:b3}} \\
&= ((a \sleftand P^\tr) \sleftor Q^\fa) \sleftand R^\fa.
&&\textrm{by Lemma~\ref{lem:ptpfs}}
\end{align*}
For $\ell$-terms with negative first atoms we derive:
\begin{align*}
\nfs^c_2((&\neg a\sleftand P^\tr) \sleftor Q^\fa, R^\fa) \\
&= (a \sleftor \nfs^c(P^\tr, R^\fa)) \sleftand Q^\fa
&&\textrm{by \eqref{eq:nfsc13}} \\
&= (a \sleftor (P^\tr \sleftand R^\fa)) \sleftand Q^\fa
&&\textrm{by induction hypothesis} \\
&= (a \sleftor ((P^\tr \sleftor \tr) \sleftand (R^\fa \sleftand
  \fa))) \sleftand (Q^\fa \sleftand \fa)
&&\textrm{by Lemma~\ref{lem:ptpfs}} \\
&= ((\neg a \sleftand (P^\tr \sleftor \tr)) \sleftor (Q^\fa \sleftand
  \fa)) \sleftand (R^\fa \sleftand \fa)
&&\textrm{by Lemma~\ref{lem:seqs2}.\ref{eq:b7}} \\
&= ((\neg a \sleftand P^\tr) \sleftor Q^\fa) \sleftand R^\fa.
&&\textrm{by Lemma~\ref{lem:ptpfs}}
\end{align*}

For the induction step we assume that the result holds for all $*$-terms with
fewer $\ell$-terms than $P^* \sleftand Q^d$ and $P^* \sleftor Q^c$.  In the
case of conjunctions the results follow from \eqref{eq:nfsc14},
the induction hypothesis, and
\eqref{SCL7}. In the case of disjunctions note that by Lemma~\ref{lem:nfsn}
and the proof of Lemma~\ref{lem:nfsc3}, we have that $\nfs^n(\nfs^c_1(P^*,
\nfs^n(R^\fa)))$ is a $*$-term with same number of $\ell$-terms as $P^*$.
The grammatical result follows from this fact, \eqref{eq:nfsc15}, and the 
induction hypothesis.
Furthermore, noting that by the same argument $\nfs^n(\nfs^c_1(P^*,
\nfs^n(R^\fa))) = \neg(P^* \sleftand \neg R^\fa)$, we derive:
\begin{align*}
\nfs^c_2(P^* \sleftor &Q^c, R^\fa)\\
&= \nfs^c_2(\nfs^n(\nfs^c_1(P^*, \nfs^n(R^\fa))), \nfs^c_2(Q^c, R^\fa))
&&\textrm{by \eqref{eq:nfsc15}} \\
&= \nfs^n(\nfs^c_1(P^*, \nfs^n(R^\fa))) \sleftand (Q^c \sleftand R^\fa)
&&\textrm{by induction hypothesis} \\
&= \neg(P^* \sleftand \neg R^\fa) \sleftand (Q^c \sleftand R^\fa)
&&\textrm{as shown above} \\
&= (\neg P^* \sleftor R^\fa) \sleftand (Q^c \sleftand R^\fa)
&&\textrm{by \eqref{SCL3} and \eqref{SCL2}} \\
&= (\neg P^* \sleftor (R^\fa \sleftand \fa)) \sleftand (Q^c \sleftand
  (R^\fa \sleftand \fa))
&&\textrm{by Lemma~\ref{lem:ptpfs}} \\
&= (P^* \sleftor Q^c) \sleftand (R^\fa \sleftand \fa)
&&\textrm{by Lemma~\ref{lem:seqs}} \\
&= (P^* \sleftor Q^c) \sleftand R^\fa.
&&\textrm{by Lemma~\ref{lem:ptpfs}}
\end{align*}
This completes the proof.
\end{proof}

\begin{lemma}
\label{lem:nfsc5}
For any $P, Q \in \SNF$, $\nfs^c(P, Q)$ is in $\SNF$ and
\begin{equation*}
\EqFSCL \vdash \nfs^c(P, Q) = P \sleftand Q.
\end{equation*}
\end{lemma}
\begin{proof}
By the four preceding lemmas it suffices to show that 
\[\nfs^c(P^\tr \sleftand Q^*, R^\tr \sleftand S^*)\]
is in $\SNF$ and that $\EqFSCL \vdash
\nfs^c(P^\tr \sleftand Q^*, R^\tr \sleftand S^*) = (P^\tr \sleftand Q^*)
\sleftand (R^\tr \sleftand S^*)$. By \eqref{SCL7} and \eqref{eq:nfsc16}, in turn, it suffices to
prove that $\nfs^c_3(P^*, Q^\tr \sleftand R^*)$ is a $*$-term and that
$\EqFSCL \vdash \nfs^c_3(P^*, Q^\tr \sleftand R^*) = P^* \sleftand (Q^\tr
\sleftand R^*)$. We prove this by induction on the number of $\ell$-terms in
$R^*$. In the base case we have that $\nfs^c_3(P^*, Q^\tr \sleftand R^\ell)
= \nfs^c_1(P^*, Q^\tr) \sleftand R^\ell$ by \eqref{eq:nfsc17} and
the lemma's statement follows from Lemma~\ref{lem:nfsc3} and \eqref{SCL7}.

For conjunctions the lemma's statement follows from the induction hypothesis,
\eqref{SCL7} and \eqref{eq:nfsc18},
and for disjunctions it follows from Lemma~\ref{lem:nfsc3},
 \eqref{SCL7} and \eqref{eq:nfsc19}.
\end{proof}

We can now easily prove Theorem~\ref{thm:nfs}:
\textit{For any $P \in \ST$, $\nfs(P)$ terminates, $\nfs(P) \in \SNF$ and $\EqFSCL
\vdash \nfs(P) = P$.
}
\begin{proof}[Proof of Theorem~\ref{thm:nfs}]
By induction on the structure of $P$. If $P$ is an atom, the result follows from
\eqref{eq:nfs1} and axioms~\eqref{SCL4}, \eqref{SCL5} and its dual. 
If $P$ is $\tr$ or $\fa$
the result follows from by \eqref{eq:nfs2} or \eqref{eq:nfs3}. 
For the induction we get the result from definitions \eqref{eq:nfs4}-\eqref{eq:nfs6}, Lemma
\ref{lem:nfsn}, Lemma~\ref{lem:nfsc5}, and axiom~\eqref{SCL2}.
\end{proof}

\section{Correctness of the inverse function $\invs$}
\label{app:sclinvs}
In this appendix we prove Theorem~\ref{thm:sclinv}:
\textit{
For all $P \in \SNF$, $\invs(\SE(P)) \equiv P$.
}
\\
Recall that we use the symbol $\equiv$ to denote syntactic equivalence.

\begin{proof}[Proof of Theorem~\ref{thm:sclinv}]
We first prove that for all $\tr$-terms $P$, $\invs^\tr(\SE(P)) \equiv P$,
by induction on $P$. In the base case $P \equiv \tr$ and we have
by~\eqref{eq:g1} that
$\invs^\tr(\SE(P)) \equiv \invs^\tr(\tr) \equiv \tr \equiv P$. For the
inductive case we have $P \equiv (a \sleftand Q^\tr) \sleftor R^\tr$ and
\begin{align*}
\invs^\tr(\SE(P)) &\equiv \invs^\tr(\SE(Q^\tr) \tlef a \trig
  \SE(R^\tr))
&&\textrm{by definition of $\SE$} \\
&\equiv (a \sleftand \invs^\tr(\SE(Q^\tr))) \sleftor
  \invs^\tr(\SE(R^\tr))
&&\textrm{by \eqref{eq:g1}} \\
&\equiv (a \sleftand Q^\tr) \sleftor R^\tr
&&\textrm{by induction hypothesis} \\
&\equiv P.
\end{align*}

Similarly we see that for all $\fa$-terms $P$, $\invs^\fa(\SE(P)) \equiv
P$, by induction on $P$. In the base case $P \equiv \fa$ and we have
by~\eqref{eq:g2} that
$\invs^\fa(\SE(P)) \equiv \invs^\fa(\fa) \equiv \fa \equiv P$. For
the inductive case we have $P \equiv (a \sleftor Q^\fa) \sleftand R^\fa$
and
\begin{align*}
\invs^\fa(\SE(P)) &\equiv \invs^\fa(\SE(R^\fa) \tlef a \trig
  \SE(Q^\fa))
&&\textrm{by definition of $\SE$} \\
&\equiv (a \sleftor \invs^\fa(\SE(Q^\fa))) \sleftand
  \invs^\fa(\SE(R^\fa))
&&\textrm{by~\eqref{eq:g2}} \\
&\equiv (a \sleftor Q^\fa) \sleftand R^\fa
&&\textrm{by induction hypothesis} \\
&\equiv P.
\end{align*}

Next we check that for all $\ell$-terms $P$, $\invs^\ell(\SE(P)) \equiv P$.  We
observe that either $P \equiv (a \sleftand Q^\tr) \sleftor R^\fa$ or $P
\equiv (\neg a \sleftand Q^\tr) \sleftor R^\fa$. In the first case we have
\begin{align*}
\invs^\ell(\SE(P)) &\equiv \invs^\ell(\SE(Q^\tr) \tlef a \trig \SE(R^\fa))
&&\textrm{by definition of $\SE$} \\
&\equiv (a \sleftand \invs^\tr(\SE(Q^\tr))) \sleftor
  \invs^\fa(\SE(R^\fa))
&&\textrm{by~\eqref{eq:g3}, first case} \\
&\equiv (a \sleftand Q^\tr) \sleftor R^\fa
&&\textrm{as shown above} \\
&\equiv P.
\end{align*}
In the second case we have that
\begin{align*}
\invs^\ell(\SE(P)) &= \invs^\ell(\SE(R^\fa) \tlef a \trig \SE(Q^\tr))
&&\textrm{by definition of $\SE$} \\
&\equiv (\neg a \sleftand \invs^\tr(\SE(Q^\tr))) \sleftor
  \invs^\fa(\SE(R^\fa))
&&\textrm{by~\eqref{eq:g3}, second case} \\
&\equiv (\neg a \sleftand Q^\tr) \sleftor R^\fa
&&\textrm{as shown above} \\
&\equiv P.
\end{align*}

We now prove that for all $*$-terms $P$, $\invs^*(\SE(P)) \equiv P$, by
induction on the number of $\ell$-terms in $P$. In the base case we are
dealing with $\ell$-terms. Because an $\ell$-term has neither a cd nor a dd we
have $\invs^*(\SE(P)) \equiv \invs^\ell(\SE(P)) \equiv P$, 
where the first
identity is by~\eqref{eq:g4} and the second identity was shown above. For the
induction we have either $P \equiv Q \sleftand R$ or $P \equiv Q \sleftor R$.
In the first case note that by Theorem~\ref{thm:scddd}, $\SE(P)$ has a unique
cd and no dd.  So we have 
\begin{align*}
\invs^*(\SE(P)) &\equiv \invs^*(\cd_1(\SE(P))\sub{\Box}{\tr}) \sleftand
  \invs_*(\cd_2(\SE(P)))
&&\textrm{by~\eqref{eq:g4}} \\
&\equiv \invs^*(\SE(Q)) \sleftand \invs^*(\SE(R))
&&\textrm{by Theorem~\ref{thm:scddd}} \\
&\equiv Q \sleftand R
&&\textrm{by induction hypothesis} \\
&\equiv P.
\end{align*}
In the second case, again by Theorem~\ref{thm:scddd}, $P$ has a unique dd and 
no cd. So we have that
\begin{align*}
\invs^*(\SE(P)) &\equiv \invs^*(\dd_1(\SE(P))\sub{\Box}{\fa}) \sleftor
  \invs_*(\dd_2(\SE(P)))
&&\textrm{by~\eqref{eq:g4}} \\
&\equiv \invs^*(\SE(Q)) \sleftor \invs^*(\SE(R))
&&\textrm{by Theorem~\ref{thm:scddd}} \\
&\equiv Q \sleftor R
&&\textrm{by induction hypothesis} \\
&\equiv P.
\end{align*}

Finally, we prove the theorem's statement by making a case distinction on the
grammatical category of $P$. If $P$ is a $\tr$-term, then $\SE(P)$ has only
$\tr$-leaves and hence $\invs(\SE(P)) \equiv \invs^\tr(\SE(P)) \equiv P$,
where the first identity is by definition~\eqref{eq:g5}
of $\invs$ and the second identity was shown
above. If $P$ is a $\fa$-term, then $\SE(P)$ has only $\fa$-leaves and
hence $\invs(\SE(P)) \equiv \invs^\fa(\SE(P)) \equiv P$, where the first
identity is by definition~\eqref{eq:g5} of $\invs$ and the second one
was shown above. 
If $P$ is a $\tr$-$*$-term, then it has both $\tr$ and $\fa$-leaves and hence,
letting $P \equiv Q \sleftand R$,
\begin{align*}
\invs(\SE(P)) &\equiv \invs^\tr(\tsd_1(\SE(P))\sub{\Box}{\tr}) \sleftand
  \invs^*(\tsd_2(\SE(P)))
&&\textrm{by \eqref{eq:g5}} \\
&\equiv \invs^\tr(\SE(Q)) \sleftand \invs^*(\SE(R))
&&\textrm{by Theorem~\ref{thm:stsd}} \\
&\equiv Q \sleftand R
&&\textrm{as shown above} \\
&\equiv P,
\end{align*}
which completes the proof.
\end{proof}

\end{document}